\documentclass[12pt,a4paper]{article}
\usepackage{enumerate}
\usepackage{amssymb,amsmath,amsthm,amsfonts,mathrsfs}

\usepackage{pgf,tikz}
\usetikzlibrary{intersections}
\usetikzlibrary{decorations.pathreplacing}
\usetikzlibrary{positioning}
\usepackage{graphicx}
\usepackage{float}

\usepackage[colorlinks=true, allcolors=black]{hyperref}
\usepackage[all]{hypcap}
\usepackage[capitalize,nameinlink]{cleveref}

\newtheorem{theo}{Theorem}
\newtheorem{prop}[theo]{Proposition}
\newtheorem{lem}[theo]{Lemma}
\newtheorem{cor}[theo]{Corollary}

\theoremstyle{definition}

\newtheorem{claim}{Claim}[theo]

\newtheorem{cas}{Case}[claim]

\renewcommand{\qedsymbol}{$\blacksquare$}

\AtBeginDocument{%
  \mathchardef\mathcomma\mathcode`\,
  \mathcode`\,="8000 
}
{\catcode`,=\active
  \gdef,{\mathcomma\discretionary{}{}{}}
}

\begin{document}
\title{Mixed Unit Interval Bigraphs : A Characterization}
\author{Ashok Kumar Das\thanks{Corresponding Author} , Rajkamal Sahu, Amina Khatun\\
Department of Pure Mathematics, University of Calcutta, Kolkata, India\\
Email Address - ashokdas.cu@gmail.com \& rajkamalmath@gmail.com\\ \& aminakhatun6290@gmail.com}
\maketitle
\begin{abstract}
The class of intersection bigraphs of unit intervals of the real line 
whose ends may be open or closed is called a class of mixed unit interval bigraphs. 
This class of bigraphs is a strict superclass of the class of unit 
interval bigraphs. In a previous paper~\cite{dsml} we have provided four 
infinite families of forbidden induced subgraphs including two 
separate forbidden induced subgraphs of mixed unit interval bigraphs. In that paper, we also 
posed a conjecture concerning characterization of mixed unit interval bigraphs and verified parts of it. In the present paper we shall give a 
complete characterization of mixed unit interval bigraphs.
\end{abstract}
Keywords: interval bigraphs, unit interval bigraphs, mixed proper interval bigraphs, mixed unit interval bigraphs.
\section{Introduction}
A graph $G=(V,E)$ is an \emph{interval graph} if corresponding to each
vertex $v$ of $G$ we can assign an interval $I_v$ of the real line such 
that two vertices are adjacent if and only if their corresponding 
intervals intersect. The family of intervals $I=\{I_v: v\in V\}$ is 
said to be the interval representation of $G$. If every interval of $I$
is of unit length then $G$ is a \emph{unit interval graph}. Also, 
if no interval in $I$ is properly contained in another then $G$ is 
\emph{proper interval graph}. These classes of graphs have been well understood structurally 
\cite{bls,f,g,sz} as well as algorithmically \cite{c,cknos,cos,hhc,hss,ks}. Also have many applications
in real world problem \cite{ggks,k,py}.\par

However, most of the researchers do not specify which type of interval is used, that is,
whether the intervals are open, closed or semi-closed. This is acceptable
because this class of graphs does not actually depend on this \cite{dlprs}. This is no longer true for the class of unit interval graphs. Rautenbach and Szwarcfiter \cite{rs} showed that the class of intersection graphs of unit open and closed intervals is a strict superclass of the class of unit interval graphs. They also characterized this class of graphs, by a finite list of forbidden induced subgraphs. The work of Rautenbach 
and Szwarcfiter~\cite{rs} was generalized by Dourado et al.~\cite{dlprs}  
allowing all four distinct types of unit intervals namely, open, closed, 
closed-open and open-closed in the representation. Which is known as 
mixed unit interval graphs. Felix Joos \cite{j} characterized this class 
and have shown that the number of forbidden 
induced subgraphs of mixed unit interval graphs are infinite.\par

A bipartite graph (in short, bigraph) $B = (X, Y, E)$ is an 
\emph{interval bigraph} if there exists a one-to-one correspondence 
between the vertex set $X\cup Y$ of $B$ and a collection of intervals 
$\{I(v) : v\in X\cup Y\}$ on the real line such that two vertices are 
adjacent if and only if their corresponding intervals intersect and they 
belong to different partite sets. The collection of intervals 
$\{I(v) : v\in X\cup Y\}$ is called an interval representation of $B$. 
We simply denote the interval representation of $B$ by $I$ (which is a 
function from the vertex set $X\cup Y$ to a collection of intervals).The existence of interval copies is allowed here.\par

An interval bigraph $B = (X, Y, E)$ is a \emph{unit interval bigraph} if all the intervals in the interval representation have unit length. An interval bigraph is a \emph{proper interval bigraph} if in the interval representation no interval is properly contained in another. Hell and Huang \cite{hhi} proved that an interval bigraph is a unit interval bigraph if and only if it does not contain the bipartite claw ($H_1$), the bipartite net ($H_2$) or the bipartite tent ($H_3$) as an induced subgraph (see Fig.~1).
In \cite{ds} we have observed that the bigraphs $H_1, H_2$, and $H_3$ have intersection representation with unit open and closed intervals (see Fig.~2, Fig.~3, Fig.~4 respectively). In the same paper we give have given a characterization of the class of finite intersection bigraphs of unit open and closed intervals in terms of forbidden induced bigraphs.\par

In a subsequent paper~\cite{dsml} we generalize the results of~\cite{ds} to the mixed unit interval bigraphs where we allow all four types of unit intervals namely closed, open, left closed-right open and right closed-left open unit interval in the interval representation. In that paper we provided four infinite families of forbidden induced subgraphs including two separate forbidden induced subgraphs of mixed unit interval bigraphs. In that paper~\cite{dsml} we also posed a conjecture concerning characterization of mixed unit interval bigraphs and verified parts of it. Motivated by the work of Joos \cite{j} and Das and Sahu \cite{ds}, in the present paper we extend the results of~\cite{dsml} and give a complete characterization of mixed unit interval bigraphs.\par

In Section 2 we collect all basic definitions, terminology, and results related to our work. In Section~3 we give the list of forbidden induced subgraphs of mixed unit interval bigraphs. In Section~4 we shall provide structural characterization of mixed unit interval bigraphs.

\section{Preliminaries}
We consider only simple, finite and connected bigraphs. For a bigraph $B = (X, Y,E)$ the
neighborhood of a vertex $u\in X\cup Y$ is denoted by $N_B(u)$. Two distinct vertices $u$ and
$v$ of $B$ are \emph{copies} if $N_B(u) = N_B(v)$. Sometime $u$ and $v$ are also called \textit{false twins}. If no two vertices of $B$ are copies then $B$ is copy-free. If $\mathcal{F}$ is a set of graphs and any graph $G$ does not contain a graph in $\mathcal{F}$ as an induced subgraph then $G$ is $\mathcal{F}$-free.\par

Let $\mathcal{M}$ be a family of sets. An $\mathcal{M}$-intersection representation of a bigraph is a function $f:X\cup Y \to \mathcal{M}$ such that for any two distinct vertices $u$ and $v$ of a bigraph $B$, we have $uv\in E$ if and only if $f(u)\cap f(v)\neq \emptyset$ and $u$ and $v$ are vertices of different partite sets. A bigraph is an $\mathcal{M}$-bigraph if it has an $\mathcal{M}$-intersection representation.\par
For two real numbers $a$ and $b$, we denote the open interval $\{x\in \mathbb{R}|a<x<b \}$ by $(a,b)$, the closed interval $\{x\in \mathbb{R}|a\leq x\leq b \}$ by $[a,b]$, the open-closed interval $\{x\in \mathbb{R}|a<x\leq b \}$ by $(a,b]$ and the closed-open interval $\{x\in \mathbb{R}|a\leq x< b \}$ by $[a,b)$. For an interval $I$, let $l(I)=\inf(I)$ and $r(I)=\sup(I)$. We assume $\mathcal{I}^{++}$ is the set of closed intervals, $\mathcal{I}^{--}$ is the set of open intervals, $\mathcal{I}^{+-}$ is the set of closed-open intervals and $\mathcal{I}^{-+}$ is the set of open-closed intervals. Also assume $\mathcal{U}^{++}$ is the set of unit closed intervals, $\mathcal{U}^{--}$ is the set of  unit open intervals, $\mathcal{U}^{+-}$ is the set of unit closed-open intervals and $\mathcal{U}^{-+}$ is the set of unit open-closed intervals. In addition, let $\mathcal{I}^\pm=\mathcal{I}^{++}\cup \mathcal{I}^{--}$, $\mathcal{U}^\pm=\mathcal{U}^{++}\cup \mathcal{U}^{--}$, $\mathcal{I}=\mathcal{I}^{++}\cup \mathcal{I}^{--} \cup \mathcal{I}^{+-} \cup \mathcal{I}^{-+}$, and $\mathcal{U}=\mathcal{U}^{++}\cup \mathcal{U}^{--}\cup \mathcal{U}^{+-}\cup \mathcal{U}^{-+}$.\par
Our first result shows that as in the case of interval graphs, the class of interval bigraphs does not depend on the type of the interval used in the intersection representation.\par

Actually, in \cite{dsml} we have shown that the following class of  bigraphs are equivalent.

\begin{prop}
The classes of $\mathcal{I}^{++}$-bigraphs, $\mathcal{I}^{--}$-bigraphs, $\mathcal{I}^{\pm}$-bigraphs, $\mathcal{I}^{+-}$-bigraphs, $\mathcal{I}^{-+}$-bigraphs and $\mathcal{I}$-bigraphs are the same.
\end{prop}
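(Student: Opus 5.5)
Since $\mathcal{I}^{++},\mathcal{I}^{--},\mathcal{I}^{+-},\mathcal{I}^{-+},\mathcal{I}^{\pm}$ are all subfamilies of $\mathcal{I}$, every bigraph in any of these classes is an $\mathcal{I}$-bigraph. So it suffices to show that every $\mathcal{I}$-bigraph $B=(X,Y,E)$ is an $\mathcal{I}^{\alpha}$-bigraph for each single type $\alpha\in\{++,--,+-,-+\}$. The class of $\mathcal{I}^{\pm}$-bigraphs lies between the $\mathcal{I}^{++}$-bigraphs and the $\mathcal{I}$-bigraphs, so it is then forced to equal both.

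Fix an $\mathcal{I}$-representation $f$ of $B$; it has finitely many intervals, since $B$ is finite. Let $P$ be the finite set of all endpoints $l(f(v)),r(f(v))$. Two intervals $f(u),f(v)$ of any types meet exactly when the order relations among their endpoints, together with the openness or closedness of the ends that coincide, allow it. I will replace $f$ by a representation in which \emph{no two endpoints coincide} (apart from degenerate intervals, treated below). Once all endpoints are distinct, whether two intervals meet depends only on the strict order of the endpoints. Then every interval can be replaced by its open, closed or half-open version with the same ends, without changing any intersection.

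\textbf{Perturbation (the main step).} Let $\varepsilon>0$ be smaller than half the minimum gap between distinct points of $P$. For each interval $f(v)$, move each endpoint by a small amount inside the $\varepsilon$-window of its point, as follows. At a common point $p$, a closed left end of $f(u)$ and a closed right end of $f(v)$ should intersect. So push the right ends that are closed at $p$ slightly to the right of the left ends that are closed at $p$, and push the open ends the other way, so that touching-but-not-intersecting pairs become disjoint. Concretely, I order the endpoints located at $p$ by a rule: closed right ends go to $p+\delta_i$, open left ends go to $p+\delta_j$ beyond them, closed left ends go to $p-\delta$, and open right ends go to $p-\delta$ beyond those. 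The small $\delta$'s are chosen distinct. I will then check pairwise that $f(u)\cap f(v)\neq\emptyset$ before the move if and only if the perturbed intervals intersect.

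\textbf{Expected obstacle.} The check has to cover the cases where both ends of the two intervals coincide, and also degenerate intervals: a closed point $[p,p]$, and empty or half-open intervals of zero length, which correspond to isolated vertices. Isolated vertices can be sent far away; since $B$ is connected, in fact none occur. A point interval $[p,p]$ is expanded to $[p-\delta,p+\delta]$ with $\delta$ tiny, and it must be placed consistently with the ordering rule above. Ends that lie at different points of $P$ are unaffected because of the choice of $\varepsilon$.

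After the perturbation, all endpoints are distinct, and I convert every interval to type $\alpha$, which gives the required $\mathcal{I}^{\alpha}$-representation. Note also that the bipartite condition plays no role here: edges inside a part are ignored anyway, so the argument is the same as for interval graphs.
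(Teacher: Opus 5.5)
Your proof is correct. The paper gives nothing to compare it with: Proposition~1 is stated without proof and simply attributed to \cite{dsml}. Your endpoint perturbation is the standard argument (cf.\ \cite{dlprs} for graphs) and works as a self-contained proof.

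The pairwise check you postpone can be done in one stroke. Two nonempty intervals are disjoint if and only if one lies strictly to the left of the other. That is decided by comparing the right end of one with the left end of the other; at a tie, they are disjoint if and only if at least one of the two ends is open.

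Ends at different points of $P$ keep their order because of your choice of $\varepsilon$. At a common point $p$, your ordering is: open right ends $<$ closed left ends $< p <$ closed right ends $<$ open left ends. This turns each tie into a strict inequality in the correct direction, so every intersection and non-intersection is preserved.

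The degenerate interval $[p,p]$ is covered by treating it as one closed left end and one closed right end at $p$, placed among the other closed ends there.
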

The following proposition extends the result of Proposition~\ref{p2} of \cite{dlprs} which showed that a bigraph is a $\mathcal{U}^{++}$-bigraph if and only if it is a $\mathcal{U}^{--}$-bigraph.
\begin{prop}[\cite{dsml}]\label{p2}
The classes of $ \mathcal{U}^{++}$-bigraphs, $\mathcal{U}^{--}$-bigraphs, $\mathcal{U}^{+-}$-bigraphs, $\mathcal{U}^{-+}$-bigraphs and $\mathcal{U}^{+-}\cup \mathcal{U}^{-+}$-bigraphs are the same.
\end{prop}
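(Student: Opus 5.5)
The plan is to show that every class listed in Proposition~\ref{p2} coincides with the class of $\mathcal{U}^{++}$-bigraphs. The key tool is a perturbation argument. A finite unit interval representation determines adjacency only through comparisons between endpoints, namely whether $l(I_u)$ is $<$, $=$ or $>$ than $r(I_v)$ for $u\in X$, $v\in Y$ and conversely. So I would first normalise any given representation so that no endpoint equalities that matter remain, and then switch the interval types freely.

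\textbf{Step 1.} Start from a $\mathcal{U}^{++}$-representation $I$ of a finite bigraph $B$. Let
\[
\delta=\min\{\,|a-b|: a,b \text{ endpoints of intervals of } I,\ a\neq b\,\}>0 .
\]
Because the intervals are closed, $I(u)\cap I(v)\neq\emptyset$ if and only if $l(I(u))\le r(I(v))$ and $l(I(v))\le r(I(u))$. Only the equality cases, where two intervals touch at a single point, are fragile.

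Replace each $I(v)$ by $[l(I(v))-\epsilon,\ r(I(v))-\epsilon]$ with $\epsilon=\epsilon(v)$. I would aim for a uniform choice: $\epsilon=0$ for $v\in X$, and a tiny stretch via scaling for $Y$. Instead of stretching, the cleaner route is to scale the whole line by $1/(1-\delta/2)$ about each interval's centre. That is, keep the unit length, but multiply all left endpoints by $\lambda=1+\delta/(4M)$, where $M$ bounds $|l(I(v))|$.

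This makes every touching pair overlap in an interval of positive length. Meanwhile disjoint pairs stay disjoint, since gaps were at least $\delta$ and each endpoint moves by less than $\delta/4$. Nontouching intersecting pairs still intersect, by the same margin argument.

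Now all adjacencies are witnessed by overlaps of positive length and all non-adjacencies by gaps of positive length. So the intersection pattern is unchanged if we replace each closed interval by its open, closed-open or open-closed version of the same endpoints. This gives $\mathcal{U}^{++}\subseteq$ every other listed class, including $\mathcal{U}^{+-}\cup\mathcal{U}^{-+}$.

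\textbf{Step 2 (converse).} Start from a representation in $\mathcal{U}^{--}$, $\mathcal{U}^{+-}$, $\mathcal{U}^{-+}$ or $\mathcal{U}^{+-}\cup\mathcal{U}^{-+}$. Run the symmetric perturbation: shrink apart all pairs whose closures touch at a point but which are not adjacent. For two half-open or open intervals, "touching" means $r(I(u))=l(I(v))$ with at least one endpoint open. The fix is to shift left endpoints by the factor $\lambda<1$ chosen analogously, so that such pairs acquire a positive gap. Pairs that meet at a common closed endpoint are adjacent, and they must instead be pushed to overlap.

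So the perturbation must depend on the pair. I would handle this by sorting the endpoint values $p_1<\dots<p_k$ and replacing each occurrence of $p_i$ by $p_i\pm\delta/3$ according to type:
\begin{itemize}
\item a closed right end at $p_i$ moves to $p_i+\delta/3$, and a closed left end at $p_i$ moves to $p_i-\delta/3$;
\item open ends move inward correspondingly, so a left open end goes to $p_i+\delta/3$ and a right open end goes to $p_i-\delta/3$.
\end{itemize}
This preserves each adjacency exactly, but breaks unit length. So the main obstacle is restoring unit length. A unit interval has the form $[a,a+1]$, and both of its ends must move by the same amount in the same direction.

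To overcome this, I would use a linear-programming/feasibility view. Adjacency conditions are strict or non-strict linear inequalities in the left endpoints $a_v$ of the form $a_u-a_v<1$ or $\le 1$. A representation in any of the classes gives a feasible system. After replacing every non-strict inequality by its strict analogue or vice versa (according to the target type), feasibility is preserved: the finitely many inequalities define a polyhedron with nonempty relative interior, and a small perturbation stays inside. This is the Roberts-type argument used for $\mathcal{U}^{++}=\mathcal{U}^{--}$ in \cite{dlprs}, and I would mirror it.
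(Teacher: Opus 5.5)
\textbf{Main gap.} The passage from a $\mathcal{U}^{+-}\cup\mathcal{U}^{-+}$-representation to a $\mathcal{U}^{++}$-representation, which is where the statement has real content, does not go through.

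You claim that turning non-strict inequalities into strict ones preserves feasibility because the polyhedron has nonempty relative interior. This is false in general. A relative-interior point satisfies strictly only those constraints that are not implicit equalities. You give no reason why the constraints you must make strict are not implicit equalities; these are the non-adjacent pairs at distance exactly $1$, each of which must become $b_u-b_v>1$ with a side fixed, since $|b_u-b_v|>1$ is a disjunction.

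Your argument also never uses that the intervals are half-open. It would therefore equally show that every $\mathcal{U}^{\pm}$-bigraph is a $\mathcal{U}^{++}$-bigraph, which is false: $H_1$ has the representation of Fig.~2 but is not a unit interval bigraph. Rescaled, that representation is $I(y_1)=[-1,0]$, $I(x_4)=[0,1]$, $I(y_3)=[1,2]$, $I(x_2)=(0,1)$. Its endpoints satisfy the non-strict system. The closed target, however, needs $b_{x_4}-b_{y_1}\le 1$, $b_{y_3}-b_{x_4}\le 1$, $b_{x_2}-b_{y_1}>1$ and $b_{y_3}-b_{x_2}>1$, which is infeasible.

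\textbf{How to close it.} The missing idea must use half-openness, and it is short. Write $a_v$ for left endpoints. Two half-open unit intervals of the same type meet iff $|a_u-a_v|<1$, whereas $[a_u,a_u+1)$ and $(a_v,a_v+1]$ meet iff $-1<a_u-a_v\le 1$. Since $a_u-a_v\mp 1$ is a difference of endpoints, translate every open-closed interval to the right by some $0<\epsilon<\delta$ and then declare all intervals open. Here $\delta$ is the minimum distance between distinct endpoints of the given representation. The result is a $\mathcal{U}^{--}$-representation, because $-1<t\le 1$ iff $|t-\epsilon|<1$ for these values $t$.

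Your own $\pm\delta/3$ rule already does exactly this. On a half-open interval both ends move in the same direction: closed-open intervals shift left and open-closed ones shift right, a relative shift of $2\delta/3<\delta$. So the rule does not break unit length here. The obstacle you describe only arises for closed or open intervals, where uniform scaling suffices anyway.

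\textbf{Sign errors.} Separately, both of your uniform scalings point the wrong way.

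In Step~1, multiplying left endpoints by $\lambda=1+\delta/(4M)>1$ sends a touching closed pair with $|a_u-a_v|=1$ to distance $\lambda>1$. That destroys the adjacency instead of turning it into an overlap. You need a slight contraction, for example any $\lambda\in\bigl(1/(1+\delta),1\bigr)$. This keeps disjoint pairs disjoint, since they satisfy $|a_u-a_v|\ge 1+\delta$. The bound $M$ plays no role, because only the differences $\lambda(a_u-a_v)$ matter.

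Conversely, for $\mathcal{U}^{--}$, $\mathcal{U}^{+-}$ and $\mathcal{U}^{-+}$, intervals meet iff $|a_u-a_v|<1$ in all three cases. There you need a slight expansion $\lambda>1$, not $\lambda<1$, to open a positive gap between touching non-adjacent pairs before closing all intervals. With these two corrections and the translation above, the argument is complete.
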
  
\begin{figure}[H]
\tikzstyle{every node}=[circle, draw, fill=black,
                        inner sep=0pt, minimum width=2.5pt]
\centering
\begin{tikzpicture}[scale=.40]
\draw (0:0) node {} --++ (45:1) node {} --++ (45:1) node (a) {} --++ 
(90:1) node {} -- ++ (90:1) node {};
\draw (a) --++ (-45:1) node{}--++ (-45:1) node (b){};

\coordinate (c) at ([xshift=3cm]b);
\draw (c) node{} --++ (50:1.2)node(d){}--++ (50:1.2) node(e){}--++(90:1.4) 
node{};
\draw (e)--++(-50:1.2)node(f){}--++(-50:1.2)node(g){};
\draw (d) --(e|- d) node{}--(f) ;

\coordinate (h) at ([xshift=3cm]g);
\draw (h)node{} --++ (90:1.6)node{}--++(0:1.4)node{}--++(0:1.4)node{}--++(-90:1.6)node{}--++(180:1.4)node(i){}--cycle;
\draw (i)--++(90:3.2)node{};
\end{tikzpicture}
\caption{The bipartite claw $(H_1)$, net $(H_2)$ and tent $(H_3)$}
\label{f1}
\end{figure}

As mentioned in the introduction, the bigraphs $H_1, H_2$ and $H_3$ have $\mathcal{U}^\pm$-intersection
representation, (see Fig.~2, Fig.~3 and Fig.~4 of \cite{ds,dsml} ).
For the sake of convenience we give below their representations.\par

\begin{figure}[H]
\centering
\begin{tikzpicture}[scale=.40]
{\tikzstyle{every node}=[circle, draw, fill=black,
                        inner sep=0pt, minimum width=2.5pt]
                        
\draw (0:0) node[label={[label distance=1pt]180:${\scriptstyle x_2}$}] {} --++ (45:1) node[label={[label distance=1pt]180:${\scriptstyle y_2}$}] {} --++ (45:1) node[label={[label distance=1pt]0:${\scriptstyle x_4}$}] (a) {} --++ 
(90:1) node[label={[label distance=1pt]0:${\scriptstyle y_1}$}] {} -- ++ (90:1) node[label={[label distance=1pt]0:${\scriptstyle x_1}$}] {};
\draw (a) --++ (-45:1) node[label={[label distance=1pt]0:${\scriptstyle y_3}$}]{}--++ (-45:1) node[label={[label distance=1pt]0:${\scriptstyle x_3}$}] (b){};
}
\pgfmathsetmacro{\b}{1.5}
\pgfmathsetmacro{\c}{0.6}
\pgfmathsetmacro{\d}{0.13}
\draw \foreach \p/\q/\r in {6*\b/0*\c/y_2,5*\b/1*\c/y_1,7*\b/1*\c/y_3,4*\b/2*\c/x_1,6*\b/2*\c/x_4,8*\b/2*\c/x_3}
{
(\p,\q)--(\p+\b,\q)
(\p+\d,\q-\d)--(\p,\q-\d)--(\p,\q+\d)--(\p+\d,\q+\d)
(\p+\b-\d,\q-\d)--(\p+\b,\q-\d)--(\p+\b,\q+\d)--(\p+\b-\d,\q+\d)
(\p+0.5*\b,\q+0.25) node{${\scriptstyle \r}$}
};
\draw \foreach \p/\q/\r in {6*\b/3*\c/x_2}
{
(\p,\q)--(\p+\b,\q)
(\p-\d,\q-\d)--(\p,\q-\d)--(\p,\q+\d)--(\p-\d,\q+\d)
(\p+\b+\d,\q-\d)--(\p+\b,\q-\d)--(\p+\b,\q+\d)--(\p+\b+\d,\q+\d)
(\p+0.5*\b,\q+0.25) node{${\scriptstyle \r}$}
};
\end{tikzpicture}
\caption{The bipartite claw $H_1$ with its $\mathcal{U}^\pm$-intersection representation.}\label{f2}
\end{figure}

\begin{figure}[H]
\centering
\begin{tikzpicture}[scale=.40]
{\tikzstyle{every node}=[circle, draw, fill=black,
                        inner sep=0pt, minimum width=2.5pt]
\draw (0:0) node[label={[label distance=1pt]180:${\scriptstyle y_2}$}]{} --++ (50:1.2)node[label={[label distance=1pt]180:${\scriptstyle x_2}$}](d){}--++ (50:1.2) node[label={[label distance=1pt]10:${\scriptstyle y_1}$}](e){}--++(90:1.4) 
node[label={[label distance=1pt]0:${\scriptstyle x_1}$}]{};
\draw (e)--++(-50:1.2)node[label={[label distance=1pt]0:${\scriptstyle x_3}$}](f){}--++(-50:1.2)node[label={[label distance=1pt]0:${\scriptstyle y_3}$}](g){};
\draw (d) --(e|- d) node[label={[label distance=1pt]-90:${\scriptstyle y_4}$}]{}--(f) ;
}
\pgfmathsetmacro{\b}{1.5}
\pgfmathsetmacro{\c}{0.6}
\pgfmathsetmacro{\d}{0.13}
\draw \foreach \p/\q/\r in {4*\b/1*\c/y_2,6*\b/1*\c/y_1,5*\b/2*\c/y_4,7*\b/2*\c/y_3,5*\b/3*\c/x_2,6*\b/4*\c/x_3, 11*\b/4*\c/x_2,13*\b/4*\c/x_1,11.5*\b/3*\c/x_3,10*\b/2*\c/y_2,12*\b/2*\c/y_1,10.5*\b/1*\c/y_4}
{
(\p,\q)--(\p+\b,\q)
(\p+\d,\q-\d)--(\p,\q-\d)--(\p,\q+\d)--(\p+\d,\q+\d)
(\p+\b-\d,\q-\d)--(\p+\b,\q-\d)--(\p+\b,\q+\d)--(\p+\b-\d,\q+\d)
(\p+0.5*\b,\q+0.25) node{${\scriptstyle \r}$}
};

\draw \foreach \p/\q/\r in {6*\b/5*\c/x_1, 12*\b/1*\c/y_3}
{
(\p,\q)--(\p+\b,\q)
(\p-\d,\q-\d)--(\p,\q-\d)--(\p,\q+\d)--(\p-\d,\q+\d)
(\p+\b+\d,\q-\d)--(\p+\b,\q-\d)--(\p+\b,\q+\d)--(\p+\b+\d,\q+\d)
(\p+0.5*\b,\q+0.25) node{${\scriptstyle \r}$}
};
\draw (6.5*\b,-1*\c) node {(i)};
\draw (11.5*\b,-1*\c) node {(ii)};
\end{tikzpicture}
\caption{The bipartite net $H_2$ with its two $\mathcal{U}^\pm$-intersection representation.}\label{f3}
\end{figure}

\begin{figure}[H]
\centering
\begin{tikzpicture}[scale=.40]
{\tikzstyle{every node}=[circle, draw, fill=black,
                        inner sep=0pt, minimum width=2.5pt]
\draw (0:0)node[label={[label distance=1pt]180:${\scriptstyle y_2}$}]{} --++ (90:1.6)node[label={[label distance=1pt]180:${\scriptstyle x_2}$}]{}--++(0:1.4)node[label={[label distance=1pt]45:${\scriptstyle y_1}$}]{}--++(0:1.4)node[label={[label distance=1pt]0:${\scriptstyle x_3}$}]{}--++(-90:1.6)node[label={[label distance=1pt]0:${\scriptstyle y_3}$}]{}--++(180:1.4)node[label={[label distance=1pt]-90:${\scriptstyle x_4}$}](i){}--cycle;
\draw (i)--++(90:3.2)node[label={[label distance=1pt]0:${\scriptstyle x_1}$}]{};
}
\pgfmathsetmacro{\b}{1.5}
\pgfmathsetmacro{\c}{0.6}
\pgfmathsetmacro{\d}{0.13}
\draw \foreach \p/\q/\r in {5*\b/1*\c/y_1,4*\b/2*\c/y_2,6*\b/2*\c/y_3,4*\b/3*\c/x_2,6*\b/3*\c/x_3,5*\b/4*\c/x_4, 10*\b/5*\c/x_4,9*\b/4*\c/x_2,10.5*\b/4*\c/x_3,11*\b/3*\c/x_1, 9*\b/3*\c/y_2,10*\b/2*\c/y_1}
{
(\p,\q)--(\p+\b,\q)
(\p+\d,\q-\d)--(\p,\q-\d)--(\p,\q+\d)--(\p+\d,\q+\d)
(\p+\b-\d,\q-\d)--(\p+\b,\q-\d)--(\p+\b,\q+\d)--(\p+\b-\d,\q+\d)
(\p+0.5*\b,\q+0.25) node{${\scriptstyle \r}$}
};
\draw \foreach \p/\q/\r in {5*\b/5*\c/x_1, 10*\b/1*\c/y_3}
{
(\p,\q)--(\p+\b,\q)
(\p-\d,\q-\d)--(\p,\q-\d)--(\p,\q+\d)--(\p-\d,\q+\d)
(\p+\b+\d,\q-\d)--(\p+\b,\q-\d)--(\p+\b,\q+\d)--(\p+\b+\d,\q+\d)
(\p+0.5*\b,\q+0.25) node{${\scriptstyle \r}$}
};
\draw (5.5*\b,-1*\c) node {(i)};
\draw (10.5*\b,-1*\c) node {(ii)};
\end{tikzpicture}
\caption{The bipartite tent $H_3$ with its two $\mathcal{U}^\pm$-intersection representation.}
\label{f4}
\end{figure}

Therefore, the class of $\mathcal{U}^\pm$-bigraphs is a strict superclass of the class of unit interval
bigraphs. We have characterized this class of bigraphs in \cite{ds} in the following Theorem~3.
For an $\mathcal{I}^{++}$-bigraph if two vertices $u$ and $v$ are copies then they belong to the same
partite set. Therefore, in the $\mathcal{I}^{++}$-interval representation we can take the same interval for these
two vertices. Thus we consider that our bigraphs are copy-free.

\begin{figure}[H]
\centering
\begin{tikzpicture}[scale=.40]
{\tikzstyle{every node}=[circle, draw, fill=black,
                        inner sep=0pt, minimum width=2.5pt]
\draw (0:0) node {} --++ (45:1) node {} --++ (45:1) node (a) {} --++ 
(90:1) node {} -- ++ (90:1.03) node {};
\draw (a) --++ (-45:1) node{}--++ (-45:1) node (b){};
\draw (a) --++ (-90:1) node {};

\coordinate (c) at ([xshift=2cm]b);
\draw (c) node{} --++(90:1.15)node{}--++(90:1.15)node{}--++(180:1)node{}--++(-90:1.15)
node{}--++(0:2)node{}--++(90:1.15)node{}--++(180:1)--++(90:1.15)node{};

\coordinate (d) at ([xshift=3.5cm]c);
\draw (d)node{}--++(90:.7)node(e){}--++(60:1.5)node(f){}--++(-45:.7)node{};
\path [name path=d1] (d) --++ (90:3.2);
\path [name path=f1] (f) --++ (135:1.5);
\path [name path=f2] (f) --++ (210:2);
\path [name intersections={of=d1 and f1, by=g}];
\path [name intersections={of=d1 and f2, by=i}];
\draw (f)--(g)node{} (e)--++(120:1.5)node(h){}--++(-135:.7)node{} (g)--(h) 
(g)--++(90:.7)node{} (h)--(i)node{}--(f);

\coordinate (j) at ([xshift=3.5cm]d);
\draw (j)node{}--++(90:.7)node(k){}--++(55:1.2)node(l){}--++(-45:.7)node{};
\path [name path=j1] (j) --++ (90:3.2);
\path [name path=l1] (l) --++ (125:1.5);
\path [name path=l2] (l) --++ (180:2);
\path [name intersections={of=j1 and l1, by=m}];
\path [name intersections={of=j1 and l2, by=o}];
\path [name path=k1] (k) --++ (145:3.2);
\path [name path=m1] (m) --++ (35:-3.2);
\path [name intersections={of=k1 and m1, by=o2}];
\draw (l)--(m)node{} (k)--++(125:1.2)node(n){} (m)--(n) 
(m)--++(90:.7)node{} (n)--(o)node{}--(l) (k)--(o2)node{}--(m);

\coordinate (p) at ([xshift=3cm,yshift=3.5cm]j);
\draw (p)node{}--++(-90:1)node{}--++(0:1)node(q){}--++(0:1)node{}--++(-90:1)
node{}--++(180:1)node(r){}--++(180:1)node(s){}--++(90:1)node{} (q)--(r);
\path [name path=q1] (q) --++ (-110:1.7);
\path [name path=s1] (s) --++ (-45:.8);
\path [name intersections={of=q1 and s1, by=t}];
\draw (q)--(t)node{}--(s) (t)--++(-90:1)node{};

\coordinate (u) at ([xshift=3.5cm]p);
\draw (u)node{}--++(-90:1)node{}--++(0:1)node{}--++(0:1)node(v){}--++
(-90:1)node{}--++(180:1)node(w){}--++(180:1)node{}--++(90:1)node{} 
(w)--++(90:2)node{};
\path [name path=v1] (v) --++ (-60:2.2);
\path [name path=w1] (w) --++ (-20:2.2);
\path [name intersections={of=v1 and w1, by=x}];
\draw (v)--(x)node{}--(w) (x)--++(-40:1)node{};

\coordinate (aa) at ([xshift=6.5cm]u);

\draw (aa)node{}--++(-90:1)node(bb){}--++(-45:1)node(cc){}--++(45:1)node{}--++(45:1)node{}
(cc)--++(-45:1)node{}
(cc)--++(45:-1)node(dd){}--++(-45:-1)node(ee){}--++(45:-1)node{}
(ee)--++(-45:-1)node{}--++(-45:-1)node{} (ee)--(bb);
}
\coordinate (z) at (0,-0.6);
\draw (z-|a)node{$F_1$} (z-|c)node{$F_2$} (z-|d)node{$F_3$} (z-|j)node{$F_4$} (z-|q)node{$F_5$} (z-|w)node{$F_6$} (z-|aa)node{$F_7$};

{\tikzstyle{every node}=[circle, draw, fill=black,
                        inner sep=0pt, minimum width=2.5pt]

\coordinate (ff) at ([xshift=-.5cm,yshift=-7cm]a);
\draw (ff)node{}--++(0:1)node(gg){}--++(0:1)node{}--++(90:1)node{}
--++(0:-1)node(hh){}--++(90:1)node{}--++(0:-1)node{}--++
(90:-1)node(kk){}--(ff) 
(gg)--(hh)--(kk) 
(hh)--++(45:1.2)node{};

\coordinate (ll) at ([xshift=3.5cm]gg);
\draw (ll)node{}--++(90:1)node(mm){}--++(90:1)node(nn){}--++
(90:1)node{} (nn)--++(0:1)node(oo){}--++(90:-1)node{}--(mm)--++
(0:-1)node{}--++(90:1)node(pp){}--(nn)
(oo)--(ll)--(pp)--++(90:1)node{} (oo)--++(90:1)node{};

\coordinate (qq) at ([xshift=5.5cm]ll);
\draw (qq)node{}--++(90:1)node(rr){}--++(90:1)node(ss){}--++
(90:1)node{} (rr)--++(0:1)node{}--++(0:1)node(tt){}--++(0:1)node{}
(qq)--(tt)--(ss);
\path [name path=v1] (qq) --++ (-45:-2.2);
\path [name path=w1] (ss) --++ (45:-2.2);
\path [name intersections={of=v1 and w1, by=uu}];
\draw (qq)--(uu)node{}--(ss) (uu)--++(0:-1)node{}--++(0:-1)node{}
(uu)--++(-45:-1)node{};

\coordinate (vv) at ([xshift=4.5cm]qq);
\draw (vv)node{}--++(0:1)node(ww){}--++(0:1)node{}--++
(90:1)node(z2){}
--++(0:-1)node(xx){}--++(90:1)node{}--++(0:-1)node(zz){}--++
(90:-1)node(yy){}--(vv)
(yy)--(xx)--(ww) 
(xx)--++(45:1.2)node{} (z2)--(zz)--(ww);

\coordinate (a1) at ([xshift=4.5cm]vv);
\draw (a1)node{}--++(90:1)node(b1){}--++(0:1)node{}--++
(90:1)node(c1){}--++(0:-1)node(d1){}--++(90:1)node(e1){}--++
(0:-1)node{}--++(90:-1)node(f1){}--(d1)--(b1)--++(0:-1)node(g1){}
--(f1) (g1)--(e1) (c1)--(a1)--(f1) (b1)--++(-45:1.2)node{} (c1)--++(90:1)node{};

\coordinate (n2) at ([xshift=4cm,yshift=3.5cm]a1);
\draw (n2)node{}--++(-90:1)node(p){}--++(0:1)node(q){}--++(0:1)node{}--++(-90:1)
node{}--++(180:1)node(r){}--++(180:1)node(s){}--++(90:1)node{} (q)--(r) (s)--++(0:-1)node{};
\path [name path=q1] (q) --++ (-110:1.7);
\path [name path=s1] (s) --++ (-45:.8);
\path [name intersections={of=q1 and s1, by=t}];
\draw (q)--(t)node{}--(s) (t)--++(-90:1)node(u){};
\draw    (u)to[out=20,in=-70](r) (u)to[out=160,in=-120](p);
}
\coordinate (z3) at ([yshift=-0.8cm]gg);
\draw (z3-|gg)node{$F_8$} (z3-|ll)node{$F_9$} (z3-|qq)node{$F_{10}$} (z3-|ww)node{$F_{11}$} (z3-|a1)node{$F_{12}$} (z3-|t)node{$F_{13}$};
\end{tikzpicture}
\caption{Forbidden induced subgraphs of $\mathcal{U}^\pm$-bigraphs.}
\label{f5}
\end{figure}
Now we define a restricted type
of $\mathcal{I}^\pm$-bigraphs that are almost proper and will be useful in our 
characterization theorem. A bipartite graph $B$ is a $\mathcal{PI}^\pm$-bigraph (\textit{or almost proper interval bigraph}) if it 
has an $\mathcal{I}^\pm$-intersection representation $I:V(B)\to \mathcal{I}^\pm$ such that
\begin{enumerate}[(i)]
\item if $I(u)$ and $I(v)$ are two distinct closed intervals of $I$, then $I(u)
\not\subset I(v)$ and $I(v) \not\subset I(u)$, that is, no closed interval 
properly contains another closed interval, and
\item for every vertex $u$ of $B$ with $I(u)\in \mathcal{I}^-$, there is a vertex $v$ of 
$B$ with $I(v)\in \mathcal{I}^+$, such that $l(I(u))=l(I(v))$, and $r(I(u))=r(I(v))$,
that is, for every open interval, there is a closed interval with the same end
points.
\end{enumerate}
\begin{theo}[\cite{ds}]
For a bipartite graph $B$, the following statements are equivalent.
\begin{enumerate}[(i)]
\item $B$ is a $\{F_1,F_2,F_3,F_4,F_5,F_6,F_7,F_8,F_9,F_{10},F_{11},F_{12},F_{13}\}$-free interval bigraph.
\item $B$ is an almost proper interval bigraph.
\item $B$ is a $\mathcal{U}^{\pm}$-bigraph.
\end{enumerate}
\end{theo}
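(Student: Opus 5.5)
I would prove the three equivalences cyclically, (ii)$\Rightarrow$(iii)$\Rightarrow$(i)$\Rightarrow$(ii), following the scheme Rautenbach and Szwarcfiter used for $\mathcal{U}^\pm$-graphs, adapted to bigraphs. Throughout I assume $B$ is connected and copy-free, as justified above.

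\textbf{(ii)$\Rightarrow$(iii).} Start from an almost proper representation $I$. The closed intervals form a proper family, so their left endpoints and right endpoints are ordered the same way. I would apply the standard Roberts-type perturbation to the closed intervals: process them in this common order and move endpoints so that every closed interval has length one. The moves must keep the relative order of all endpoints that matter for adjacency, namely endpoints of intervals in opposite partite sets, including ties between a closed and an open endpoint. Each open interval is then carried along with the closed interval that has the same endpoints, so it also becomes a unit open interval. Coincident endpoints are exactly what encode the distinction between open and closed intersections, so equal endpoint values must stay equal and strict inequalities must stay strict. I would treat this as a routine but careful induction on the sorted endpoint sequence.

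\textbf{(iii)$\Rightarrow$(i).} A $\mathcal{U}^\pm$-bigraph is an $\mathcal{I}^\pm$-bigraph, hence an interval bigraph by the first Proposition. Since the class is hereditary, it then suffices to show that no $F_i$ has a $\mathcal{U}^\pm$-representation. For each $F_i$ I would argue by contradiction. Fix a unit representation, locate the central copy of $H_1$, $H_2$ or $H_3$ inside $F_i$, and use the analysis behind Figs.~2--4. That analysis shows the claw, net and tent are realizable only by forcing specific intervals to be open or closed with coincident endpoints. The extra vertices of $F_i$ then need an interval that meets one of these coincident-endpoint intervals and misses another, which is impossible for unit intervals. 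This is a finite case check.

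\textbf{(i)$\Rightarrow$(ii).} This is the main obstacle. Take an interval bigraph $B$ that is $\{F_1,\dots,F_{13}\}$-free. Among all its closed interval representations, choose one that minimizes the number of pairs of intervals where one properly contains the other. If such a containment $I(u)\subsetneq I(v)$ remains, I would use the Hell--Huang characterization: a containment that cannot be removed forces an induced $H_1$, $H_2$ or $H_3$ with $v$ in a central role. I would then show that forbidding $F_1,\dots,F_{13}$ forces the neighbourhood of this copy to be exactly that of the configuration in Figs.~2--4. In that configuration, $v$ can be made a closed interval with the same endpoints as a new open interval. Concretely, $u$ is replaced by an open interval equal to $I(v)$ minus its endpoints, or symmetrically, which preserves all adjacencies because of the local structure. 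The difficult part is showing that these local modifications can be made simultaneously for all copies of $H_1,H_2,H_3$ without conflict. I would first handle the case where such copies are pairwise far apart along the representation. Overlapping copies I would then rule out by exhibiting one of the $F_i$, which are presumably precisely the minimal overlapping configurations. This global consistency case analysis is where I expect most of the work to lie.
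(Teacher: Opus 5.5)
Your outlines of (ii)$\Rightarrow$(iii) and (iii)$\Rightarrow$(i) are reasonable. The paper does not prove them; it only cites \cite{ds}. Your plan for (i)$\Rightarrow$(ii), however, has a genuine gap at its central step.

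\textbf{The local step.} Hell--Huang is a global theorem and does not localize to a given containment in a representation with the minimum number of bad pairs. What minimality gives directly (Claim~1 in the paper's main proof) is the following. For a bad pair $(u,v)$ with $I(u)\subset I(v)$, there are vertices $z_1,z_2$ in the part opposite to $u$ with $l(v)<r(z_1)<l(u)$ and $r(u)<l(z_2)<r(v)$. These witnesses show that your final move does \emph{not} preserve adjacencies in the representation you have. If you replace $I(u)$ by $(l(v),r(v))$, then $z_1$ and $z_2$ meet the new interval but missed $I(u)$.

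The missing idea is to first make the bad pair \emph{clean}. You must shift endpoints in the closed representation so that every interval of the part opposite to $u$ that meets $I(v)$ but misses $I(u)$ touches $I(v)$ only at $l(v)$ or $r(v)$; in particular $r(z_1)=l(v)$ and $l(z_2)=r(v)$. You must then prove that these shifts change no adjacency and create no new containment. This is where the forbidden subgraphs actually do their work. Only after cleaning does the open copy of $I(v)$ preserve all adjacencies. The touching configurations of Figs.~2--4 are the outcome of this step; you cannot presuppose them.

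\textbf{The global step.} You have misidentified what must be excluded. Overlapping copies of $H_1,H_2,H_3$ cannot be ruled out, because they occur in genuine $\mathcal{U}^\pm$-bigraphs. Take in $X$ the closed intervals $[-1,0],[1,2],[3,4],[5,6]$ and the open intervals $(1,2),(3,4)$. Take in $Y$ the intervals $[0,1],[1,2],[2,3],[3,4],[4,5]$. This is a connected, copy-free $\mathcal{U}^\pm$-bigraph, hence $F_i$-free. It contains two induced bipartite claws, centred at the $X$-vertices $[1,2]$ and $[3,4]$, and each claw contains the other's centre as a leaf.

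What the $F_i$ must rule out is overlap of \emph{bad pairs}. The paper's main proof (modelled on Lemma~8 of \cite{ds}) first shows two things:
\begin{itemize}
\item no interval properly contains two distinct intervals;
\item no interval is properly contained in two distinct intervals.
\end{itemize}
It then analyses the chains of intervals with endpoints in $I(v)\setminus I(u)$ for a single bad pair, to show that the pair can be cleaned without creating new bad pairs. Bad pairs are thus handled one at a time. Your ``far apart versus overlapping copies'' case split and your simultaneous modification are the wrong organization of the argument.
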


It has observed in \cite{ds} that the bigraphs $F_6$ has mixed unit interval representations. Thus 
the class of $\mathcal{U}$-bigraphs properly contains the 
class of $\mathcal{U}^\pm$-bigraphs.
\begin{figure}[H]
\centering
\begin{tikzpicture}[scale=.50]
{\tikzstyle{every node}=[circle, draw, fill=black,
                        inner sep=0pt, minimum width=3pt]
\draw (0:0)node[label={[label distance=1pt]180:${\scriptstyle y_4}$}]{}--++(-90:1)node[label={[label distance=1pt]180:${\scriptstyle x_3}$}]{}--++(0:1)node[label={[label distance=1pt]20:${\scriptstyle y_1}$}]{}--++(0:1)node[label={[label distance=1pt]0:${\scriptstyle x_2}$}](v){}--++
(-90:1)node[label={[label distance=1pt]-20:${\scriptstyle y_2}$}]{}--++(180:1)node[label={[label distance=1pt]-90:${\scriptstyle x_4}$}](w){}--++(180:1)node[label={[label distance=1pt]180:${\scriptstyle y_3}$}]{}--++(90:1) 
(w)--++(90:2)node[label={[label distance=1pt]0:${\scriptstyle x_1}$}]{};
\path [name path=v1] (v) --++ (-60:2.2);
\path [name path=w1] (w) --++ (-20:2.2);
\path [name intersections={of=v1 and w1, by=x}];
\draw (v)--(x)node[label={[label distance=1pt]0:${\scriptstyle y_5}$}]{}--(w) (x)--++(-40:1)node[label={[label distance=1pt]0:${\scriptstyle x_5}$}]{};
}
\pgfmathsetmacro{\b}{1.2}
\pgfmathsetmacro{\c}{0.6}
\pgfmathsetmacro{\d}{0.15}
\draw \foreach \p/\q/\r in {5*\b/-5.2*\c/y_5,7*\b/-5.2*\c/y_3,6*\b/-4.2*\c/y_1,8*\b/-4.2*\c/y_4,5*\b/-2.2*\c/x_2,7*\b/-2.2*\c/x_3,4*\b/-1.2*\c/x_5,6*\b/-1.2*\c/x_4}
{
(\p,\q)--(\p+\b,\q)
(\p+\d,\q-\d)--(\p,\q-\d)--(\p,\q+\d)--(\p+\d,\q+\d)
(\p+\b-\d,\q-\d)--(\p+\b,\q-\d)--(\p+\b,\q+\d)--(\p+\b-\d,\q+\d)
(\p+0.5*\b,\q+0.2) node{${\scriptstyle \r}$}
};
\draw \foreach \p/\q/\r in {6*\b/-0.2*\c/x_1}
{
(\p,\q)--(\p+\b,\q)
(\p-\d,\q-\d)--(\p,\q-\d)--(\p,\q+\d)--(\p-\d,\q+\d)
(\p+\b+\d,\q-\d)--(\p+\b,\q-\d)--(\p+\b,\q+\d)--(\p+\b+\d,\q+\d)
(\p+0.5*\b,\q+0.2) node{${\scriptstyle \r}$}
};
\draw \foreach \p/\q/\r in {5*\b/-3.2*\c/y_2}
{
(\p,\q)--(\p+\b,\q)
(\p-\d,\q-\d)--(\p,\q-\d)--(\p,\q+\d)--(\p-\d,\q+\d)
(\p+\b-\d,\q-\d)--(\p+\b,\q-\d)--(\p+\b,\q+\d)--(\p+\b-\d,\q+\d)
(\p+0.5*\b,\q+0.2) node{${\scriptstyle \r}$}
};
\end{tikzpicture}
\caption{A $\mathcal{U}$-intersection representation of $F_6$.}
\label{f11}
\end{figure}

As mentioned in \cite{ds} it may be noted that in a $\mathcal{U}$-representation of a bigraph we can make some \emph{trivial modifications}. Which are suitable interval shifts that preserve intersections and non-intersections changes in the types (open, closed or open-closed and closed-open) of some intervals, refection of the entire model about a point on the real line, translation of the entire model, and relabeling of some intervals that preserve intersection and non intersection of the intervals corresponding to different partite sets of vertices.\par
We have observed in \cite{ds} that $\mathcal{U}^\pm$ representation of each of the bigraphs $H_1$, $H_2$ and $H_3$ is unique up to trivial modifications.

\section{Forbidden induced subgraphs of Mixed unit interval bigraphs}\label{s3}
It is easy to verify that the bigraphs $F_2$, $F_4$, $F_5$, $F_8$, $F_9$, $F_{11}$ and $F_{12}$ of Figure~5 have no $\mathcal{U}$-intersection representation. 
Also, as mentioned earlier, that in \cite{dsml} we have obtained four forbidden infinite families, namely $\mathcal{L}$, $\mathcal{M}$, $\mathcal{N}$, $\mathcal{H'}$ 
(Figures 7, 8, 9 and 10 respectively) of mixed unit interval bigraphs along with the forbidden bigraphs $B_1$, and  $B_2$ (Figures 17 and 18).
In this section we shall provide six new families of forbidden induced subgraphs, namely the infinite families $\mathcal{K}$, $\mathcal{P}$, $\mathcal{T}$, $\mathcal{Q}$, $\mathcal{R}$, and $\mathcal{S}$ and the bigraphs $K$, $M$, $H_0$.\par

\begin{figure}[H]
\centering
\begin{minipage}[b]{.5\textwidth}
\centering
\vfill
\begin{tikzpicture}[scale=.40]
{\tikzstyle{every node}=[circle, draw, fill=black, inner sep=0pt, minimum width= 2.5pt]

\draw (0:0)node(a){}--++(30:1)node(b){}--++(-30:-1)node{}--++(-30:-1)node{} (b)--++(0:1)node(c){} (c)--++(90:1)node{}--++(90:1)node{} (c)--++(-90:1)node(n){} (c)--++(0:1)node(d){}--++(30:1)node{}--++(30:1)node{} (d)--++(-30:1)node{}--++(-30:1)node{} (a)--++(30:-1)node{};
\coordinate (n1) at ([yshift=-.01cm]n);

\coordinate (z) at ([xshift=7cm]a);
\draw (z)node{}--++(30:1)node(b){}--++(-30:-1)node{}--++(-30:-1)node{} (b)--++(0:1)node(c){} (c)--++(90:1)node{}--++(90:1)node{} (c)--++(-90:1)node(n){} (c)--++(0:1)node(e){}--++(0:1)node(d){}--++(30:1)node{}--++(30:1)node{} (d)--++(-30:1)node{}--++(-30:1)node{} (e)--++(-90:1)node{} (z)--++(30:-1)node{};
\coordinate (n2) at ([yshift=-.01cm]n);

\coordinate (z) at ([yshift=-4.5cm]a);
\draw (z)node{}--++(30:1)node(b){}--++(-30:-1)node{}--++(-30:-1)node{} (b)--++(0:1)node(g){}--++(-90:1)node{};
\draw[loosely dotted, thick] (g)--++(0:2)node(h){};
\draw (h)--++(0:1)node(f){}--++(0:1)node(c){} (c)--++(90:1)node{}--++(90:1)node{} (c)--++(-90:1)node(n){} (c)--++(0:1)node(e){}--++(0:1)node(d){};
\draw[loosely dotted, thick] (d)--++(0:2)node(i){};
\draw (i)--++(0:1)node(j){}--++(30:1)node{}--++(30:1)node{} (j)--++(-30:1)node{}--++(-30:1)node{} (e)--++(-90:1)node{} (f)--++(-90:1)node{} (h)--++(-90:1)node{} (d)--++(-90:1)node{} (i)--++(-90:1)node{} (z)--++(30:-1)node{};
\coordinate (n4) at ([yshift=-.01cm]n);
}
\node[label=below:{$L_{1,1}$}] at (n1){};
\node[label=below:{$L_{1,2}$}] at (n2){};
\node[label=below:{$L_{i,j}$}] at (n4){};
\draw[decoration={brace,raise=3pt,amplitude=6pt},decorate]
  (b) -- node[above=6pt] {${\scriptstyle i}$ {\footnotesize vertices}} (f);
\draw[decoration={brace,raise=3pt,amplitude=6pt},decorate]
  (e) -- node[above=6pt] {${\scriptstyle j}$ {\footnotesize vertices}} (j);

\end{tikzpicture}
\vfill
\caption{The class $\mathcal{L}$}
\end{minipage}%
\begin{minipage}[b]{.5\textwidth}
\centering
\vfill
\begin{tikzpicture}[scale=.40]
{\tikzstyle{every node}=[circle, draw, fill=black, inner sep=0pt, minimum width= 2.5pt]
\draw (0:0)node(e){}--++(45:1)node(a){}--++(45:1)node(b){}--++(-45:-1)node(c){}--++(-45:-1)node{} (a)--++(-45:-1)node{}--(c) (b)--++(90:1)node{} (b)--++(0:1)node(d){}--++(30:1)node{}--++(30:1)node{} (d)--++(-30:1)node{}--++(-30:1)node{};
\coordinate (n1) at ([xshift=1.5cm,yshift=.2cm]e);

\coordinate (f) at ([xshift=6cm]e);
\draw (f)node{}--++(45:1)node(a){}--++(45:1)node(b){}--++(-45:-1)node(c){}--++(-45:-1)node{} (a)--++(-45:-1)node{}--(c) (b)--++(90:1)node{} (b)--++(0:1)node(d){}--++(0:1)node(e1){}--++(-30:1)node{}--++(-30:1)node{} (d)--++(90:1)node{} (e1)--++(30:1)node{}--++(30:1)node{};
\coordinate (n2) at ([xshift=2cm,yshift=.2cm]f);

\coordinate (z) at ([xshift=1.5cm,yshift=-4.5cm]e);
\draw (z)node{}--++(45:1)node(a){}--++(45:1)node(b){}--++(-45:-1)node(c){}--++(-45:-1)node{} (a)--++(-45:-1)node{}--(c) (b)--++(90:1)node{} (b)--++(0:1)node(d){}--++(0:1)node(e1){}--++(90:1)node{} (d)--++(90:1)node{};
\draw[loosely dotted,thick] (e1)--++(0:2)node(g){};
\draw (g)--++(0:1)node(h){}--++(30:1)node{}--++(30:1)node{} (h)--++(-30:1)node{}--++(-30:1)node{} (g)--++(90:1)node{};
\coordinate (n3) at ([xshift=3cm]z);
}
\node[label=below:{$M_1$}] at (n1){};
\node[label=below:{$M_2$}] at (n2){};
\node[label=below:{$M_i$}] at (n3){};
\draw[decoration={brace,mirror,raise=5pt,amplitude=7pt},decorate]
  (d) -- node[below=9pt] {${\scriptstyle i}$ {\footnotesize vertices}} (h);
\end{tikzpicture}
\vfill
\caption{The class $\mathcal{M}$.}
\end{minipage}
\end{figure}

\begin{figure}[H]
\centering
\begin{minipage}[b]{.5\textwidth}
\centering
\vfill
\begin{tikzpicture}[scale=.40]
{\tikzstyle{every node}=[circle, draw, fill=black, inner sep=0pt, minimum width= 2.5pt]

\draw (0:0)node(a){}--++(0:1)node(b){}--++(0:1)node(n){}--++(90:1)node(c){}--++(0:-1)node{}--++(0:-1)node{} (a)--++(90:2)node{} (b)--++(90:2)node{}  ;
\path [name path=d1] (b)--++(-15:3);
\path [name path=f1] (c)--++(-45:3);
\path [name intersections={of=d1 and f1, by=d}];
\draw (b)--(d)node{}--(c) (d)--++(30:1)node{}--++(30:1)node{} (d)--++(-30:1)node{}--++(-30:1)node{};
\coordinate (n1) at ([yshift=-.5cm]n);

\coordinate (z) at ([xshift=7cm]a);
\draw (z)node{}--++(0:1)node(b){}--++(0:1)node(n){}--++(90:1)node(c){}--++(0:-1)node{}--++(0:-1)node{} (z)--++(90:2)node{} (b)--++(90:2)node{}  ;
\path [name path=d1] (b)--++(-15:3);
\path [name path=f1] (c)--++(-45:3);
\path [name intersections={of=d1 and f1, by=d}];
\draw (b)--(d)node{}--(c) (d)--++(0:1)node(e){}--++(30:1)node{}--++(30:1)node{} (d)--++(90:1)node{}
(e)--++(-30:1)node{}--++(-30:1)node{};
\coordinate (n2) at ([yshift=-.5cm]n);

\coordinate (y) at ([xshift=2cm,yshift=-4.5cm]a);
\draw (y)node{}--++(0:1)node(b){}--++(0:1)node(n){}--++(90:1)node(c){}--++(0:-1)node{}--++(0:-1)node{} (y)--++(90:2)node{} (b)--++(90:2)node{}  ;
\path [name path=d1] (b)--++(-15:3);
\path [name path=f1] (c)--++(-45:3);
\path [name intersections={of=d1 and f1, by=d}];
\draw (b)--(d)node{}--(c) (d)--++(0:1)node(e){}--++(90:1)node{} (d)--++(90:1)node{}
(e)--++(0:1)node(f){}--++(90:1)node{};
\draw[loosely dotted, thick] (f)--++(0:2)node(g){};
\draw (g)--++(0:1)node(h){}--++(30:1)node{}--++(30:1)node{} (h)--++(-30:1)node{}--++(-30:1)node{} (g)--++(90:1)node{};
\coordinate (n3) at ([yshift=-.5cm]n);
}
\node[label=below:{$N_1$}] at (n1){};
\node[label=below:{$N_2$}] at (n2){};
\node[label=below:{$N_i$}] at (n3){};
\draw[decoration={brace,mirror,raise=5pt,amplitude=7pt},decorate]
  (d) -- node[below=9pt] {${\scriptstyle i}$ {\footnotesize vertices}} (h);
\end{tikzpicture}
\vfill
\caption{The class $\mathcal{N}$.}
\end{minipage}%
\begin{minipage}[b]{.5\textwidth}
\centering
\vfill
\begin{tikzpicture}[scale=.40]
{\tikzstyle{every node}=[circle, draw, fill=black, inner sep=0pt, minimum width= 2.5pt]
\draw (0:0)node(a){}--++(0:1)node{}--++(90:1)node(b){}--++(90:1)node{}--++(0:-1)node{}--++(-90:1)node(c){}--(a) (c)--(b)--++(0:1)node(d){}--++(30:1)node{}--++(30:1)node{}
(b)--++(30:1)node{} (d)--++(-30:1)node{}--++(-30:1)node{};
\coordinate (n1) at ([xshift=2cm]a);

\coordinate (f) at ([xshift=6cm]a);
\draw (f)node{}--++(0:1)node{}--++(90:1)node(b){}--++(90:1)node{}--++(0:-1)node{}--++(-90:1)node(c){}--(f) (c)--(b)--++(0:1)node(d){}--++(30:1)node{}
(b)--++(30:1)node{} (d)--++(0:1)node(e){}--++(-30:1)node{}--++(-30:1)node{} (e)--++(30:1)node{}--++(30:1)node{};
\coordinate (n2) at ([xshift=2.5cm]f);

\coordinate (z) at ([xshift=1cm,yshift=-4.5cm]a);
\draw (z)node{}--++(0:1)node{}--++(90:1)node(b){}--++(90:1)node{}--++(0:-1)node{}--++(-90:1)node(c){}--(z) (c)--(b)--++(0:1)node(d){}--++(30:1)node{}
(b)--++(30:1)node{} (d)--++(0:1)node(e){}--++(0:1)node(f){} (e)--++(30:1)node{} (f)--++(30:1)node{};
\draw[loosely dotted, thick] (f)--++(0:2)node(g){};
\draw (g)--++(30:1)node{} (g)--++(0:1)node(h){}--++(30:1)node{}--++(30:1)node{} (h)--++(-30:1)node{}--++(-30:1)node{};
\coordinate (n3) at ([xshift=3cm]z);
}
\node[label=below:{$H_1'$}] at (n1){};
\node[label=below:{$H_2'$}] at (n2){};
\node[label=below:{$H_i'$}] at (n3){};
\draw[decoration={brace,mirror,raise=3pt,amplitude=6pt},decorate]
  (d) -- node[below=6pt] {${\scriptstyle i}$ {\footnotesize vertices}} (h);
\end{tikzpicture}
\vfill
\caption{The class $\mathcal{H'}$.}
\end{minipage}
\end{figure}

\begin{figure}[H]
\centering
\begin{minipage}[b]{.5\textwidth}
\centering
\vfill
\begin{tikzpicture}[scale=.40]
{\tikzstyle{every node}=[circle, draw, fill=black, inner sep=0pt, minimum width= 2.5pt]

\draw (0:0)node(a){}--++(45:1)node(b){}--++(-45:1)node(c){}--++(45:-1)node(d){}--(a)--++(0:-1)node(a1){}--++(30:-1)node{}--++(30:-1)node{} (a1)--++(-30:-1)node{}--++(-30:-1)node{} (a)--++(45:-1)node{}
(c)--++(0:1)node(c1){}--++(30:1)node{}--++(30:1)node{} (c1)--++(-30:1)node{}--++(-30:1)node{} (c)--++(-45:1)node{}
(b)--++(90:1)node{};
\coordinate (n1) at ([yshift=-.5cm]d);

\coordinate (e) at ([xshift=9cm]c);
\draw (e)node{}--++(45:1)node(f){}--++(-45:1)node(g){}--++(45:-1)node(h){}--(e)--++(0:-1)node(e1){}--++(0:-1)node(e2){}--++(30:-1)node{}--++(30:-1)node{} (e2)--++(-30:-1)node{}--++(-30:-1)node{} (e)--++(45:-1)node{} (e1)--++(45:-1)node{}
(g)--++(0:1)node(g1){}--++(30:1)node{}--++(30:1)node{} (g1)--++(-30:1)node{}--++(-30:1)node{} (g)--++(-45:1)node{} (f)--++(90:1)node{} ;
\coordinate (n2) at ([yshift=-.5cm]h) ;

\coordinate (i) at ([xshift=4.4cm,yshift=-5cm]a);
\draw (i)node{}--++(45:1)node(j){}--++(-45:1)node(k){}--++(45:-1)node(l){}--(i)--++(0:-1)node(i1){} (k)--++(0:1)node(k1){} (j)--++(90:1)node{};
\draw[loosely dotted, thick] (i1)--++(0:-2)node(i2){} (k1)--++(0:2)node(k2){};
\draw (i2)--++(0:-1)node(i3){}--++(30:-1)node{}--++(30:-1)node{} (i3)--++(-30:-1)node{}--++(-30:-1)node{} (i)--++(45:-1)node{} (i1)--++(45:-1)node{} (i2)--++(45:-1)node{}
(k2)--++(0:1)node(k3){}--++(30:1)node{}--++(30:1)node{} (k3)--++(-30:1)node{}--++(-30:1)node{} (k)--++(-45:1)node{} (k1)--++(-45:1)node{} (k2)--++(-45:1)node{};
\coordinate (n3) at ([yshift=-.5cm]l) ;
}

\node[label=below:{$K_{1,1}$}] at (n1){};
\node[label=below:{$K_{2,1}$}] at (n2){};
\node[label=below:{$K_{i,j}$}] at (n3){};
\draw[decoration={brace,raise=5pt,amplitude=7pt},decorate]
  (i3) -- node[above=10pt] {${\scriptstyle i}$ {\footnotesize vertices}} (i1);
\draw[decoration={brace,raise=5pt,amplitude=7pt},decorate]
  (k1) -- node[above=9pt] {${\scriptstyle j}$ {\footnotesize vertices}} (k3);
\end{tikzpicture}
\vfill
\caption{The class $\mathcal{K}$}
\end{minipage}%
\begin{minipage}[b]{.5\textwidth}
\centering
\vfill
\begin{figure}[H]
\centering
\begin{tikzpicture}[scale=.40]
{\tikzstyle{every node}=[circle, draw, fill=black, inner sep=0pt, minimum width= 2.5pt]

\draw (0:0)node(q){}--++(90:1)node(r){}--++(90:1)node(s){}--++
(90:1)node{} (r)--++(0:1)node(t){}--++(0:1)node{}
(q)--(t)--(s);
\draw (r)--++(0:.5)node(r'){};
\path [name path=v1] (q) --++ (-45:-2.2);
\path [name path=w1] (s) --++ (45:-2.2);
\path [name intersections={of=v1 and w1, by=u}];
\draw (q)--(u)node{}--(s) (u)--++(30:-1)node(u1){}--++(30:-1)node{}
(u)--++(-30:-1)node{}--++(-30:-1)node{} ;
\coordinate (n1) at ([yshift=-.5cm]q);

\coordinate (a) at ([xshift=10cm]q);
\draw (a)node{}--++(90:1)node(b){}--++(90:1)node(c){}--++
(90:1)node{} (b)--++(0:1)node(d){}--++(0:1)node{}
(a)--(d)--(c);
\draw (b)--++(0:.5)node(b'){};
\path [name path=v1] (a) --++ (-45:-2.2);
\path [name path=w1] (c) --++ (45:-2.2);
\path [name intersections={of=v1 and w1, by=e}];
\draw (a)--(e)node{}--(c) (e)--++(0:-1)node(e1){}--++(30:-1)node{}--++(30:-1)node{}
(e)--++(-45:-1)node{} (e1)--++(-30:-1)node{}--++(-30:-1)node{} ;
\coordinate (n2) at ([yshift=-.5cm]a);

\coordinate (a1) at ([xshift=6.5cm,yshift=-4.5cm]q);
\draw (a1)node{}--++(90:1)node(b){}--++(90:1)node(c){}--++
(90:1)node{} (b)--++(0:1)node(d){}--++(0:1)node{}
(a1)--(d)--(c);

\draw (b)--++(0:.5)node(b'){};

\path [name path=v1] (a1) --++ (-45:-2.2);
\path [name path=w1] (c) --++ (45:-2.2);
\path [name intersections={of=v1 and w1, by=e}];
\draw (a1)--(e)node{}--(c) (e)--++(0:-1)node(e1){}--++(0:-1)node(e2){};
\draw[loosely dotted, thick] (e2)--++(0:-2)node(e3){};

\draw (e3)node{}--++(0:-1)node(e4){} (e)--++(-45:-1)node{} (e1)--++(-45:-1)node{} (e2)--++(-45:-1)node{} (e3)--++(-45:-1)node{} (e4)--++(-30:-1)node{}--++(-30:-1)node{} (e4)--++(30:-1)node{}--++(30:-1)node{};
\coordinate (n3) at ([xshift=-1.5cm,yshift=-.5cm]a1);
}

\node[label=below:{$P_{1}$}] at (n1){};
\node[label=below:{$P_{2}$}] at (n2){};
\node[label=below:{$P_{i}$}] at (n3){};
\draw[decoration={brace,mirror,raise=5pt,amplitude=7pt},decorate]
  (e4) -- node[below=10pt] {${\scriptstyle i}$ {\footnotesize vertices}} (e);
\end{tikzpicture}
\end{figure}
\vfill
\caption{The class $\mathcal{P}$}
\end{minipage}
\end{figure}

\begin{figure}[H]
\centering
\begin{minipage}[b]{.5\textwidth}
\centering
\vfill
\begin{tikzpicture}[scale=.40]
{\tikzstyle{every node}=[circle, draw, fill=black, inner sep=0pt, minimum width= 2.5pt]

\draw (0:0)node(a){}--++(90:1)node(b){}--++(90:1)node{}--++(90:1)node{} (b)--++(0:-1)node(c){}--++(0:-1)node(c1){}--++(0:-1)node{} (c)--++(-50:-1)node{}--++(0:-1)node{}--(c1) (c)--++(90:1)node{} (b)--++(0:1)node(m2){}--++(0:1)node(n){}--++(0:1)node{} (m2)--++(50:1)node{}--++(0:1)node{}--(n) (m2)--++(90:1)node{};
\coordinate (n1) at ([yshift=-.01cm]a);

\coordinate (j) at ([xshift=8cm]a);
\draw (j)node{}--++(90:1)node(k){}--++(90:1)node{}--++(90:1)node{} (k)--++(0:-1)node(k1){}--++(0:-1)node(k2){}--++(0:-1)node(l){}--++(0:-1)node{} (k1)--++(90:1)node{}
(k2)--++(90:1)node{} (k2)--++(-50:-1)node{}--++(0:-1)node{}--(l) (k)--++(0:1)node(m2){}--++(0:1)node(n){}--++(0:1)node{} (m2)--++(50:1)node{}--++(0:1)node{}--(n) (m2)--++(90:1)node{};
\coordinate (n3) at ([yshift=-.01cm]j);

\coordinate (p) at ([xshift=3.5cm,yshift=-5cm]a);
\draw (p)node{}--++(90:1)node(q){}--++(90:1)node{}--++(90:1)node{} (q)--++(0:-1)node(q1){}--++(0:-1)node(q2){} (q1)--++(90:1)node{} (q2)--++(90:1)node{} (q)--++(0:1)node(s1){}--++(0:1)node(s2){} (s2)--++(90:1)node{} (s1)--++(90:1)node{} ;

\draw[loosely dotted, thick] (q2)--++(0:-2)node(q3){} (s2)--++(0:2)node(s3){};

\draw (q3)--++(0:-1)node(q4){}--++(0:-1)node(r){}--++(0:-1)node{} (q4)--++(90:1)node{} (q4)--++(-50:-1)node{}--++(0:-1)node{}--(r) (q3)--++(90:1)node{}  (s3)--++(0:1)node(s4){}--++(0:1)node(t){}--++(0:1)node{} (s4)--++(50:1)node{}--++(0:1)node{}--(t) (s3)--++(90:1)node{} (s4)--++(90:1)node{};
\coordinate (n4) at ([yshift=-.01cm]p);
}
\node[label=below:{$T_{1,1}$}] at (n1){};
\node[label=below:{$T_{2,1}$}] at (n3){};
\node[label=below:{$T_{i,j}$}] at (n4){};
\draw[decoration={brace,mirror,raise=5pt,amplitude=7pt},decorate]
  (q4) -- node[below=10pt] {${\scriptstyle i}$ {\footnotesize vertices}} (q1);
\draw[decoration={brace,mirror,raise=5pt,amplitude=7pt},decorate]
  (s1) -- node[below=10pt] {${\scriptstyle j}$ {\footnotesize vertices}} (s4);
\end{tikzpicture}
\vfill
\caption{The class $\mathcal{T}$}
\end{minipage}%
\begin{minipage}[b]{.5\textwidth}
\centering
\vfill
\begin{tikzpicture}[scale=.40]
{\tikzstyle{every node}=[circle, draw, fill=black,
                        inner sep=0pt, minimum width=3pt]

\draw (0:0)node{}--++(-90:1)node(a){}--++(0:1)node(b){}--++(0:1)node(c){}--++(-90:1)
node{}--++(180:1)node(d){}--++(180:1)node(e){}--++(a) (b)--(d) (e)--++(0:-1)node{};
\path [name path=b1] (b) --++ (-110:1.7);
\path [name path=e1] (e) --++ (-45:.8);
\path [name intersections={of=b1 and e1, by=f}];

\path [name path=c1] (c) --++ (-55:3);
\path [name path=d1] (d) --++ (-20:3);
\path [name intersections={of=c1 and d1, by=g}];
\draw (b)--(f)node{}--(e) (f)--++(-90:1)node(h){} ;
\draw    (h)to[out=30,in=-80](d) (h)to[out=160,in=-120](a) (c)to[out=-20,in=100](g)node{} (g)to[out=180,in=-50](d) (g)--++(20:1)node{}--++(20:1)node{} (g)--++(-20:1)node{}--++(-20:1)node{};
\coordinate (n1) at ([yshift=-.4cm]h);

\coordinate (z) at ([xshift=8cm,yshift=1cm]a);
\draw (z)node{}--++(-90:1)node(a1){}--++(0:1)node(b){}--++(0:1)node(c){}--++(-90:1)
node{}--++(180:1)node(d){}--++(180:1)node(e){}--++(a1) (b)--(d) (e)--++(0:-1)node{};
\path [name path=b1] (b) --++ (-110:1.7);
\path [name path=e1] (e) --++ (-45:.8);
\path [name intersections={of=b1 and e1, by=f}];

\path [name path=c1] (c) --++ (-55:3);
\path [name path=d1] (d) --++ (-20:3);
\path [name intersections={of=c1 and d1, by=g}];
\draw (b)--(f)node{}--(e) (f)--++(-90:1)node(h){} ;
\draw    (h)to[out=30,in=-80](d) (h)to[out=160,in=-120](a1) (c)to[out=-20,in=100](g)node{} (g)to[out=180,in=-50](d) (g)--++(0:1)node(i){}--++(20:1)node{}--++(20:1)node{} (g)--++(-90:1)node{} (i)--++(-20:1)node{}--++(-20:1)node{};
\coordinate (n2) at ([yshift=-.4cm]h);

\coordinate (z) at ([xshift=3cm,yshift=-6cm]a);
\draw (z)node{}--++(-90:1)node(a){}--++(0:1)node(b){}--++(0:1)node(c){}--++(-90:1)
node{}--++(180:1)node(d){}--++(180:1)node(e){}--++(a) (b)--(d) (e)--++(0:-1)node{};
\path [name path=b1] (b) --++ (-110:1.7);
\path [name path=e1] (e) --++ (-45:.8);
\path [name intersections={of=b1 and e1, by=f}];

\path [name path=c1] (c) --++ (-55:3);
\path [name path=d1] (d) --++ (-20:3);
\path [name intersections={of=c1 and d1, by=g}];
\draw (b)--(f)node{}--(e) (f)--++(-90:1)node(h){} ;
\draw    (h)to[out=30,in=-80](d) (h)to[out=160,in=-120](a) (c)to[out=-20,in=100](g)node{} (g)to[out=180,in=-50](d) (g)--++(0:1)node(i){}--++(0:1)node(j){}--++(-90:1)node(k){} (g)--++(-90:1)node{} (i)--++(-90:1)node{};
\draw[loosely dotted, thick] (j)--++(0:2)node(k){};
\draw (k)--++(-90:1)node{} (k)--++(0:1)node(l){}--++(20:1)node{}--++(20:1)node{} (l)--++(-20:1)node{}--++(-20:1)node{};

\coordinate (n3) at ([yshift=-.4cm]h);

}
\draw[decoration={brace,raise=13pt,amplitude=7pt},decorate]
  (g) -- node[above=18pt] {${\scriptstyle i}$ {\footnotesize vertices}} (l);

\node[label=below:{$Q_1$}] at (n1){};
\node[label=below:{$Q_2$}] at (n2){};
\node[label=below:{$Q_i$}] at (n3){};

\end{tikzpicture}
\vfill
\caption{The class $\mathcal{Q}$.}
\end{minipage}
\end{figure}

\begin{figure}[H]
\centering
\begin{minipage}[b]{.5\textwidth}
\centering
\vfill
\begin{tikzpicture}[scale=.40]
{\tikzstyle{every node}=[circle, draw, fill=black, inner sep=0pt, minimum width= 2.5pt]

\draw (0:0)node(a){}--++(0:1)node(b){}--++(0:1)node(c){}--++(90:1)node{}--++(90:1)node{} (c)--++(-50:-1)node{}--++(0:-1)node{}--(b) (c)--++(50:1)node{} (c)--++(0:1)node(d){}--++(20:1)node{}--++(20:1)node{}
(d)--++(-20:1)node{}--++(-20:1)node{} ;
\coordinate (n2) at ([yshift=-.4cm]c);

\coordinate (a1) at ([xshift=8cm]a);
\draw (a1)node{}--++(0:1)node(b1){}--++(0:1)node(c1){}--++(90:1)node{}--++(90:1)node{} (c1)--++(-50:-1)node{}--++(0:-1)node{}--(b1) (c1)--++(50:1)node{} (c1)--++(0:1)node(d1){}--++(0:1)node(e1){}--++(20:1)node{}--++(20:1)node{}
(e1)--++(-20:1)node{}--++(-20:1)node{} (d1)--++(50:1)node{} ;
\coordinate (n3) at ([yshift=-.4cm]d1);

\coordinate (a2) at ([xshift=3.5cm,yshift=-5cm]a);
\draw (a2)node{}--++(0:1)node(b2){}--++(0:1)node(c2){}--++(90:1)node{}--++(90:1)node{} (c2)--++(-50:-1)node{}--++(0:-1)node{}--(b2) (c2)--++(50:1)node{} (c2)--++(0:1)node(d2){}--++(50:1)node{};

\draw[loosely dotted, thick] (d2)--++(0:2)node(e2){} ;

\draw (e2)--++(0:1)node(f2){}--++(20:1)node{}--++(20:1)node{}
(f2)--++(-20:1)node{}--++(-20:1)node{} (e2)--++(50:1)node{};
}
\coordinate (n4) at ([yshift=-1.5cm]d2);

\node[label=below:{$R_1$}] at (n2){};
\node[label=below:{$R_2$}] at (n3){};
\node[label=below:{$R_i$}] at (n4){};
\draw[decoration={brace,mirror,raise=6pt,amplitude=7pt},decorate]
  (d2) -- node[below=9pt] {${\scriptstyle i}$ {\footnotesize vertices}} (f2);
\end{tikzpicture}
\vfill
\caption{The class $\mathcal{R}$}
\end{minipage}%
\begin{minipage}[b]{.5\textwidth}
\centering
\vfill
\begin{tikzpicture}[scale=.40]
{\tikzstyle{every node}=[circle, draw, fill=black, inner sep=0pt, minimum width= 2.5pt]

\draw (0:0)node(a){}--++(0:1)node{}--++(0:1)node(b){}--++(90:1)node{}--++(90:1)node{} (b)--++(60:1)node(d){} (b)--++(0:2)node(c){}--++(120:1)node(e){}--(d) (c)--++(20:1)node{}--++(20:1)node{}
(c)--++(-20:1)node{}--++(-20:1)node{};
\coordinate (n2) at ([yshift=-.2cm]b);

\coordinate (g) at ([xshift=8cm]a);
\draw (g)node{}--++(0:1)node{}--++(0:1)node(h){}--++(90:1)node{}--++(90:1)node{} (h)--++(60:1)node(j){} (h)--++(0:2)node(i){}--++(0:1)node(l){} (i)--++(90:1)node{} (i)--++(120:1)node(k){}--(j)(l)--++(20:1)node{}--++(20:1)node{}
(l)--++(-20:1)node{}--++(-20:1)node{};
\coordinate (n3) at ([yshift=-.2cm]i);

\coordinate (p) at ([xshift=3.5cm,yshift=-5cm]a);
\draw (p)node{}--++(0:1)node{}--++(0:1)node(h1){}--++(90:1)node{}--++(90:1)node{} (h1)--++(60:1)node(j1){} (h1)--++(0:2)node(i1){}--++(0:1)node(l1){}--++(90:1)node{} (i1)--++(90:1)node{} (i1)--++(120:1)node(k1){}--(j1) ;

\draw[loosely dotted, thick] (l1)--++(0:2)node(m1){} ;

\draw (m1)--++(0:1)node(n1){}--++(20:1)node{}--++(20:1)node{}
(n1)--++(-20:1)node{}--++(-20:1)node{} (m1)--++(90:1)node{};
}
\coordinate (n4) at ([yshift=-1.5cm]i1);

\node[label=below:{$S_1$}] at (n2){};
\node[label=below:{$S_2$}] at (n3){};
\node[label=below:{$S_i$}] at (n4){};
\draw[decoration={brace,mirror,raise=5pt,amplitude=7pt},decorate]
  (i1) -- node[below=10pt] {${\scriptstyle i}$ {\footnotesize vertices}} (n1);
\end{tikzpicture}
\vfill
\caption{The class $\mathcal{S}$}
\end{minipage}
\end{figure}

\begin{figure}[H]
\centering
\begin{minipage}[b]{.5\textwidth}
\centering
\vfill
\begin{tikzpicture}[scale=.40]
{\tikzstyle{every node}=[circle, draw, fill=black, inner sep=0pt, minimum width= 2.5pt]
\draw (0:0)node{}--++(45:1)node{}--++(45:1)node(a){}--++(90:1)node{}--++(90:1)node{} (a)--++(-45:1)node{}--++(-45:1)node{}
(a)--++(-90:1)node{}--++(-90:1)node{};
}
\end{tikzpicture}
\vfill
\caption{The bigraph $B_1$}
\end{minipage}%
\begin{minipage}[b]{.5\textwidth}
\centering
\vfill
\begin{tikzpicture}[scale=.40]
{\tikzstyle{every node}=[circle, draw, fill=black, inner sep=0pt, minimum width= 2.5pt]
\draw (0:0)node{}--++(45:1)node(b){}--++(45:1)node(a){}--++(90:1)node{}--++(90:1)node{} (a)--++(-45:1)node(c){}--++(-45:1)node{}
(a)--++(20:1)node{}--++(-45:1)node{}--(c) (a)--++(-20:-1)node{}--++(45:-1)node{}--(b) (a)--++(-90:1)node{};
}
\end{tikzpicture}
\vfill
\caption{The bigraph $B_2$}
\end{minipage}
\end{figure}

\begin{figure}[H]
\centering
\begin{minipage}[b]{.5\textwidth}
\centering
\vfill
\begin{tikzpicture}[scale=.50]
{\tikzstyle{every node}=[circle, draw, fill=black, inner sep=0pt, minimum width= 2.5pt]

\draw (0:0)node(a){}--++(45:1)node(b)
{}--++(-45:1)node(c){}--++(45:-1)node(d){}--(a)--++(0:-1)node(a1){}--++(0:-1)node(a2){} (a)--++(30:-1)node{}--++(30:-1)node{} (c)--++(0:1)node{}--++(0:1)node{} (c)--++(-30:1)node{} (b)--++(90:1)node{};
}
\end{tikzpicture}
\vfill
\caption{The bigraph $K$.}
\end{minipage}%
\begin{minipage}[b]{.5\textwidth}
\centering
\vfill
\begin{tikzpicture}[scale=.50]
{\tikzstyle{every node}=[circle, draw, fill=black, inner sep=0pt, minimum width= 2.5pt]

\draw (0:0)node(a){}--++(0:1)node(b){}--++(0:1)node{}--++(90:1)node{}--++(0:-1)node(c){}--++(0:-1)node{}--(a) (b)--(c)--++(90:1)node(d){}--++(90:1)node{} (c)--++(30:1)node{}--++(90:1)node{}--(d) (c)--++(-45:-1)node{};
}
\end{tikzpicture}
\vfill
\caption{The bigraph $H_0$.}
\end{minipage}
\end{figure}

\begin{figure}[H]
\centering

\begin{tikzpicture}[scale=.50]
{\tikzstyle{every node}=[circle, draw, fill=black, inner sep=0pt, minimum width= 2.5pt]

\draw (0:0)node{}--++(30:1)node(a){}--++(30:1)node(b){}--++(-30:-1)node(c){}--++(-30:-1)node{} (a)--++(150:1)node{}--(c) (b)--++(90:1)node{} (b)--++(0:1)node(d){}--++(30:1)node{}--++(0:-1)node{}--(b) (d)--++(0:1)node{};
}
\end{tikzpicture}

\caption{The bigraph $M$.}

\end{figure}

In following Lemmas we shall prove that each of the infinite families 
$\mathcal{K}$, $\mathcal{P}$, $\mathcal{T}$, $\mathcal{Q}$, $\mathcal{R}$, and $\mathcal{S}$ are also forbidden families of subgraphs of mixed unit interval bigraphs.\par
In the bigraphs $K_{i,j}'$, $P'_i$, $Q'_i$, $S'_i$, and $R'_i$ we call the two 
vertices $v'$ and $v''$ adjacent to $u$ are the \emph{special vertices} of the 
respective bigraphs. Now in Lemma~4, 5, 6, 7, 8 and 9 respectively we show 
that the bigraphs $K'_{i,j}$, $P'_i$, $T'_{i,j}$, $Q'_i$, $S'_i$ and 
$R'_i$ have unique $\mathcal{U}$-representation up to 
trivial modifications and the bigraphs $K_{i,j}$, $P_i$, $T_{i,j}$, 
$Q_i$, $S_i$ and $R_i$ are forbidden induced subgraphs 
of $\mathcal{U}$-bigraphs.\par

\begin{figure}[H]
\centering
\begin{tikzpicture}[scale=.50]
{\tikzstyle{every node}=[circle, draw, fill=black, inner sep=0pt, minimum width= 2.5pt]

\draw (0:0)node(a)[label={[label distance=1pt]90:${\scriptstyle x_1''}$}]{}--++(45:1)node(b)[label={[label distance=1pt]10:${\scriptstyle y_0}$}]{}--++(-45:1)node(c)[label={[label distance=1pt]80:${\scriptstyle x_1}$}]{}--++(45:-1)node(d)[label={[label distance=1pt]-90:${\scriptstyle y}$}]{}--(a)--++(0:-1)node(a1)[label={[label distance=1pt]90:${\scriptstyle y_1''}$}]{}--++(30:-1)node[label={[label distance=1pt]-80:${\scriptstyle x''}$}]{}--++(30:-1)node[label={[label distance=1pt]180:${\scriptstyle y''}$}]{} (a1)--++(-30:-1)node[label={[label distance=1pt]80:${\scriptstyle x'}$}]{}--++(-30:-1)node[label={[label distance=1pt]180:${\scriptstyle y'}$}]{} (a)--++(45:-1)node[label={[label distance=1pt]-90:${\scriptstyle y_1'''}$}]{}
(c)--++(0:1)node(c1)[label={[label distance=1pt]90:${\scriptstyle y_1}$}]{}--++(30:1)node[label={[label distance=1pt]0:${\scriptstyle x_0'}$}]{} (c1)--++(-30:1)node[label={[label distance=1pt]0:${\scriptstyle x_0''}$}]{} (c)--++(-45:1)node[label={[label distance=1pt]-90:${\scriptstyle y_1'}$}]{}
(b)--++(90:1)node[label={[label distance=1pt]180:${\scriptstyle x_0}$}]{};
\coordinate (n1) at ([yshift=-.5cm]d);

\coordinate (e) at ([xshift=9cm]c);
\draw (e)node[label={[label distance=1pt]90:${\scriptstyle x_1''}$}]{}--++(45:1)node(f)[label={[label distance=1pt]10:${\scriptstyle y_0}$}]{}--++(-45:1)node(g)[label={[label distance=1pt]80:${\scriptstyle x_1}$}]{}--++(45:-1)node(h)[label={[label distance=1pt]-90:${\scriptstyle y}$}]{}--(e)--++(0:-1)node(e1)[label={[label distance=1pt]90:${\scriptstyle y_1''}$}]{}--++(0:-1)node(e2)[label={[label distance=1pt]90:${\scriptstyle x_2''}$}]{}--++(30:-1)node[label={[label distance=1pt]-80:${\scriptstyle y''}$}]{}--++(30:-1)node[label={[label distance=1pt]180:${\scriptstyle x''}$}]{} (e2)--++(-30:-1)node[label={[label distance=1pt]80:${\scriptstyle y'}$}]{}--++(-30:-1)node[label={[label distance=1pt]180:${\scriptstyle x'}$}]{} (e)--++(45:-1)node[label={[label distance=1pt]-90:${\scriptstyle y_1'''}$}]{} (e1)--++(45:-1)node[label={[label distance=1pt]-90:${\scriptstyle x_1'''}$}]{}
(g)--++(0:1)node(g1)[label={[label distance=1pt]90:${\scriptstyle y_1}$}]{}--++(30:1)node[label={[label distance=1pt]0:${\scriptstyle x_0'}$}]{} (g1)--++(-30:1)node[label={[label distance=1pt]0:${\scriptstyle x_0''}$}]{} (g)--++(-45:1)node[label={[label distance=1pt]-90:${\scriptstyle y_1'}$}]{} (f)--++(90:1)node[label={[label distance=1pt]180:${\scriptstyle x_0}$}]{} ;
\coordinate (n2) at ([yshift=-.5cm]h) ;

\coordinate (i) at ([xshift=4.4cm,yshift=-5cm]a);
\draw (i)node[label={[label distance=1pt]90:${\scriptstyle x_1''}$}]{}--++(45:1)node(j)[label={[label distance=1pt]10:${\scriptstyle y_0}$}]{}--++(-45:1)node(k)[label={[label distance=1pt]80:${\scriptstyle x_1}$}]{}--++(45:-1)node(l)[label={[label distance=1pt]-90:${\scriptstyle y}$}]{}--(i)--++(0:-1)node(i1)[label={[label distance=1pt]90:${\scriptstyle y_1''}$}]{} (k)--++(0:1)node(k1)[label={[label distance=1pt]90:${\scriptstyle y_1}$}]{} (j)--++(90:1)node[label={[label distance=1pt]180:${\scriptstyle x_0}$}]{};
\draw[loosely dotted, thick] (i1)--++(0:-2)node(i2){} (k1)--++(0:2)node(k2){};
\draw (i2)--++(0:-1)node(i3)[label={[label distance=1pt]90:${\scriptstyle z}$}]{}--++(30:-1)node[label={[label distance=1pt]-90:${\scriptstyle w'}$}]{}--++(30:-1)node[label={[label distance=1pt]180:${\scriptstyle z''}$}]{} (i3)--++(-30:-1)node[label={[label distance=1pt]90:${\scriptstyle w}$}]{}--++(-30:-1)node[label={[label distance=1pt]180:${\scriptstyle z'}$}]{} (i)--++(45:-1)node[label={[label distance=1pt]-90:${\scriptstyle y_1'''}$}]{} (i1)--++(45:-1)node[label={[label distance=1pt]-90:${\scriptstyle x_1'''}$}]{} (i2)--++(45:-1)node{}
(k2)--++(0:1)node(k3)[label={[label distance=1pt]90:${\scriptstyle u}$}]{}--++(30:1)node[label={[label distance=1pt]0:${\scriptstyle v'}$}]{} (k3)--++(-30:1)node[label={[label distance=1pt]0:${\scriptstyle v''}$}]{} (k)--++(-45:1)node[label={[label distance=1pt]-90:${\scriptstyle y_1'}$}]{} (k1)--++(-45:1)node[label={[label distance=1pt]-90:${\scriptstyle x_1'}$}]{} (k2)--++(-45:1)node{};
\coordinate (n3) at ([yshift=-.5cm]l) ;
}

\node[label=below:{$K_{1,1}'$}] at (n1){};
\node[label=below:{$K_{2,1}'$}] at (n2){};
\node[label=below:{$K_{i,j}'$}] at (n3){};
\draw[decoration={brace,raise=13pt,amplitude=7pt},decorate]
  (i3) -- node[above=18pt] {${\scriptstyle i}$ {\footnotesize vertices}} (i1);
\draw[decoration={brace,raise=13pt,amplitude=7pt},decorate]
  (k1) -- node[above=17pt] {${\scriptstyle j}$ {\footnotesize vertices}} (k3);
\end{tikzpicture}
\caption{The class $\mathcal{K}'$ of bigraphs}
\end{figure}

\begin{lem}
Let $i,j\in \mathbb{N}$
\begin{enumerate}[(a)]

\item A $\mathcal{U}$-intersection representation $I:V(K'_{i,j})\to \mathcal{U}$ of $K'_{i,j}$, where $I(V(K'_{i,j}))$ consists of the following intervals
\begin{itemize}
\item $I(x_0)=\left(0,1\right)$, $I(x_1)=[0,1]$, $I(y_0)=\left[-\frac{1}{2},\frac{1}{2}\right]$, $I(y)=[-1,0]$, $I(y_1)=[1,2]$, $I(y_1')=[1,2)$
\item $I(x_n)=[2(n-1),2(n-1)+1]$, $I(y_n)=[2(n-1)+1,2(n-1)+2]=[2n-1,2n]$, $(n\geq 1)$
\item $I(x_n')=[2n,2n+1)$, $I(y_n')=[2n-1,2n)$, $(n\geq 1)$
\item $I(x_n'')=\left[-(2n-1)-\frac{1}{2},-2(n-1)-\frac{1}{2}\right]$, $I(x_n''')=\left(-2n-\frac{1}{2},-(2n-1)-\frac{1}{2}\right]$, $(n\geq 1)$
\item $I(y_n'')=\left[-2n-\frac{1}{2},-(2n-1)-\frac{1}{2}\right]$, $I(y_n''')=\left(-(2n-1)-\frac{1}{2},-2(n-1)-\frac{1}{2}\right]$, $(n\geq 1)$
\item $I(u)=[j,j+1],\ I(v')=I(v'')=[j+1,j+2]$ or $[j+1,j+2)$
\item $I(z)=\left[-i-\frac{3}{2},-i-\frac{1}{2}\right],\ I(w)=\left[-i-\frac{5}{2},-i-\frac{3}{2}\right]$, $I(w')=\left(-i-\frac{3}{2},-i-\frac{1}{2}\right)$,\\ $I(z')=\left[-i-\frac{7}{2},-i-\frac{5}{2}\right]$, $I(z'')=\left(-i-\frac{3}{2},-i-\frac{1}{2}\right)$
\end{itemize}
is unique up to trivial modifications.
\item $K_{i,j}$ is a  forbidden induced subgraph of $\mathcal{U}$-bigraphs.
\end{enumerate}

\end{lem}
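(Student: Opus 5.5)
We argue in two stages, following the pattern the paper uses for $H_1$, $H_2$, $H_3$ in \cite{ds}: first pin down the representation of a small rigid core, then propagate it along the two "arms" of $K'_{i,j}$ by induction.

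\textbf{Core.} The core is the $4$-cycle $x_1''y_0x_1y$ together with the pendant vertices $x_0$ (at $y_0$), $y_1'$ and $y_1$ (at $x_1$), and $y_1'''$ (at $x_1''$). Normalize by translation and reflection so that $I(x_1)=[0,1]$.
\begin{enumerate}[(1)]
\item Since $y_1$ and $y_1'$ are not copies, and $y$ and $y_0$ are neighbours of $x_1$ with different neighbourhoods, unit length forces $I(y)$ to sit on the left end of $I(x_1)$ and $I(y_1),I(y_1')$ on the right end. Up to relabeling, $I(y)=[-1,0]$.
\item The vertex $y_0$ must meet both $I(x_1)$ and $I(x_1'')$. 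It must also host the pendant $x_0$, whose interval misses $I(y)$, $I(y_1)$ and $I(y_1')$. Since all intervals are unit, the only room is to take $I(x_0)$ as the open interval $(0,1)$ inside $I(x_1)$.
\item With $I(x_0)=(0,1)$, the right neighbours $y_1,y_1'$ must touch $I(x_1)$ only at the point $1$ without meeting $I(x_0)$. This forces left endpoint $1$, closed on the left, and distinguishes $I(y_1)=[1,2]$ from $I(y_1')=[1,2)$ by their right ends.
\item The same argument on the left side forces $I(y_0)=[-\frac12,\frac12]$. Then $x_1''$ must meet $I(y_0)$ and $I(y)$ but not $I(x_0)$'s neighbours, which yields the stated $I(x_1'')$, and $y_1'''$ is forced as well.
\end{enumerate}

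\textbf{Propagation.} On the right arm we induct on $n$. The path $x_n,y_n$ with pendants $x_n'$, $y_n'$ is placed with step $2$. At each step, the requirement that the pendant is adjacent to exactly one of two consecutive path vertices forces a half-open interval sharing a closed endpoint. This rigidity reproduces $I(x_n)=[2n-2,2n-1]$, $I(y_n)=[2n-1,2n]$, and so on; any deviation makes some pendant meet two path intervals or none. The left arm is the mirror image, shifted by $\frac12$ because the core was anchored at $y_0$. Finally, the terminal vertex $u$ on the right has two neighbours $v',v''$ that are not otherwise distinguished. Both must lie to the right of $I(u)$ without meeting the previous path interval, which leaves exactly the two options $[j+1,j+2]$ or $[j+1,j+2)$. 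The terminal claw $z,w,w',z',z''$ on the left is fixed the same way.

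\textbf{Main obstacle and part (b).} The hardest step will be making the initial forcing at $x_0$ and $y_0$ rigorous, that is, showing no alternative placement of open and half-open types exists. We would first try to reduce this to the uniqueness of the $\mathcal{U}^\pm$-representation of $H_1$ and $H_2$ from \cite{ds}, since the core contains such subgraphs, and then check the mixed types case by case.

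For (b), we note that $K_{i,j}$ differs from $K'_{i,j}$ by having the terminal structure at $u$ replaced by a second claw, so it contains $K'_{i,j}$-type induced subgraphs at both ends. In the unique representation, the two special vertices $v',v''$ of the terminal claw must receive intervals with identical left behaviour. In $K_{i,j}$, however, they must also be non-adjacent to distinct further vertices. With intervals $[j+1,j+2]$ and $[j+1,j+2)$ this is only possible on the right end. Applying the same argument from the other end, via reflection, forces the required asymmetry on the opposite side, which is incompatible with the fixed left-end structure. This contradiction shows $K_{i,j}$ has no $\mathcal{U}$-representation. Minimality, that is, deleting any vertex gives a $\mathcal{U}$-bigraph, would be exhibited by explicit representations modified from (a).
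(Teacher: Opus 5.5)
Your plan is essentially the paper's: anchor the induced bipartite net on $\{x_0,y_0,x_1,y_1,x_1'',y_1'',y\}$ via the uniqueness of its representation, propagate along both arms with each pendant forced to a half-open copy, and read (b) off the rigidity of $I(v')$ and $I(v'')$. Two caveats. First, your self-contained steps (1)--(3) are circular: step (2) needs the endpoints of $I(y),I(y_1),I(y_1')$, which step (3) then derives from $I(x_0)=(0,1)$. So you do need that uniqueness result, for all $\mathcal{U}$-representations and not only $\mathcal{U}^\pm$ ones, as the paper uses it. Second, (b) needs no reflection: since $I(v'),I(v'')\in\{[j+1,j+2],[j+1,j+2)\}$, the sets $I(v')\setminus I(v'')$ and $I(v'')\setminus I(v')$ are subsets of $\{j+2\}$ that cannot both be nonempty, so $v'$ and $v''$ cannot both have the private neighbours that $K_{i,j}$ requires.
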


\begin{proof}
(a) One can easily observe that in each of the bigraph $K'_{i,j}$ the 
vertices $x_0, y_0, x_1, y_1, x_1'', y_1'', y$ induce $H_2$ as a 
subgraph. Without loss of generality we take the following representation of it (which is a trivial modifications of the first  representation
of $H_2$), where $I(x_1)=[0,1]$, $I(x_0)=(0,1)$, $I(y)=[-1,0]$, 
$I(y_1)=[1,2]$, $I(y_0)=\left[-\frac{1}{2},\frac{1}{2}\right]$, $I(x_1'')=[-\frac{3}{2},-\frac{1}{2}]$, $I(y'')=[-\frac{5}{2},-\frac{3}{2}]$.\par
Next, consider the path $x_1, y_1, x_2, y_2, \ldots x_n, y_n, \ldots u$. This 
path has an interval representation : $I(x_1)=[0,1]$, $I(y_1)=[1,2]$, 
$I(x_2)=[2,3]$, $I(y_2)=[3,4]$, and by induction we have $I(x_n)=[2(n-1),2(n-1)+1]$ 
and $I(y_n)=[2(n-1)+1,2(n-1)+2]=[2n-1,2n]$. If $j=2r\ (r\geq 1)$ then 
$u=x_{r+1}$ and $I(u)=I(x_{r+1})=[2r,2r+1]=[j,j+1]$; again if $j=2r+1\ (r\geq 1)$, 
then $u=y_{r+1}$ and $I(u)=[2r+1,2r+2]=[j,j+1]$. The intervals $I(v')$ and 
$I(v'')$ may be taken as $[j+1,j+2]$ or $[j+1,j+2)$.\par
Since $y_n'$ is adjacent to $x_n$ only, $I(y_n')=[2n-1,2n)\ (n\geq 1)$. 
Similarly $x_n'$ is adjacent to $y_n$ only, $I(x_n')=[2n,2n+1)\ (n\geq 1)$. 
Next, consider the path $x_1'', y_1'', x_2'', y_2'', \ldots x_n'', y_n'', \ldots\
z$. This path has the interval representation given by 
$I(x_1'')=\left[-\frac{3}{2},-\frac{1}{2}\right]$, $I(y_1'')=\left[-\frac{5}{2},-\frac{3}{2}\right]$ and by induction $I(x_n'')=\left[-(2n-1)-\frac{1}{2},-2(n-1)-\frac{1}{2}\right]$ and $I(y_n'')=\left[-2n-\frac{1}{2},
-(2n-1)-\frac{1}{2}\right]$. If $i=2r\ (r\geq 1)$ then $z=x_{r+1}$ and $I(z)=
\left[-(2(r+1)-1)-\frac{1}{2}\right.,$ $\left.
-(2r+1-1)-\frac{1}{2}\right] =\left[-2r-\frac{3}{2}, -2r-
\frac{1}{2}\right]=\left[-i-\frac{3}{2},-i-\frac{1}{2}\right]$. \newline  Again if 
$i=2r+1\ (r\geq 1)$, then $z=y_{r+1}''$ and $I(z)=\left[-2(r+1)-\frac{1}{2}, -(2r+2-1)-\frac{1}{2}\right]=\left[-2r-\frac{5}{2}, -2r-\frac{3}{2}\right]=\left[-i-\frac{3}{2},-i-\frac{1}{2}\right]$, $I(w)=\left[-i-\frac{5}{2},-i-\frac{3}{2}\right]$, $I(z')=\left[-i-\frac{7}{2},-i-\frac{5}{2}\right]$, $I(w')=I(z'')=\left(-i-\frac{3}{2},-i-\frac{1}{2}\right)$.\par
Next, $y_1'''$ is adjacent to $x_1''$ only, we take $I(y_1''')$ as the 
open-closed copy of $I(x_1'')$. Also $I(y_2''')$ is the open-closed copy of 
$I(x_2'')$; therefore by induction we have $I(y_n''')=\left(-(2n-1)-\frac{1}{2}\right. ,$ $\left.
-2(n-1)-\frac{1}{2}\right]$. Similarly, as $x_n'''$ is adjacent to $y_n''$ 
only, $I(x_n''')=\left(-2n-\frac{1}{2}\right.,$ $\left.-(2n-1)-\frac{1}{2}\right]$. Thus 
the proof of (a) is complete.\par 
(b) From the above representation of $K'_{i,j}$, it implies that 
$K_{i,j}$ is a  forbidden induced subgraph of $\mathcal{U}$-bigraphs.
\end{proof}
\begin{figure}[H]
\centering
\begin{tikzpicture}[scale=1]
{\tikzstyle{every node}=[circle, draw, fill=black, inner sep=0pt, minimum width= 2.5pt]

\draw (0:0)node(q)[label={[label distance=1pt]-90:${\scriptstyle y_2''}$}]{}--++(90:1)node(r)[label={[label distance=1pt]170:${\scriptstyle x_2''}$}]{}--++(90:1)node(s)[label={[label distance=1pt]10:${\scriptstyle y_1''}$}]{}--++
(90:1)node[label={[label distance=1pt]180:${\scriptstyle x_1''}$}]{} (r)--++(0:1)node(t)[label={[label distance=1pt]90:${\scriptstyle x_3''}$}]{}--++(0:1)node[label={[label distance=1pt]0:${\scriptstyle y_4''}$}]{}
(q)--(t)--(s);

\draw (r)--++(0:.5)node(r')[label={[label distance=.1pt]-90:${\scriptstyle y_3''}$}]{};

\path [name path=v1] (q) --++ (-45:-2.2);
\path [name path=w1] (s) --++ (45:-2.2);
\path [name intersections={of=v1 and w1, by=u}];
\draw (q)--(u)node[label={[label distance=1pt]-90:${\scriptstyle x_1}$}]{}--(s) (u)--++(0:-1)node(u1)[label={[label distance=1pt]-90:${\scriptstyle y_1}$}]{}
(u)--++(-45:-1)node[label={[label distance=1pt]90:${\scriptstyle y_1'}$}]{} ;
\coordinate (n1) at ([yshift=-.5cm]q);

\coordinate (a) at ([xshift=10cm]q);

\draw (a)node[label={[label distance=1pt]-90:${\scriptstyle y_2''}$}]{}--++(90:1)node(b)[label={[label distance=1pt]170:${\scriptstyle x_2''}$}]{}--++(90:1)node(c)[label={[label distance=1pt]10:${\scriptstyle y_1''}$}]{}--++
(90:1)node[label={[label distance=1pt]180:${\scriptstyle x_1''}$}]{} (b)--++(0:1)node(d)[label={[label distance=1pt]90:${\scriptstyle x_3''}$}]{}--++(0:1)node[label={[label distance=1pt]0:${\scriptstyle y_4''}$}]{}
(a)--(d)--(c);

\draw (b)--++(0:.5)node(b')[label={[label distance=.1pt]-90:${\scriptstyle y_3''}$}]{};

\path [name path=v1] (a) --++ (-45:-2.2);
\path [name path=w1] (c) --++ (45:-2.2);
\path [name intersections={of=v1 and w1, by=e}];
\draw (a)--(e)node[label={[label distance=1pt]-90:${\scriptstyle x_1}$}]{}--(c) (e)--++(0:-1)node[label={[label distance=1pt]-90:${\scriptstyle y_1}$}](e1){}--++(0:-1)node(e2)[label={[label distance=1pt]-90:${\scriptstyle x_2}$}]{}--++(30:-1)node[label={[label distance=1pt]180:${\scriptstyle y}$}]{}
(e)--++(-45:-1)node[label={[label distance=1pt]90:${\scriptstyle y_1'}$}]{} (e1)--++(-45:-1)node[label={[label distance=1pt]90:${\scriptstyle x_1'}$}]{} (e2)--++(-30:-1)node[label={[label distance=1pt]180:${\scriptstyle y'}$}]{} ;
\coordinate (n2) at ([yshift=-.5cm]a);

\coordinate (a1) at ([xshift=5cm,yshift=-4.5cm]q);
\draw (a1)node[label={[label distance=1pt]-90:${\scriptstyle y_2''}$}]{}--++(90:1)node(b)[label={[label distance=.5pt]170:${\scriptstyle x_2''}$}]{}--++(90:1)node(c)[label={[label distance=1pt]10:${\scriptstyle y_1''}$}]{}--++
(90:1)node[label={[label distance=1pt]180:${\scriptstyle x_1''}$}]{} (b)--++(0:1)node(d)[label={[label distance=1pt]90:${\scriptstyle x_3''}$}]{}--++(0:1)node[label={[label distance=1pt]0:${\scriptstyle y_4''}$}]{}
(a1)--(d)--(c);

\draw (b)--++(0:.5)node(b')[label={[label distance=1pt]-90:${\scriptstyle y_3''}$}]{};

\path [name path=v1] (a1) --++ (-45:-2.2);
\path [name path=w1] (c) --++ (45:-2.2);
\path [name intersections={of=v1 and w1, by=e}];
\draw (a1)--(e)node[label={[label distance=1pt]-90:${\scriptstyle x_1}$}]{}--(c) (e)--++(0:-1)node(e1)[label={[label distance=1pt]-90:${\scriptstyle y_1}$}]{}--++(0:-1)node(e2)[label={[label distance=1pt]-90:${\scriptstyle x_2}$}]{};
\draw[loosely dotted, thick] (e2)--++(0:-2)node(e3){};

\draw (e3)node{}--++(0:-1)node(e4)[label={[label distance=1pt]90:${\scriptstyle u}$}]{} (e)--++(-45:-1)node[label={[label distance=1pt]90:${\scriptstyle y_1'}$}]{} (e1)--++(-45:-1)node[label={[label distance=1pt]90:${\scriptstyle x_1'}$}]{} (e2)--++(-45:-1)node[label={[label distance=1pt]90:${\scriptstyle y_2'}$}]{} (e3)--++(-45:-1)node{} (e4)--++(-30:-1)node[label={[label distance=1pt]180:${\scriptstyle v''}$}]{} (e4)--++(30:-1)node[label={[label distance=1pt]180:${\scriptstyle v'}$}]{};
\coordinate (n3) at ([yshift=-.5cm]a1);
}

\node[label=below:{$P_{1}'$}] at (n1){};
\node[label=below:{$P_{3}'$}] at (n2){};
\node[label=below:{$P_{i}'$}] at (n3){};
\draw[decoration={brace,mirror,raise=11pt,amplitude=7pt},decorate]
  (e4) -- node[below=15pt] {${\scriptstyle i}$ {\footnotesize vertices}} (e);
\end{tikzpicture}
\caption{The class $\mathcal{P}'$ of bigraphs}
\end{figure}

\begin{lem}
Let $i\in \mathbb{N}$
\begin{enumerate}[(a)]

\item A $\mathcal{U}$-intersection representation $I:V(P'_{i})\to \mathcal{U}$ of $P'_{i}$, where $I(V(P'_{i}))$ consists of the following intervals
\begin{itemize}
\item $I(x_1)=\left[0,1\right]$, $I(x''_1)=(0,1)$, $I(y_1'')=\left[-\frac{1}{2},\frac{1}{2}\right]$, $I(y_1)=[1,2]$, $I(y_2'')=[-1,0]$, 
$I(x_3'')=[-1,0]$, $I(y_4'')=[-2,-1]$
\item $I(x_2'')=[-\frac{1}{2},\frac{1}{2}]$, $I(y_3'')=[-\frac{3}{2},-\frac{1}{2}]$
\item $I(x_n)=[2(n-1),2n-1]$, $I(y_n)=[2n-1,2n]$, $(n\geq 1)$
\item $I(x_n')=\left[2n,2n+1\right)$, $I(y_n')=\left[2n-1,2n\right)$, $(n\geq 1)$
\item $I(u)=\left[i-1,i\right]$, $I(v')$ and $I(v'')$ are $[i,i+1]$ or 
$[i,i+1)$.
\end{itemize}
is unique up to trivial modifications.
\item $P_{i}$ is a  forbidden induced subgraph of $\mathcal{U}$-bigraphs.
\end{enumerate}

\end{lem}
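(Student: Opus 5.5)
We follow the pattern of the proof of Lemma~4 for $\mathcal{K}'_{i,j}$: anchor the representation on a rigid induced subgraph whose $\mathcal{U}$-representation is already known to be unique up to trivial modifications, then propagate along the pendant path.

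\emph{Step 1: fix the tent.} In $P'_i$ the vertices $x_1,x_1'',y_1'',y_1,y_2'',x_2'',x_3''$ together with their adjacencies induce a copy of $H_2$ or $H_3$. Going by the figure, $x_1$ is adjacent to $y_1''$, $y_2''$ and $y_1$, and $x_3''$ is adjacent to $y_1''$ and $y_2''$, so the induced subgraph contains a tent-like cycle with pendants. By the uniqueness of the $\mathcal{U}^\pm$-representations of $H_1,H_2,H_3$ noted after Theorem~3, we may fix it after a trivial modification as
\[
I(x_1)=[0,1],\quad I(x_1'')=(0,1),\quad I(y_1'')=\left[-\tfrac12,\tfrac12\right],\quad I(y_1)=[1,2],\quad I(y_2'')=[-1,0].
\]
The key observation is that the open interval $I(x_1'')$ must be a copy of the closed $I(x_1)$. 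This is forced because $x_1''$ has the same neighbourhood as $x_1$ within the tent except for $y_1$, and $y_1$ touches $I(x_1)$ only at the endpoint $1$.

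\emph{Step 2: place the remaining left-side vertices.} Next we place $x_2'',y_3'',x_3'',y_4''$. Since $x_2''$ is adjacent to $y_1''$ and $y_2''$ but not to $y_1$, and since all intervals are unit length, $I(x_2'')$ must meet $[-1,0]$ and $[-\frac12,\frac12]$ while avoiding $[1,2]$. Together with the required nonadjacency to $x_1$'s other neighbours, this forces $I(x_2'')=[-\frac12,\frac12]$ up to the choice of endpoint type. Then:
\begin{itemize}
\item $y_3''$, pendant at $x_2''$ and nonadjacent to $x_1,x_1''$, is forced to $[-\frac32,-\frac12]$.
\item $x_3''$, adjacent to $y_1''$ and $y_2''$ but not to $y_1$, is forced to $[-1,0]$. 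Here we must check that it is not adjacent to $y_3''$; if it were, we would instead have to use a half-open type.
\item $y_4''$ is then forced to $[-2,-1]$.
\end{itemize}
I expect this step to be the main obstacle. One has to rule out every alternative choice of open or closed ends and every shift, and this requires a careful case check using the non-edges between the left cluster and $y_1$, $x_1'$.

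\emph{Step 3: propagate along the path.} For the path $x_1,y_1,x_2,y_2,\dots,u$ we argue inductively exactly as in Lemma~4. Each new vertex on the path must meet its predecessor, must avoid everything two steps back, and has unit length. Hence $I(x_n)=[2n-2,2n-1]$ and $I(y_n)=[2n-1,2n]$, and therefore $I(u)=[i-1,i]$. Each pendant $y_n'$ (respectively $x_n'$) is adjacent only to $x_n$ (respectively $y_n$), which forces the half-open intervals $[2n-1,2n)$ and $[2n,2n+1)$. Finally, $v'$ and $v''$ are adjacent only to $u$ and lie beyond it, so their intervals are $[i,i+1]$ or $[i,i+1)$.

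\emph{Step 4: part (b).} In the unique representation the two special vertices $v',v''$ of $P'_i$ receive intervals of the form $[i,i+1]$ or $[i,i+1)$ and are therefore copies. The bigraph $P_i$, in which the special vertices must be distinguished (one of them gets an extra neighbour, or the two are forced to be non-copies), therefore admits no $\mathcal{U}$-representation. Hence $P_i$ is a forbidden induced subgraph of $\mathcal{U}$-bigraphs.
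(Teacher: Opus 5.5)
Your overall plan (anchor on a rigid subgraph, propagate along the path $x_1,y_1,\dots,u$, then use the forced intervals of $v',v''$) is the paper's. However, Step~1 anchors on the wrong vertex set, and this is where the argument fails as written. In $P'_i$ each of $x_1$, $x_2''$, $x_3''$ is adjacent to both $y_1''$ and $y_2''$. So $\{x_1,x_1'',y_1'',y_1,y_2'',x_2'',x_3''\}$ induces a graph containing an induced $K_{2,3}$, in which $x_2''$ and $x_3''$ are copies. Neither the bipartite net nor the bipartite tent contains $K_{2,3}$, so this subgraph is neither $H_2$ nor $H_3$. The uniqueness of their $\mathcal{U}^\pm$-representations therefore cannot be invoked, and your normalization of $I(x_1),I(x_1''),I(y_1''),I(y_1),I(y_2'')$ is unjustified.

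The correct anchor, used in the paper, is $\{x_1'',y_1'',x_1,y_1,y_2'',x_3'',y_4''\}$. This is the $4$-cycle $y_1''x_1y_2''x_3''$ with pendants $x_1''$, $y_1$, $y_4''$ at $y_1''$, $x_1$, $x_3''$ respectively, i.e.\ exactly $H_2$. Its unique representation fixes $x_3''$ and $y_4''$ together with the rest. The case analysis you flag as the main obstacle in Step~2 then reduces to placing $x_2''$ (meeting $I(y_1'')$ and $I(y_2'')$, avoiding $I(y_1)$ and $I(y_4'')$) and then $y_3''$.

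Two readings of the graph also need correcting. First, $x_1''$ is adjacent only to $y_1''$. The interval $I(x_1'')=(0,1)$ is open at both ends because $I(y_2'')$ and $I(y_1)$ touch $I(x_1)$ at $0$ and at $1$; it is not only because of $y_1$, as your ``key observation'' claims. Second, $y_3''$ is not a pendant of $x_2''$: it is adjacent to both $x_2''$ and $x_3''$. Indeed, the stated intervals give $[-\frac{3}{2},-\frac{1}{2}]\cap[-1,0]=[-1,-\frac{1}{2}]$, so the check you propose, that $x_3''$ misses $y_3''$, goes the wrong way.

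In Step~4, $P_i$ is $P'_i$ with a pendant attached to \emph{each} of $v'$ and $v''$. The obstruction is that the forced intervals are equal or nested, $[i,i+1)\subseteq[i,i+1]$, so the special vertex with the smaller interval cannot have a private neighbour. If only one special vertex received an extra neighbour, the graph would still be representable: place that neighbour at $[i+1,i+2]$, touching the closed copy $[i,i+1]$ only. So your description of $P_i$ (``one of them gets an extra neighbour, or the two are forced to be non-copies'') does not by itself yield a forbidden graph; you need both private neighbours and the nesting argument.
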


\begin{proof}
(a) One can observe that in each of the bigraph $P_i'$, the 
vertices $x_1'', y_1'', x_1, y_1, y''_2, x''_3, y''_4$ induce $H_2$ 
as a subgraph. We consider the following representation of $H_2$ (which is a 
trivial modification of the first representation of Fig.~3), where $I(x''_1)=
(0,1)$, $I(x_1)=[0,1]$, $I(y_2'')=[-1,0]$, $I(x_3'')=[-1,0]$, $I(y_1)=[1,2]$, $I(y_1'')=\left[ -\frac{1}{2},\frac{1}{2}\right]$, $I(y_4'')=[-2,-1]$.
Since $x_2''$ is adjacent to $y_1''$ and $y_2''$ we can take $I(x_2'')=\left[
-\frac{1}{2},\frac{1}{2}\right]$.
Next, we consider the path $x_1, y_1, x_2, y_2, \ldots x_n, y_n, \ldots u$. It has 
the representation $I(x_1)=[0,1]$, $I(y_1)=[1,2]$, $I(x_2)=[2,3]$, 
$I(y_2)=[3,4]$ and by induction $I(x_n)=[2(n-1),2n-1]$ and 
$I(y_n)=[2n-1,2n]$. Now, if $i$ is even, say $i=2r\ (r\geq 1)$ then 
$u=y_{r}$ and $I(u)=I(y_{r})=[2r-1,2r]=[i-1,i]$. In the 
other case, if $i=2r+1\ (r\geq 1)$, $u=x_{r+1}$ and $I(x_{r+1})=[2(r+1-1),2(r+1)-1]=[2r,2r+1]
=[i-1,i]$. Thus the intervals $I(v')$ and $I(v'')$ are $[i,i+1]$ or 
$[i,i+1)$.\par
Next, consider the vertex $y_n'$, it is adjacent to $x_n$ only. So, we take 
$I(y_n')$ as the closed-open copy of $I(y_n)$, i.e. $I(y_n')=[2n-1,2n)$. 
Similarly, $I(x_n')$ is the closed-open copy of $I(x_{n+1})$, i.e. $I(x_n')=
[2n,2n+1)$. Therefore the proof of (a) is complete.\par
(b) From the above representation of $P_i'$, it follows that $P_i$ is a  
forbidden induced subgraph of $\mathcal{U}$-bigraphs.
\end{proof}
\begin{lem}
The bigraph $F_1$ has unique $\mathcal{U}$-intersection up to trivial modifications.
\end{lem}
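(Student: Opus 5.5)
The plan is to reduce everything to the rigidity of the bipartite claw. Write $F_1$ as a centre $a$ with neighbours $y_1,y_2,y_3,y_4$, where each $y_k$ ($k=1,2,3$) has a private neighbour $x_k$ and $y_4$ is pendant at $a$. Then $\{a,y_1,y_2,y_3,x_1,x_2,x_3\}$ induces $H_1$, and $F_1=H_1+y_4$. I will first show that the $H_1$-part of any $\mathcal{U}$-representation is forced up to trivial modifications. Then I will show that the extra pendant leaves no freedom for $I(y_4)$.

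\textbf{Step 1 (the claw inside $F_1$).} After translation and reflection, take $l(I(a))=0$ and $r(I(a))=1$. Among $y_1,y_2,y_3$, order the intervals by left endpoint, breaking ties by endpoint type, and relabel so that $I(y_2)$ is the middle one.
\begin{itemize}
\item The private neighbour $x_2$ must meet $I(y_2)$ but avoid $I(y_1)$ and $I(y_3)$. All intervals have unit length and all three $I(y_k)$ meet $I(a)$. Hence $I(y_1)$ must lie as far left as possible and $I(y_3)$ as far right as possible, meeting $I(a)$ in single points.
\item This forces $I(y_1)$ and $I(y_3)$ to have unit translates $[-1,0]$ and $[1,2]$, with $0,1\in I(a)$ and $I(y_1)$, $I(y_3)$ closed at those points.
\item Then $I(y_2)$ must have endpoints $0,1$, and $I(x_2)$ must have endpoints $0,1$ and be open, as in Fig.~2.
\end{itemize}
The private neighbours $x_1,x_3$ then sit outside, touching only $y_1$ and $y_3$ respectively; this fixes them up to trivial modifications. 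The argument is the one from \cite{ds} for the $\mathcal{U}^\pm$ case. The new work is to check, endpoint type by endpoint type, that the half-open types $\mathcal{U}^{+-}$ and $\mathcal{U}^{-+}$ do not create any alternative configuration.

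\textbf{Step 2 (placing $y_4$).} The vertex $y_4$ must meet $I(a)=[0,1]$ and avoid $I(x_1)$, $I(x_2)$ and $I(x_3)$.
\begin{itemize}
\item Since $I(x_2)=(0,1)$, the interval $I(y_4)$ can meet $I(a)$ only at the point $0$ or the point $1$.
\item If it meets $I(a)$ only at $0$, it must be essentially $[-1,0]$, closed at $0$. It then meets $I(x_1)$, because $x_1$ touches $I(y_1)=[-1,0]$ from the left. The point $1$ is symmetric, via $x_3$.
\end{itemize}
So placing $y_4$ should lead to a contradiction unless the Step 1 normalization is redone in the other configuration. I therefore expect the real content of the lemma to be the following: in any $\mathcal{U}$-representation, the pendant forces a specific one of the trivially equivalent claw representations, and within that one $I(y_4)$ is forced. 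I would recheck the adjacency structure against Fig.~5, in particular which vertex is the centre, and then redo Step 1 with all four neighbours of $a$ at once. Concretely, I would sort $I(y_1),\dots,I(y_4)$ and show that the two extreme neighbours must be closed and touch $I(a)$ at an endpoint, while the two middle neighbours must share the endpoints $0,1$ with $I(a)$ and have the types fixed by their private neighbours.

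The main obstacle is this exhaustive endpoint-type case analysis. Allowing half-open unit intervals doubles the possibilities at each touching point, and at each point one has to rule out, or identify as a trivial modification, every choice of open or closed ends. I would organise the cases by which of the two points $0$ and $1$ belong to $I(a)$, giving four types for $I(a)$. Reflection reduces these to three cases, and within each case the forced types propagate outward along the legs.
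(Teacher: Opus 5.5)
Your plan (rigidity of the claw, then place the pendant) is the paper's plan, but Step~2 contains a false inference, and it is exactly where the half-open types matter. You correctly get $I(y_4)\cap I(a)\subseteq\{0,1\}$ from $I(x_2)=(0,1)$. You then claim that if $I(y_4)$ meets $I(a)$ at $0$ it must be ``essentially $[-1,0]$'' and therefore meet $I(x_1)$. This is false. The interval $(-1,0]\in\mathcal{U}^{-+}$ meets $I(a)$ at $0$ and misses $I(x_1)$ whenever $I(x_1)$ meets $I(y_1)=[-1,0]$ only in the point $-1$, e.g.\ $I(x_1)=[-2,-1]$.

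This is precisely the representation the paper draws for $F_1$. After translating the centre to $[0,1]$, the leg is $[-1,0]$, its private neighbour is $[-2,-1]$, and the pendant is the open-closed copy $(-1,0]$ of the leg. So the contradiction you anticipate does not exist. As written, your Step~2 would show that $F_1$ has no $\mathcal{U}$-representation at all, contradicting the lemma.

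The repair you sketch is also wrong. Any interval with the same endpoints $0,1$ as $I(a)$ contains $(0,1)=I(x_2)$. So the pendant can never be one of two ``middle'' neighbours sharing those endpoints. In every representation, two of the four neighbours of $a$ sit at the same end of $I(a)$.

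The missing step is short. After normalizing the claw and reflecting, assume $I(y_4)$ meets $I(a)$ at $0$.
\begin{enumerate}[(i)]
\item Since $I(y_4)$ avoids $(0,1)$, it has $\sup=0$, contains $0$, and has $\inf=-1$. The same holds for $I(y_1)$, so the two intervals agree on $(-1,0]$ and can differ only at $-1$.
\item The vertex $x_1$ must meet $I(y_1)$ but not $I(y_4)$. This forces $-1\in I(y_1)\cap I(x_1)$, $-1\notin I(y_4)$, and $I(x_1)\cap(-1,0]=\emptyset$.
\item Hence $I(y_1)=[-1,0]$, $I(y_4)=(-1,0]$, and $I(x_1)\in\{[-2,-1],(-2,-1]\}$.
\end{enumerate}
Placing the pendant at the point $1$ instead gives the mirror image, namely the closed-open copy of the right leg. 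This is the paper's remark that the closed-open alternative yields the same representation after trivial modifications. Combined with the cited uniqueness of the $\mathcal{U}$-representation of $H_1$, this is the whole of the paper's proof.
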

\begin{figure}[H]
\centering
\begin{tikzpicture}[scale=.40]
{\tikzstyle{every node}=[circle, draw, fill=black, inner sep=0pt, minimum width= 2.5pt]

\draw (0:0)node(a)[label={[label distance=1pt]-180:${\scriptstyle x_2}$}]{}--++(45:1)node[label={[label distance= 1pt]180:${\scriptstyle y_1}$}]{}--++(45:1)node(b)[label={[label distance= 1pt]0:${\scriptstyle x_1}$}]{}--++(90:1)node[label={[label distance= 1pt]0:${\scriptstyle y_0}$}]{}--++(90:1)node[label={[label distance= 1pt]0:${\scriptstyle x_0}$}]{} (b)--++(-45:1)node[label={[label distance= 1pt]0:${\scriptstyle y_2}$}]{}--++(-45:1)node[label={[label distance= 1pt]0:${\scriptstyle x_3}$}]{} (b)--++(-90:1)node[label={[label distance= 1pt]-90:${\scriptstyle y_3}$}]{};
}
\pgfmathsetmacro{\b}{1.5}
\pgfmathsetmacro{\c}{0.6}
\pgfmathsetmacro{\d}{0.13}
\draw \foreach \p/\q/\r in {6*\b/3*\c/x_1,6*\b/2*\c/y_0,5*\b/1*\c/y_1,7*\b/1*\c/y_2,8*\b/4*\c/x_3,4*\b/4*\c/x_2}
{
(\p,\q)--(\p+\b,\q)
(\p+\d,\q-\d)--(\p,\q-\d)--(\p,\q+\d)--(\p+\d,\q+\d)
(\p+\b-\d,\q-\d)--(\p+\b,\q-\d)--(\p+\b,\q+\d)--(\p+\b-\d,\q+\d)
(\p+0.5*\b,\q+0.25) node{${\scriptstyle \r}$}
};
\draw \foreach \p/\q/\r in {5*\b/0*\c/y_3}
{
(\p,\q)--(\p+\b,\q)
(\p-\d,\q-\d)--(\p,\q-\d)--(\p,\q+\d)--(\p-\d,\q+\d)
(\p+\b-\d,\q-\d)--(\p+\b,\q-\d)--(\p+\b,\q+\d)--(\p+\b-\d,\q+\d)
(\p+0.5*\b,\q+0.25) node{${\scriptstyle \r}$}
};
\draw \foreach \p/\q/\r in {6*\b/4*\c/x_0}
{
(\p,\q)--(\p+\b,\q)
(\p-\d,\q-\d)--(\p,\q-\d)--(\p,\q+\d)--(\p-\d,\q+\d)
(\p+\b+\d,\q-\d)--(\p+\b,\q-\d)--(\p+\b,\q+\d)--(\p+\b+\d,\q+\d)
(\p+0.5*\b,\q+0.25) node{${\scriptstyle \r}$}
};

\end{tikzpicture}
\caption{The graph $F_1$ and its $\mathcal{U}$-representation.}\label{f13}
\end{figure}
\begin{proof}
The proof follows from the Proposition 4 of \cite{ds}. The bigraph $F_1-y_3$ is the bipartite claw $(H_1)$. Thus from that Proposition, $H_1$ has a unique $\mathcal{U}$-intersection as shown in~\cref{f2}. Since the vertex $y_3$ is adjacent to $x_1$ only, so we can take $I(y_3)$ as in~\cref{f13} to get the $\mathcal{U}$-intersection representation of $F_1$. Again $I(y_3)$ can be taken as closed-open copy of $I(y_2)$ and make some trivial modifications to get the same representation of $F_1$ as earlier. This completes the proof.
\end{proof}
\begin{par}
\textbf{Observation.} From the representation of $F_1$ (\cref{f13}) it follows that the bigraph $B_1$ has no $\mathcal{U}$-representation. \end{par} 
\begin{figure}[H]
\centering
\begin{tikzpicture}[scale=.50]
{\tikzstyle{every node}=[circle, draw, fill=black, inner sep=0pt, minimum width= 2.5pt]

\draw (0:0)node(j)[label={[label distance=1pt]0:${\scriptstyle y}$}]{}--++(90:1)node(k)[label={[label distance=1pt]-135:${\scriptstyle x}$}]{}--++(90:1)node[label={[label distance=1pt]50:${\scriptstyle y_0}$}]{}--++(90:1)node[label={[label distance=1pt]180:${\scriptstyle x_0}$}]{} (k)--++(0:-1)node(k2)[label={[label distance=1pt]-90:${\scriptstyle y_1''}$}]{}--++(0:-1)node(l)[label={[label distance=1pt]-90:${\scriptstyle x_1''}$}]{}--++(0:-1)node[label={[label distance=1pt]180:${\scriptstyle y_2'}$}]{} (k2)--++(-50:-1)node[label={[label distance=1pt]100:${\scriptstyle x_1'''}$}]{}--++(0:-1)node[label={[label distance=1pt]180:${\scriptstyle y_2'''}$}]{}--(l) (k)--++(0:1)node(m2)[label={[label distance=1pt]-90:${\scriptstyle y_1}$}]{}--++(0:1)node(n)[label={[label distance=1pt]-90:${\scriptstyle x_1}$}]{}--++(0:1)node[label={[label distance=1pt]0:${\scriptstyle y_2}$}]{} (m2)--++(50:1)node[label={[label distance=1pt]50:${\scriptstyle x_1'}$}]{}--++(0:1)node[label={[label distance=1pt]0:${\scriptstyle y_2'}$}]{}--(n) (m2)--++(90:1)node[label={[label distance=1pt]90:${\scriptstyle x'}$}]{};
\coordinate (n3) at ([yshift=-.01cm]j);

\coordinate (a) at ([xshift=10cm]j);
\draw (a)node[label={[label distance=1pt]0:${\scriptstyle y}$}]{}--++(90:1)node(b)[label={[label distance=1pt]-135:${\scriptstyle x}$}]{}--++(90:1)node[label={[label distance=1pt]50:${\scriptstyle y_0}$}]{}--++(90:1)node[label={[label distance=1pt]180:${\scriptstyle x_0}$}]{} (b)--++(0:-1)node(b1)[label={[label distance=1pt]-90:${\scriptstyle y_1''}$}]{}--++(0:-1)node[label={[label distance=1pt]-90:${\scriptstyle x_1''}$}](c){}--++(0:-1)node(c1)[label={[label distance=1pt]-90:${\scriptstyle y_2''}$}]{}--++(0:-1)node[label={[label distance=1pt]180:${\scriptstyle x_2''}$}]{} (c)--++(-50:-1)node[label={[label distance=1pt]100:${\scriptstyle y_2'''}$}]{}--++(0:-1)node[label={[label distance=1pt]180:${\scriptstyle x_2'''}$}]{}--(c1) (b1)--++(90:1)node[label={[label distance=.5pt]130:${\scriptstyle x_1'''}$}]{} (b)--++(0:1)node(m2)[label={[label distance=1pt]-90:${\scriptstyle y_1}$}]{}--++(0:1)node(n)[label={[label distance=1pt]-90:${\scriptstyle x_1}$}]{}--++(0:1)node[label={[label distance=1pt]0:${\scriptstyle y_2}$}]{} (m2)--++(50:1)node[label={[label distance=1pt]50:${\scriptstyle x_1'}$}]{}--++(0:1)node[label={[label distance=1pt]0:${\scriptstyle y_2'}$}]{}--(n) (m2)--++(90:1)node[label={[label distance=1pt]90:${\scriptstyle x'}$}]{};
\coordinate (n1) at ([yshift=-.01cm]a);

\coordinate (p) at ([xshift=4.5cm,yshift=-5cm]j);
\draw (p)node[label={[label distance=1pt]0:${\scriptstyle y}$}]{}--++(90:1)node(q)[label={[label distance=1pt]-135:${\scriptstyle x}$}]{}--++(90:1)node[label={[label distance=1pt]50:${\scriptstyle y_0}$}]{}--++(90:1)node[label={[label distance=1pt]0:${\scriptstyle x_0}$}]{} (q)--++(0:-1)node(q1)[label={[label distance=1pt]-90:${\scriptstyle y_1''}$}]{}--++(0:-1)node(q2)[label={[label distance=1pt]-90:${\scriptstyle x_1''}$}]{} (q1)--++(90:1)node[label={[label distance=1pt]90:${\scriptstyle x_1'''}$}]{} (q2)--++(90:1)node[label={[label distance=1pt]90:${\scriptstyle y_1'''}$}]{} (q)--++(0:1)node(s1)[label={[label distance=1pt]-90:${\scriptstyle y_1}$}]{}--++(0:1)node(s2)[label={[label distance=1pt]-90:${\scriptstyle x_1}$}]{} (s2)--++(90:1)node[label={[label distance=1pt]90:${\scriptstyle y_1'}$}]{} (s1)--++(90:1)node[label={[label distance=1pt]80:${\scriptstyle x_1'}$}]{} ;

\draw[loosely dotted, thick] (q2)--++(0:-2)node(q3){} (s2)--++(0:2)node(s3){};

\draw (q3)--++(0:-1)node(q4)[label={[label distance=1pt]-90:${\scriptstyle z}$}]{}--++(0:-1)node(r)[label={[label distance=1pt]-90:${\scriptstyle w_0}$}]{}--++(0:-1)node[label={[label distance=1pt]180:${\scriptstyle z_0}$}]{} (q4)--++(-50:-1)node[label={[label distance=1pt]90:${\scriptstyle w_0'}$}]{}--++(0:-1)node[label={[label distance=1pt]180:${\scriptstyle z_0'}$}]{}--(r) (q3)--++(90:1)node{}  (s3)--++(0:1)node(s4)[label={[label distance=1pt]-90:${\scriptstyle u}$}]{}--++(0:1)node(t)[label={[label distance=1pt]-90:${\scriptstyle v_0}$}]{}--++(0:1)node[label={[label distance=1pt]0:${\scriptstyle u_0}$}]{} (s4)--++(50:1)node[label={[label distance=1pt]50:${\scriptstyle v_0'}$}]{}--++(0:1)node[label={[label distance=1pt]0:${\scriptstyle u_0'}$}]{}--(t) (s3)--++(90:1)node{} (s4)--++(90:1)node[label={[label distance=1pt]90:${\scriptstyle v'}$}]{};
\coordinate (n4) at ([yshift=-.01cm]p);
}
\node[label=below:{$T'_{2,1}$}] at (n1){};
\node[label=below:{$T'_{1,1}$}] at (n3){};
\node[label=below:{$T'_{i,j}$}] at (n4){};
\draw[decoration={brace,mirror,raise=12pt,amplitude=7pt},decorate]
  (q4) -- node[below=15pt] {${\scriptstyle i}$ {\footnotesize vertices}} (q1);
\draw[decoration={brace,mirror,raise=12pt,amplitude=7pt},decorate]
  (s1) -- node[below=15pt] {${\scriptstyle j}$ {\footnotesize vertices}} (s4);
\end{tikzpicture}
\caption{The class $\mathcal{T}'$ of bigraphs}
\end{figure}

\begin{lem}
Let $i,j\in \mathbb{N}$
\begin{enumerate}[(a)]

\item A $\mathcal{U}$-intersection representation $I:V(T'_{i,j})\to \mathcal{U}$ of $T'_{i,j}$, where $I(V(T'_{i,j}))$ consists of the following intervals
\begin{itemize}
\item $I(x)=\left[0,1\right]$, $I(x_0)=(0,1)$, $I(y_0)=\left(0,1\right)$, $I(y)=(-1,0]$
\item $I(y_n)=[2n-1,2n]$, $I(x_n)=[2n,2n+1]$, $I(x_n')=[2n-1,2n)$, $I(y_n')=[2n,2n+1)$ $(n\geq 1)$
\item $I(y_n'')=[-2(n-1)-1,-2(n-1)]$, $I(x_n'')=[-2(n-1)-2,-2(n-1)-1]$, $(n\geq 1)$
\item $I(x_n''')=\left(-2(n-1)-2,-2(n-1)-1\right]$, $I(y_n''')=\left(-2n-1,-2n\right]$, $(n\geq 1)$
\item $I(u)=\left[j,j+1\right]$, $I(v_0)=\left[j+1,j+2\right]$, 
$I(v_0')=[j+1,j+2)$, $I(u_0)=[j+2,j+3]$, $I(u_0')=(j+1,j+2)$, 
$I(v')=(j,j+1]$
\item $I(z)=[-i,-i+1]$, $I(w_0)=[-i-1,-i]$, $I(w_0')=(-i-1,-i]$, 
$I(z_0')=(-i-1,-i)$, $I(z_0)=[-i-2,-i-1]$

\end{itemize}
is unique up to trivial modifications.
\item $T_{i,j}$ is a  forbidden induced subgraph for the class of $\mathcal{U}$-bigraphs.
\end{enumerate}
\end{lem}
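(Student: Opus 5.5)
We follow the template of the proofs for $K'_{i,j}$ and $P'_i$: find a small rigid core inside $T'_{i,j}$ whose $\mathcal{U}$-representation is already known to be unique, then extend it step by step along the two long paths.

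\textbf{The core.} The natural core is the bigraph induced by $x_0,y_0,x,y$ together with the first few vertices $y_1,x_1,x_1',y_1'',x_1'',x_1'''$ on both sides of $x$. We would show it contains a copy of $F_6$ (or of $H_1$, $H_2$, $H_3$ with pendant vertices). For these the $\mathcal{U}$-representation is unique up to trivial modifications: for $H_1,H_2,H_3$ by the result of \cite{ds}, for $F_1$ by the lemma above, and for $F_6$ the representation is Fig.~\ref{f11}. After a translation, reflection and relabelling we fix $I(x)=[0,1]$. The vertices $x_0$ and $y_0$ are non-adjacent to each other's side except through $x$ and $y$, so they are forced to be the open interval $(0,1)$. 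Then $y$ must meet $x$ only at the endpoint $0$ while avoiding the open intervals, which gives $I(y)=(-1,0]$.

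\textbf{The right-hand path.} We propagate along $x,y_1,x_1,y_1,\dots,u$ by induction. Each consecutive interval must touch the previous one only at an endpoint, because its partner vertices ($x_n'$, $y_n'$) are pendant and must avoid the next path vertex. This forces
\[
I(y_n)=[2n-1,2n],\qquad I(x_n)=[2n,2n+1].
\]
It also forces the half-open copies $I(x_n')=[2n-1,2n)$ and $I(y_n')=[2n,2n+1)$. The step to check is that unit length together with the non-adjacencies leaves no freedom except shifting by endpoint types, i.e. a trivial modification. The parity bookkeeping for $u$ is done as in the proof for $K'_{i,j}$, giving $I(u)=[j,j+1]$.

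\textbf{The left-hand path.} The path $y_1'',x_1'',\dots,z$ is handled symmetrically with the open-closed copies $x_n''',y_n'''$, giving $I(z)=[-i,-i+1]$.

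\textbf{The end gadgets.} The end gadgets at $u$ (namely $v_0,v_0',u_0,u_0',v'$) and at $z$ (namely $w_0,w_0',z_0,z_0'$) are 4-cycles with pendants. Their intervals are determined by the same endpoint-touching argument, together with the fact that the open interval $u_0'$ must be a proper open copy strictly inside $[j+1,j+2]$.

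\textbf{Part (b).} This follows as in the earlier lemmas. $T_{i,j}$ is obtained from $T'_{i,j}$ by adding the vertex or vertices that the forced representation cannot accommodate. At the end of the propagation, a vertex would have to meet an interval at an endpoint that is already used, open on both sides, by incompatible neighbours. Concretely, the gadget at $u$ would need a closed endpoint at $j+1$ that is open in some forced interval; this gives a contradiction.

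\textbf{Main obstacle.} The hard step is the uniqueness argument for the core and the first inductive step. One must rule out configurations where an interval overlaps its neighbour in more than a point, since such overlaps could be unrelated to trivial modifications. We would handle this by contradiction: if $I(y_1)$ overlapped $I(x)$ in a nondegenerate set, then unit length would force $I(x_1')$ or $I(y)$ to intersect a non-neighbour.
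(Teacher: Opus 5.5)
Your plan for (a) matches the paper's: a rigid core at $x$, then propagation along the two paths, then the end gadgets. Part (b), however, has a genuine gap. You never say what $T_{i,j}$ adds to $T'_{i,j}$, and the contradiction you place at $u$ does not exist. The representation in (a) already realizes the whole $u$-end, pendant $v'$ included. $I(v')=(j,j+1]$ meets $I(u)$ and no other interval of $u$'s partite set: it stops at $j+1$, where $I(u_0')=(j+1,j+2)$ is open. The pendants to the right of $x$ are closed-open copies of their own path vertex ($I(x_n')=[2n-1,2n)$, $I(y_n')=[2n,2n+1)$), so none of them reaches into $(j,j+1]$.

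In fact $T_{i,j}$ is $T'_{i,j}$ plus one vertex $w'$ adjacent to $z$ only, and the obstruction is at the left end. $I(w')$ must meet $I(z)=[-i,-i+1]$ and miss every other interval of $z$'s partite set. These include $I(z_0')=(-i-1,-i)$ and the interval $(-i,-i+1]$ of the pendant at the neighbour of $z$ on the path towards $x$. That pendant interval is $I(y)=(-1,0]$ for $i=1$, $I(x_1''')=(-2,-1]$ for $i=2$, $I(y_1''')=(-3,-2]$ for $i=3$, and so on. Hence $I(w')\cap(-i-1,-i+1]=\{-i\}$, which no unit interval satisfies.

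This argument works only because the two sides of $x$ are forced to behave differently, and your core paragraph glosses over exactly this point. The usable core is the induced $F_1$ on $x,x_0,x_1,x_1'',y,y_0,y_1,y_1''$, whose representation is unique by the lemma on $F_1$. $F_6$ does not even occur in $T'_{i,j}$, since it contains a $K_{2,3}$. The leg $y_0x_0$ is the one inside $I(x)$ because the other two legs continue into paths; the end vertex of an inner leg has its interval inside $I(x)$, so it can have no further neighbour. The avoidance conditions then leave two options, $I(y)=(-1,0]$ or $I(y)=[1,2)$.

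This choice is not a free reflection. On the side where $y$ lies, each pendant is forced to be the half-open copy of the next path vertex outwards. At the end vertex this pushes all its other neighbours onto the outer endpoint. On the $u$-side it would give $I(v_0),I(v_0'),I(v')\in\{[j+1,j+2],[j+1,j+2)\}$. Since $u_0$ must separate $v_0$ from $v_0'$, this forces $I(v_0')=[j+1,j+2)\subseteq I(v')$, so $u_0'$ could not meet $v_0'$ without meeting $v'$. Hence in $T'_{i,j}$ the vertex $y$ must lie on the $z$-side, where $z$ carries no pendant. In $T_{i,j}$ both end vertices carry a pendant, so either placement of $y$ leads to a contradiction.

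For the same reason, in your path step the endpoints are pinned by the open interval inside $I(x)$ together with the half-open $I(y)$. It is not the pendants avoiding the next path vertex that does this.
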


\begin{proof}
(a) Now, one can observe that in each of the bigraph $T_{i,j}'$ the 
vertices $x, x_0, x_1, x_1'', y, y_0, y_1, y_1''$ induce $F_1$ as a 
subgraph. Without loss of generality we consider the following representation of $F_1$, where $I(x)=[0,1]$, $I(x_0)=(0,1)$, $I(y_0)=(0,1)$, 
$I(y_1)=[1,2]$, $I(x_1)=[2,3]$, $I(y_1'')=[-1,0]$, $I(x_1'')=[-2,-1]$, 
$I(y)=(-1,0]$. Next, consider the path $y_1, x_1, y_2, x_2, \ldots y_n, x_n, 
\ldots u, v_0, u_0$. Intervals corresponding to the vertices $y_1,x_1,y_2$ 
and $x_2$ are respectively $I(y_1)=[1,2]$, $I(x_1)=[2,3]$, $I(y_2)=[3,4]$, 
$I(x_2)=[4,5]$ and by induction $I(y_n)=[2n-1,2n]$ and $I(x_n)=[2n,2n+1]\ 
(n\geq 1)$. If $j=2r$, then $u=x_r$ and $I(x_r)=[2r,2r+1]=[j,j+1]$ and if 
$j=2r+1$, then $u=y_{r+1}$ and $I(y_{r+1})=[2(r+1)-1,2(r+1)]=[2r+1,2r+2]=
[j,j+1]$. Thus $I(u)=[j,j+1]$, $I(v_0)=[j+1,j+2]$ and $I(u_0)=[j+2,j+3]$. 
Also, we take $I(v_0')=[j+1,j+2)$, $I(u_0')=(j+1,j+2)$ and as $v'$ is 
adjacent to $u$ only, $I(v')$ can be taken as $(j,j+1]$.\par
Next, consider the path $x, y_1'', x_1'', y_2'', x_2'', \ldots y_n'', x_n'',
\ldots z, w_0, z_0$. This path has the representation given by $I(x)=[0,1]$, 
$I(y_1'')=[-1,0]$, $I(x_1'')=[-2,-1]$, $I(y_2'')=[-3,-2]$, $I(x_2'')=
[-4,-3]$ and by induction $I(y_n'')=[-2(n-1)-1,-2(n-1)]$, $I(x_n'')=
[-2(n-1)-2, -2(n-1)-1]$. Now, if $i=2n$, then $I(z)=I(x_n'')=[-2(n-1)-2,
-2(n-1)-1]=[-2n,-2n+1]=[-i,-i+1]$, and if $i=2n+1$, then $I(z)=I(y_{n+1}'')
=[-2n-1,-2n]=[-i,-i+1]$; thus we have $I(z)=[-i,-i+1]$ in any case. So, 
$I(w_0)=[-i-1,-i]$ and $I(z_0)=[-i-2,-i-1]$. Again, $w_0'$ is adjacent to 
$z$ only in this path, we take $I(w_0')=(-i-1,-i]$, and then $I(z_0')=(-i-1,-i)$. 
Now, $x_n'''$ is adjacent to $y_n''$ only, we have to take $I(x_n''')=
(-2(n-1)-2,-2(n-1)-1]$, and similarly $I(y_n''')=(-2n-1,-2n]$. Thus the
proof of (a) is complete.\par
(b) $T_{i,j}$ is obtained from $T_{i,j}'$ by adjoining a vertex $w'$ to 
$z$ only. Now the above $\mathcal{U}$-intersection representation of 
$T'_{i,j}$, it implies that $T_{i,j}$ is a  forbidden induced 
subgraph of $\mathcal{U}$-bigraph.
\end{proof}

\begin{figure}[H]
\centering
\begin{tikzpicture}[scale=.55]
{\tikzstyle{every node}=[circle, draw, fill=black,
                        inner sep=0pt, minimum width=3pt]

\draw (0:0)node[label={[label distance=1pt]180:${\scriptstyle x_1''}$}]{}--++(-90:1)node(a)[label={[label distance=1pt]180:${\scriptstyle y_1''}$}]{}--++(0:1)node(b)[label={[label distance=1pt]90:${\scriptstyle x_2''}$}]{}--++(0:1)node(c)[label={[label distance=1pt]90:${\scriptstyle y_2''}$}]{}--++(-90:1)
node[label={[label distance=.5pt]-10:${\scriptstyle x_4''}$}]{}--++(180:1)node(d)[label={[label distance=1pt]10:${\scriptstyle y_5''}$}]{}--++(180:1)node(e)[label={[label distance=1pt]10:${\scriptstyle x_3''}$}]{}--++(a) (b)--(d) (e)--++(0:-1)node[label={[label distance=1pt]-90:${\scriptstyle y_3''}$}]{};
\path [name path=b1] (b) --++ (-110:1.7);
\path [name path=e1] (e) --++ (-45:.8);
\path [name intersections={of=b1 and e1, by=f}];

\path [name path=c1] (c) --++ (-55:3);
\path [name path=d1] (d) --++ (-20:3);
\path [name intersections={of=c1 and d1, by=g}];
\draw (b)--(f)node[label={[label distance=1pt]-135:${\scriptstyle y_4''}$}]{}--(e) (f)--++(-90:1)node(h)[label={[label distance=3pt]0:${\scriptstyle x_5''}$}]{} ;
\draw    (h)to[out=30,in=-80](d) (h)to[out=160,in=-120](a) (c)to[out=-20,in=100](g)node[label={[label distance=1pt]50:${\scriptstyle x_1}$}]{} (g)to[out=180,in=-50](d) (g)--++(0:1)node[label={[label distance=1pt]0:${\scriptstyle y'}$}]{} (g)--++(-90:1)node[label={[label distance=.5pt]-60:${\scriptstyle y''}$}]{};
\coordinate (n1) at ([yshift=-.4cm]h);

\coordinate (z) at ([xshift=8cm,yshift=1cm]a);
\draw (z)node[label={[label distance=1pt]180:${\scriptstyle x_1''}$}]{}--++(-90:1)node(a1)[label={[label distance=1pt]180:${\scriptstyle y_1''}$}]{}--++(0:1)node(b)[label={[label distance=1pt]90:${\scriptstyle x_2''}$}]{}--++(0:1)node(c)[label={[label distance=1pt]90:${\scriptstyle y_2''}$}]{}--++(-90:1)
node[label={[label distance=.5pt]-10:${\scriptstyle x_4''}$}]{}--++(180:1)node(d)[label={[label distance=1pt]10:${\scriptstyle y_5''}$}]{}--++(180:1)node(e)[label={[label distance=1pt]10:${\scriptstyle x_3''}$}]{}--++(a1) (b)--(d) (e)--++(0:-1)node[label={[label distance=1pt]-90:${\scriptstyle y_3''}$}]{};
\path [name path=b1] (b) --++ (-110:1.7);
\path [name path=e1] (e) --++ (-45:.8);
\path [name intersections={of=b1 and e1, by=f}];

\path [name path=c1] (c) --++ (-55:3);
\path [name path=d1] (d) --++ (-20:3);
\path [name intersections={of=c1 and d1, by=g}];
\draw (b)--(f)node[label={[label distance=1pt]-135:${\scriptstyle y_4''}$}]{}--(e) (f)--++(-90:1)node(h)[label={[label distance=3pt]0:${\scriptstyle x_5''}$}]{} ;
\draw    (h)to[out=30,in=-80](d) (h)to[out=160,in=-120](a1) (c)to[out=-20,in=100](g)node[label={[label distance=1pt]50:${\scriptstyle x_1}$}]{} (g)to[out=180,in=-50](d) (g)--++(0:1)node(i)[label={[label distance=1pt]50:${\scriptstyle y_1}$}]{}--++(0:1)node[label={[label distance=1pt]50:${\scriptstyle x'}$}]{} (g)--++(-90:1)node[label={[label distance=.5pt]-60:${\scriptstyle y_1'}$}]{} (i)--++(-90:1)node[label={[label distance=.5pt]-60:${\scriptstyle x''}$}]{};
\coordinate (n2) at ([yshift=-.4cm]h);

\coordinate (z) at ([xshift=3cm,yshift=-6cm]a);
\draw (z)node[label={[label distance=1pt]180:${\scriptstyle x_1''}$}]{}--++(-90:1)node(a)[label={[label distance=1pt]180:${\scriptstyle y_1''}$}]{}--++(0:1)node(b)[label={[label distance=1pt]90:${\scriptstyle x_2''}$}]{}--++(0:1)node(c)[label={[label distance=1pt]90:${\scriptstyle y_2''}$}]{}--++(-90:1)
node[label={[label distance=.5pt]-10:${\scriptstyle x_4''}$}]{}--++(180:1)node(d)[label={[label distance=1pt]10:${\scriptstyle y_5''}$}]{}--++(180:1)node(e)[label={[label distance=1pt]10:${\scriptstyle x_3''}$}]{}--++(a) (b)--(d) (e)--++(0:-1)node[label={[label distance=1pt]-90:${\scriptstyle y_3''}$}]{};
\path [name path=b1] (b) --++ (-110:1.7);
\path [name path=e1] (e) --++ (-45:.8);
\path [name intersections={of=b1 and e1, by=f}];

\path [name path=c1] (c) --++ (-55:3);
\path [name path=d1] (d) --++ (-20:3);
\path [name intersections={of=c1 and d1, by=g}];
\draw (b)--(f)node[label={[label distance=1pt]-135:${\scriptstyle y_4''}$}]{}--(e) (f)--++(-90:1)node(h)[label={[label distance=3pt]0:${\scriptstyle x_5''}$}]{} ;
\draw    (h)to[out=30,in=-80](d) (h)to[out=160,in=-120](a) (c)to[out=-20,in=100](g)node[label={[label distance=1pt]50:${\scriptstyle x_1}$}]{} (g)to[out=180,in=-50](d) (g)--++(0:1)node(i)[label={[label distance=1pt]50:${\scriptstyle y_1}$}]{}--++(0:1)node(j)[label={[label distance=1pt]50:${\scriptstyle x_2}$}]{}--++(-90:1)node(k)[label={[label distance=1pt]-60:${\scriptstyle y_2'}$}]{} (g)--++(-90:1)node[label={[label distance=.5pt]-60:${\scriptstyle y_1'}$}]{} (i)--++(-90:1)node[label={[label distance=.5pt]-60:${\scriptstyle x_2'}$}]{};
\draw[loosely dotted, thick] (j)--++(0:2)node(k){};
\draw (k)--++(-90:1)node{} (k)--++(0:1)node(l)[label={[label distance=1pt]50:${\scriptstyle u}$}]{}--++(0:1)node[label={[label distance=1pt]50:${\scriptstyle v'}$}]{} (l)--++(-90:1)node[label={[label distance=1pt]-60:${\scriptstyle v''}$}]{};

\coordinate (n3) at ([yshift=-.4cm]h);

}
\draw[decoration={brace,raise=13pt,amplitude=7pt},decorate]
  (g) -- node[above=18pt] {${\scriptstyle i}$ {\footnotesize vertices}} (l);

\node[label=below:{$Q_1'$}] at (n1){};
\node[label=below:{$Q_2'$}] at (n2){};
\node[label=below:{$Q_i'$}] at (n3){};

\end{tikzpicture}
\caption{The class $\mathcal{Q'}$ of bigraphs.}
\end{figure}

\begin{lem}
Let $i\in \mathbb{N}$
\begin{enumerate}[(a)]

\item A $\mathcal{U}$-intersection representation $I:V(Q'_{i})\to \mathcal{U}$ of $Q'_{i}$, where $I(V(Q'_{i}))$ consists of the following intervals
\begin{itemize}
\item $I(x''_1)=\left(-1,0\right)$, $I(y_1'')=I(x_3'')=\left[-1,0\right]$, $I(y_3'')=[-2,-1]$
\item $I(x_2'')=I(y_5'')=[0,1]$, $I(x_5'')=[0,1)=I(y_4'')$
\item $I(y_2'')=[1,2]$, $I(x_4'')=[1,2)$
\item $I(x_n)=\left[2n-1,2n\right]$, $I(y_n)=\left[2n,2n+1\right]$, $(n\geq 1)$
\item $I(y_n')=\left[2n,2n+1\right)$, $(n\geq 1)$ and $I(x_n')=\left[2n-1,2n\right)$, $(n\geq 2)$
\item $I(u)=[i,i+1]$, $I(v')$ and $I(v'')=[i+1,i+2]$ or $[i+1,i+2)$
\end{itemize}
is unique up to trivial modifications.
\item $Q_{i}$ is a  forbidden induced subgraph of $\mathcal{U}$-bigraphs.
\end{enumerate}
\end{lem}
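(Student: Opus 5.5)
I will follow the same scheme as the proofs for $K'_{i,j}$, $P'_i$ and $T'_{i,j}$. The argument has three parts: locate a rigid core whose $\mathcal{U}$-representation is already known to be unique, propagate that representation along the pendant path, and then read off the contradiction for $Q_i$.

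\textbf{The rigid core.} The first step is to identify the core inside $Q'_i$. The eleven vertices $x_1'',y_1'',x_2'',y_2'',x_3'',y_3'',x_4'',y_4'',x_5'',y_5''$ together with $x_1$ form the left block. I expect this block to contain an induced copy of the bipartite tent $H_3$ (or of the net $H_2$). One candidate is the six-cycle $y_1''x_2''y_2''x_4''y_5''x_3''$ with a pendant $x_1''$ attached at $y_1''$. Since $H_3$ has a unique $\mathcal{U}^\pm$-representation up to trivial modifications (\cref{f4}), I will fix it in the normalized form of the statement: $I(x_2'')=I(y_5'')=[0,1]$, $I(y_1'')=I(x_3'')=[-1,0]$, $I(x_1'')=(-1,0)$, $I(y_2'')=[1,2]$.

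\textbf{Forcing the remaining core vertices.} Next I use the adjacencies of the other core vertices one at a time.
\begin{itemize}
\item $x_4''$ is adjacent to $y_2''$ and $y_5''$ but not to $y_1''$ or $y_4''$. By unit length this forces an interval with left end $1$; it must be closed at $1$, and open at the right so as to avoid further neighbours. Hence $I(x_4'')=[1,2)$.
\item $x_5''$ and $y_4''$ both get the half-open interval $[0,1)$. Their neighbourhoods, in particular $x_5''$ adjacent to $y_1''$ and $y_5''$, pin both endpoints at $0$ and $1$, with the end at $1$ open.
\item $y_3''$ hangs off $x_3''$, which gives $I(y_3'')=[-2,-1]$.
\end{itemize}
Checking each of these non-adjacencies is where the unit length and the open/closed types are used essentially. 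I expect this step to be the main obstacle. For each vertex I will argue that any shift of its left endpoint by $\varepsilon\neq 0$ either creates a forbidden intersection with a closed interval at $0$, $1$ or $-1$, or destroys a required one. What remains after this elimination is only a choice of endpoint type, which is a trivial modification.

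\textbf{Propagating along the path.} Once the core is fixed, $x_1$ must meet $y_2''$ and $y_5''=[0,1]$ but not $x_4''$'s other neighbours on the left. This gives $I(x_1)=[1,2]$; it is forced to start at $1$, since it must meet $[1,2]$ and avoid $[0,1)$-type neighbours in the right way. I then proceed by induction along the path $x_1,y_1,x_2,y_2,\dots,u$, exactly as for $K'_{i,j}$. Consecutive path vertices meet only at shared closed endpoints, which yields $I(x_n)=[2n-1,2n]$ and $I(y_n)=[2n,2n+1]$, and hence $I(u)=[i,i+1]$ after a parity check. The pendant vertices $y_n'$ and $x_n'$ are then forced to be closed-open copies, $[2n,2n+1)$ and $[2n-1,2n)$ respectively. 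The two special vertices $v',v''$ adjacent only to $u$ must both be $[i+1,i+2]$ or $[i+1,i+2)$.

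\textbf{The forbidden subgraph.} For part (b), $Q_i$ is obtained from $Q'_i$ by extending the two special vertices $v',v''$ into longer pendant paths. In the unique representation, the intervals of $v'$ and $v''$ are forced to coincide at $[i+1,i+2]$ or $[i+1,i+2)$, yet in $Q_i$ they must have distinct further neighbours. The next neighbour of $v'$ would then necessarily meet $v''$ as well, which contradicts the representation. Hence $Q_i$ is a forbidden induced subgraph of $\mathcal{U}$-bigraphs.
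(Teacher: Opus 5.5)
You have the same scheme as the paper (a rigid core, then propagation along the comb, then nested special intervals), but the core you pick does not exist. The six-cycle $y_1''x_2''y_2''x_4''y_5''x_3''$ is not induced, because $x_2''y_5''$ is an edge of $Q_i'$ (both intervals are $[0,1]$ in the target representation). Your seven vertices therefore induce a domino with middle edge $x_2''y_5''$ and the pendant $x_1''$ at the corner $y_1''$. In $H_3$, by contrast, the pendant hangs at an endpoint of the chord. Your graph is an ordinary unit interval bigraph, for instance $I(x_2'')=I(y_5'')=[0,1]$, $I(y_1'')=[-0.9,0.1]$, $I(x_3'')=[-0.8,0.2]$, $I(y_2'')=[0.9,1.9]$, $I(x_4'')=[0.8,1.8]$, $I(x_1'')=[-1.9,-0.9]$. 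These are all closed and can be perturbed freely, so the subgraph forces neither $I(x_1'')=(-1,0)$ nor $I(x_2'')=I(y_5'')$ nor anything else. In fact $Q_i'$ contains no induced $H_3$ at all. Your later steps, which say that $x_4''$ and $x_1$ must begin with the closed endpoint $1$, rely on $I(y_5'')=[0,1]$ and $I(y_4'')=[0,1)$. As written, the uniqueness claim, and with it part (b), has no foundation.

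The rigid core is the net, as in the paper. The vertices $x_1'',y_1'',x_2'',y_2'',x_3'',y_3'',y_4''$ induce $H_2$: the $4$-cycle $y_1''x_2''y_4''x_3''$ with pendants $x_1''$, $y_2''$, $y_3''$ at $y_1''$, $x_2''$, $x_3''$. Its representation is unique up to trivial modifications and can be normalized to $I(x_1'')=(-1,0)$, $I(y_1'')=I(x_3'')=[-1,0]$, $I(x_2'')=[0,1]$, $I(y_2'')=[1,2]$.

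Next, $y_4''$ and $y_5''$ each meet $x_3''$ and $x_2''$ and avoid $x_1''$, so both lie in $\{[0,1],[0,1)\}$. Since $y_5''$ has the private neighbours $x_4''$ and $x_1$, this forces $I(y_5'')=[0,1]$ and $I(y_4'')=[0,1)$. After that, $x_4''$ and $x_1$ go as you say.

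The same nesting mechanism drives the induction, which your sketch leaves implicit. The next path vertex and the pendant of the current one must both start at the same closed point, so they are nested. The private neighbour of that next path vertex must then start at its closed right end. So the pendants are what make each step rigid, rather than something read off afterwards. Applied to $v'$ and $v''$, which need not coincide but are nested, the mechanism gives (b): the private neighbour of the smaller of the two intervals also meets the larger one.
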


\begin{proof}
(a) One can observe	 that in each of the bigraph $Q_i$, the 
vertices $x_1'', x_2'', x_3'', y_1'', y_2'', y_3''$ and $y_4''$  induce 
$H_2$ as a subgraph. Without loss of generality from Proposition~5 of \cite{ds} we 
take the following intervals for these vertices: $I(y_1'')=I(x_3'')=[-1,0]$, $I(y_3'')=[-2,-1]$ or $(-2,-1]$, $I(x_2'')=[0,1]$, $I(x_1'')=(-1,0)$, 
$I(y_2'')=[1,2]$ and $I(y_4'')=[0,1)$. Since $y_5''$ is adjacent to $x_2''$ and 
$x_3''$, we take $I(y_5'')=[0,1]$. Now, $x_5''$ is adjacent to $y_1''$, $y_4''$ and 
$y_5''$, $I(x_5'')$ can be taken as $[0,1)$. Again, since $x_4''$ is adjacent to 
$y_2''$ and $y_5''$ only, we take $I(x_4'')=[1,2)$. Also, $x_1$ is adjacent 
to $y_2''$ and $y_5''$ and to other vertices, we take $I(x_1)=[1,2]$. Next, we take the interval representation of the path $x_1, y_1, x_2, y_2, \ldots x_n, y_n, \ldots u$; where $I(x_1)=[1,2]$, $I(y_1)=[2,3]$, $I(x_2)=[3,4]$, $I(y_2)=[4,5]$. By induction, $I(x_n)=[2n-1,2n]$ and $I(y_n)=[2n,2n+1]$. Now, if 
$i=2r$, then $u=y_r$. So $I(u)=I(y_r)=[2r,2r+1]=[i,i+1]$. Again, if $i=2r+1$, 
then $u=x_{r+1}$. Thus $I(u)=I(x_{r+1})=[2r+2-1,2r+2]=[2r+1,2r+2]=[i,i+1]$. 
Since $v$ and $v'$ are adjacent to $u$, we take $I(v)=I(v')=[i+1,i+2]$ or 
$[i+1,i+2)$. Again, since $y_n'$ is adjacent to $x_n$ only, we take $I(y_n')=[2n,2n+1)$, $(n\geq 1)$. Similarly, as $x_n'$ is adjacent to $y_{n-1}$ only for 
$n\geq 2$, we take $I(x_n')=[2n-1,2n)$, $(n\geq 2)$. Hence the proof 
of (a) is complete.\par
(b) From the above representation of $Q_i'$, it implies that $Q_i$ is a  
forbidden induced subgraphs for $\mathcal{U}$-bigraphs.
\end{proof}

\begin{figure}[H]
\centering
\begin{tikzpicture}[scale=.50]
{\tikzstyle{every node}=[circle, draw, fill=black, inner sep=0pt, minimum width= 2.5pt]

\draw (0:0)node(a)[label={[label distance=1pt]-90:${\scriptstyle x_0}$}]{}--++(0:1)node[label={[label distance=1pt]-90:${\scriptstyle y_0}$}]{}--++(0:1)node(b)[label={[label distance=1pt]-90:${\scriptstyle x}$}]{}--++(90:1)node[label={[label distance=1pt]180:${\scriptstyle y'}$}]{}--++(90:1)node[label={[label distance=1pt]180:${\scriptstyle x'}$}]{} (b)--++(60:1)node(d)[label={[label distance=1pt]90:${\scriptstyle y_1'}$}]{} (b)--++(0:2)node(c)[label={[label distance=1pt]-90:${\scriptstyle y_1}$}]{}--++(120:1)node(e)[label={[label distance=1pt]90:${\scriptstyle x_1''}$}]{}--(d) (c)--++(0:1)node(f)[label={[label distance=1pt]-90:${\scriptstyle x_1}$}]{} (c)--++(50:1)node[label={[label distance=1pt]50:${\scriptstyle x_1'}$}]{};
\coordinate (n2) at ([yshift=-.2cm]b);

\coordinate (g) at ([xshift=8cm]a);
\draw (g)node[label={[label distance=1pt]-90:${\scriptstyle x_0}$}]{}--++(0:1)node[label={[label distance=1pt]-90:${\scriptstyle y_0}$}]{}--++(0:1)node(h)[label={[label distance=1pt]-90:${\scriptstyle x}$}]{}--++(90:1)node[label={[label distance=1pt]180:${\scriptstyle y'}$}]{}--++(90:1)node[label={[label distance=1pt]180:${\scriptstyle x'}$}]{} (h)--++(60:1)node(j)[label={[label distance=1pt]90:${\scriptstyle y_1'}$}]{} (h)--++(0:2)node(i)[label={[label distance=1pt]-90:${\scriptstyle y_1}$}]{}--++(0:1)node(l)[label={[label distance=1pt]-90:${\scriptstyle x_1}$}]{} (i)--++(90:1)node[label={[label distance=1pt]70:${\scriptstyle x_1'}$}]{} (i)--++(120:1)node(k)[label={[label distance=1pt]90:${\scriptstyle x_1''}$}]{}--(j) (l)--++(0:1)node(m)[label={[label distance=1pt]-90:${\scriptstyle y_2}$}]{} (l)--++(50:1)node[label={[label distance=1pt]50:${\scriptstyle y_2'}$}]{};
\coordinate (n3) at ([yshift=-.2cm]i);

\coordinate (p) at ([xshift=3.5cm,yshift=-5cm]a);
\draw (p)node[label={[label distance=1pt]-90:${\scriptstyle x_0}$}]{}--++(0:1)node[label={[label distance=1pt]-90:${\scriptstyle y_0}$}]{}--++(0:1)node(h1)[label={[label distance=1pt]-90:${\scriptstyle x}$}]{}--++(90:1)node[label={[label distance=1pt]180:${\scriptstyle y'}$}]{}--++(90:1)node[label={[label distance=1pt]180:${\scriptstyle x'}$}]{} (h1)--++(60:1)node(j1)[label={[label distance=1pt]90:${\scriptstyle y_1'}$}]{} (h1)--++(0:2)node(i1)[label={[label distance=1pt]-90:${\scriptstyle y_1}$}]{}--++(0:1)node(l1)[label={[label distance=1pt]-90:${\scriptstyle x_1}$}]{}--++(90:1)node[label={[label distance=1pt]70:${\scriptstyle y_2'}$}]{} (i1)--++(90:1)node[label={[label distance=1pt]70:${\scriptstyle x_1'}$}]{} (i1)--++(120:1)node(k1)[label={[label distance=1pt]90:${\scriptstyle x_1''}$}]{}--(j1) ;

\draw[loosely dotted, thick] (l1)--++(0:2)node(m1){} ;

\draw (m1)--++(0:1)node(n1)[label={[label distance=1pt]-90:${\scriptstyle u}$}]{}--++(0:1)node(o1)[label={[label distance=1pt]-90:${\scriptstyle v'}$}]{} (n1)--++(50:1)node[label={[label distance=1pt]50:${\scriptstyle v''}$}]{} (m1)--++(90:1)node{};
}
\coordinate (n4) at ([yshift=-1.5cm]i1);

\node[label=below:{$S'_1$}] at (n2){};
\node[label=below:{$S'_2$}] at (n3){};
\node[label=below:{$S'_i$}] at (n4){};
\draw[decoration={brace,mirror,raise=12pt,amplitude=7pt},decorate]
  (i1) -- node[below=15pt] {${\scriptstyle i}$ {\footnotesize vertices}} (n1);
\end{tikzpicture}
\caption{The class $\mathcal{S}'$ of bigraphs}
\end{figure}
 
\begin{lem}
Let $i\in \mathbb{N}$
\begin{enumerate}[(a)]

\item A $\mathcal{U}$-intersection representation $I:V(S'_i)\to \mathcal{U}$ of $S'_i$, where $I(V(S'_i))$ consists of the following intervals
\begin{itemize}
\item $I(x)=\left[0,1\right]$, $I(y')=I(x')=(0,1)$, $I(y_1)=\left[1,2\right]$, $I(x_1)=[2,3]$
\item $I(y_0)=[-1,0]$, $I(x_0)=[-2,-1]$ or $(-2,-1]$
\item $I(x_n)=[2n,2n+1]$, $I(y_n)=[2n-1,2n]$, $(n\geq 1)$
\item $I(y_n')=\left[2n-1,2n\right)$, $I(x_n')=\left[2n,2n+1\right)$, $(n\geq 1)$, $I(x_1'')=(1,2)$
\item $I(u)=\left[i,i+1\right]$, $I(v')=\left[i+1,i+2\right]$, 
$I(v'')=[i+1,i+2)$ 

\end{itemize}
is unique up to trivial modifications.
\item $S_i$ is a  forbidden induced subgraph of $\mathcal{U}$-bigraphs.
\end{enumerate}
\end{lem}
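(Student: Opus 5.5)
We follow the pattern of the preceding lemmas for $K'_{i,j}$, $P'_i$, $T'_{i,j}$ and $Q'_i$: first locate a rigid core whose $\mathcal{U}$-representation is already known to be unique up to trivial modifications, then propagate that rigidity along the path to $u$.

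In $S'_i$ the natural core is the subgraph induced by $x_0,y_0,x,y',x',y_1,y_1',x_1''$ together with $x_1$. The key observation is that $x'$, $y'$ hang from $x$ as a pendant path of length two. Since $y'$ is adjacent only to $x$, and $x'$ only to $y'$, both must be open copies of $I(x)$. Indeed, if $I(y')$ differed from $I(x)$, the unit interval $I(x')$ would meet $I(y')$ and miss $I(x)$; together with $y_0$ on the left and $y_1$ on the right this would give an $F_1$-type or $B_1$-type obstruction, which is excluded by the uniqueness of the $F_1$ representation (Lemma on $F_1$ and the subsequent Observation).

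So, after a translation, we fix $I(x)=[0,1]$ and $I(y')=I(x')=(0,1)$. Then $y_0$ and $y_1$ must touch $I(x)$ exactly at its endpoints $0$ and $1$, so that they avoid the open interval $I(x')$. This gives $I(y_0)=[-1,0]$ and $I(y_1)=[1,2]$, up to reflection. The vertex $x_0$ is adjacent to $y_0$ but not to $x$, which forces $I(x_0)=[-2,-1]$ or $(-2,-1]$. Next, $y_1'$ is adjacent to $x$ and $x_1''$ but not to $x'$. Combined with the fact that $x_1''$ is adjacent to $y_1,y_1'$ but not to $y'$, this forces $I(y_1')=[1,2)$ and $I(x_1'')=(1,2)$; these are exactly the half-open and open copies of $I(y_1)$ in the statement. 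The main obstacle is making these forced choices rigorous, that is, showing that every alternative endpoint type either creates a forbidden adjacency or collapses to the stated configuration after a trivial modification. I would do this by a case check on whether $I(y_1)$ and $I(x)$ share the point $1$ closed or via a half-open end.

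With the core fixed, I would then proceed inductively along the path $y_1,x_1,y_2,x_2,\ldots,u$. Each consecutive pair must meet in exactly one point, because each vertex carries a pendant copy ($x_n'$ or $y_{n+1}'$) which must be the closed-open copy of the previous interval. This is the same inductive step as in the lemmas for $P'_i$ and $Q'_i$. It yields $I(y_n)=[2n-1,2n]$, $I(x_n)=[2n,2n+1]$, $I(y_n')=[2n-1,2n)$ and $I(x_n')=[2n,2n+1)$. Checking the parity of $i$ then gives $I(u)=[i,i+1]$ in either case. The two special vertices $v',v''$ adjacent only to $u$ are then forced to $[i+1,i+2]$ and $[i+1,i+2)$, which completes (a).

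For (b), $S_i$ is obtained from $S'_i$ by adding a further vertex adjacent only to $u$ among the special positions. In the unique representation the only unit intervals meeting $I(u)$ and avoiding $I(x_{\cdot})$ at the left end are copies at $[i+1,i+2]$ with closed or half-open type. Such a new vertex would therefore be forced to coincide in neighbourhood with $v'$ or $v''$, or else to meet the previous path vertex, contradicting the adjacency structure of $S_i$. Hence $S_i$ has no $\mathcal{U}$-representation.
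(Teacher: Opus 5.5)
Your part (a) follows the same route as the paper: a rigid core whose $\mathcal{U}$-representation is already known to be unique, then propagation along $y_1,x_1,y_2,\ldots,u$. Part (b), however, rests on a misreading of $S_i$, and the argument you give does not work.

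$S_i$ is not $S'_i$ with one more vertex adjacent to $u$. It is $S'_i$ together with two new vertices $w'$ and $w''$, where $w'$ is adjacent only to $v'$ and $w''$ only to $v''$, so that $u$ carries two pendant paths of length two. For instance, $S_1$ has $12$ vertices against $10$ for $S'_1$, matching the twelve vertices that induce $S_1$ in the proof of the main theorem. The graph you describe has no obstruction at all: a third vertex adjacent only to $u$ can simply be given the interval $[i+1,i+2]$. Sharing the neighbourhood $\{u\}$ with $v'$ is not forbidden, since copies are allowed. Meeting the interval of the preceding path vertex is irrelevant, since that vertex lies in the same partite set.

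The argument that works is this. Deleting $w'$ and $w''$ leaves an induced $S'_i$. By (a), $I(v')$ and $I(v'')$ must meet $I(u)=[i,i+1]$ while missing the closed-open copy $[i,i+1)$ of $I(u)$ present in the representation. That copy is $I(y_1')$ when $i=1$, and the pendant of the preceding path vertex otherwise. This forces both intervals into $\{[i+1,i+2],[i+1,i+2)\}$. Any two members of this set are equal or nested, so no interval meets one of $I(v')$, $I(v'')$ while missing the other. Hence $w'$ or $w''$ cannot be represented.

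In (a), your justification of the core makes the same partite-set slip. Since $x$ and $x'$ both lie in $X$, whether $I(x')$ misses $I(x)$ constrains nothing, and the vague $F_1$/$B_1$ obstruction establishes nothing as written. Moreover, $I(y')$ need not be open and $I(x_1'')$ need not be $(1,2)$; these hold only up to trivial modifications.

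Instead, cite the known uniqueness directly, as the paper does. The vertices $x_0,y_0,x,y',x',y_1,x_1$ induce $H_1$; adding $y_1'$ gives an induced $F_1$, so your $F_1$ citation is also fine if applied directly. This yields $I(x)=[0,1]$, $I(y_0)=[-1,0]$, $I(y_1)=[1,2]$ and $I(x')=(0,1)$. The anchor for your induction then follows in three steps:
\begin{enumerate}[(i)]
\item $I(y_1')$ meets $I(x)$ but misses $I(x')$, so it touches $I(x)$ only at an endpoint.
\item $x_1''$ is adjacent to $y_1'$ and $y_1$ but not to $y'$, which puts $y_1'$ on the side of $y_1$.
\item $x_1$ meets $I(y_1)$ but not $I(y_1')$, which forces $I(y_1')=[1,2)$ and $l(x_1)=2$ with a closed end.
\end{enumerate}
Note also that in the inductive step the pendant attached to a path vertex is the closed-open copy of the \emph{next} path interval, not the previous one.
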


\begin{proof}
(a) We can observe that in each of the bigraph $S_i'$ the vertices 
$x_0, y_0, x, y', x', y_1, x_1$ induce $H_1$ as a subgraph. From Proposition~4 of \cite{dsml}, $H_1$ has a unique 
$\mathcal{U}$-representation up to trivial modifications. Thus, without loss of generality we consider the representation of $H_1$, such that $I(x)=[0,1]$, $I(x')=I(y')=(0,1)$, $I(y_1)=[1,2]$, 
$I(x_1)=[2,3]$, $I(y_0)=[-1,0]$, $I(x_0)=[-2,-1]$ or $(-2,-1]$. Next, 
consider the representation of the path $y_1, x_1, y_2, x_2, \ldots y_n, x_n,
\ldots u$; where $I(y_1)=[1,2]$, $I(x_1)=[2,3]$, $I(y_2)=[3,4]$, $I(x_2)=[4,5]$.
By induction, we have $I(y_n)=[2n-1,2n]$ and $I(x_n)=[2n,2n+1],\ (n\geq 1)$. 
Now, $I(y_1')$ can be taken $[1,2)$. Similarly, as $x_1'$ is adjacent to $y_1$ only,  
we take $I(x_1')=[2,3)$ and since $y_2'$ is adjacent to $x_1$ only, 
$I(y_2')=[3,4)$. Also, by induction we have $I(y_n')=[2n-1,2n)$ and 
$I(x_n')=[2n,2n+1),\ (n\geq 1)$. Next, $x_1''$ is adjacent to $y_1$ and 
$y_1'$, we take $I(x_1'')=(2n-1,2n)$. Now, if $i=2r$, then $u=x_r$. Then 
$I(x_r)=[2r,2r+1]=[i,i+1]$. In the other case, if $i=2r+1$, then $u=y_{r+1}$ 
and $I(u)=[2(r+1)-1,2(r+1)]=[2r+1,2r+2]=[i,i+1]$. Thus in any case $I(u)=[i,i+1]$ and then $I(v')=[i+1,i+2]$.
As before we take $I(v'')=[i+1,i+2)$. Thus 
the proof of (a) is complete.
\par
(b) From the above representation of $S_i'$, it implies that $S_i$ is a  forbidden induced subgraph for the class of $\mathcal{U}$-bigraphs.
\end{proof}

\begin{figure}[H]
\centering
\begin{tikzpicture}[scale=.50]
{\tikzstyle{every node}=[circle, draw, fill=black, inner sep=0pt, minimum width= 2.5pt]

\draw (0:0)node(a)[label={[label distance=1pt]-90:${\scriptstyle x_1''}$}]{}--++(0:1)node(b)[label={[label distance=1pt]-90:${\scriptstyle y_1''}$}]{}--++(0:1)node(c)[label={[label distance=1pt]-90:${\scriptstyle x}$}]{}--++(90:1)node[label={[label distance=1pt]10:${\scriptstyle y'}$}]{}--++(90:1)node[label={[label distance=1pt]10:${\scriptstyle x'}$}]{} (c)--++(-50:-1)node[label={[label distance=1pt]130:${\scriptstyle y_2''}$}]{}--++(0:-1)node[label={[label distance=1pt]180:${\scriptstyle x''}$}]{}--(b) (c)--++(50:1)node[label={[label distance=1pt]50:${\scriptstyle y_1'}$}]{} (c)--++(0:1)node(d)[label={[label distance=1pt]-90:${\scriptstyle y_1}$}]{}--++(0:1)node(e)[label={[label distance=1pt]-90:${\scriptstyle x_1}$}]{} (d)--++(50:1)node[label={[label distance=1pt]50:${\scriptstyle x_1'}$}]{} ;
\coordinate (n2) at ([yshift=-.4cm]c);

\coordinate (a1) at ([xshift=8cm]a);
\draw (a1)node[label={[label distance=1pt]-90:${\scriptstyle x_1''}$}]{}--++(0:1)node(b1)[label={[label distance=1pt]-90:${\scriptstyle y_1''}$}]{}--++(0:1)node(c1)[label={[label distance=1pt]-90:${\scriptstyle x}$}]{}--++(90:1)node[label={[label distance=1pt]10:${\scriptstyle y'}$}]{}--++(90:1)node[label={[label distance=1pt]10:${\scriptstyle x'}$}]{} (c1)--++(-50:-1)node[label={[label distance=1pt]130:${\scriptstyle y_2''}$}]{}--++(0:-1)node[label={[label distance=1pt]180:${\scriptstyle x''}$}]{}--(b1) (c1)--++(50:1)node[label={[label distance=1pt]50:${\scriptstyle y_1'}$}]{} (c1)--++(0:1)node(d1)[label={[label distance=1pt]-90:${\scriptstyle y_1}$}]{}--++(0:1)node(e1)[label={[label distance=1pt]-90:${\scriptstyle x_1}$}]{}--++(0:1)node(f1)[label={[label distance=1pt]-90:${\scriptstyle y_2}$}]{} (d1)--++(50:1)node[label={[label distance=1pt]50:${\scriptstyle x_1'}$}]{} (e1)--++(50:1)node[label={[label distance=1pt]50:${\scriptstyle y_2'}$}]{};
\coordinate (n3) at ([yshift=-.4cm]d1);

\coordinate (a2) at ([xshift=3.5cm,yshift=-5cm]a);
\draw (a2)node[label={[label distance=1pt]-90:${\scriptstyle x_1''}$}]{}--++(0:1)node(b2)[label={[label distance=1pt]-90:${\scriptstyle y_1''}$}]{}--++(0:1)node(c2)[label={[label distance=1pt]-90:${\scriptstyle x}$}]{}--++(90:1)node[label={[label distance=1pt]10:${\scriptstyle y'}$}]{}--++(90:1)node[label={[label distance=1pt]10:${\scriptstyle x'}$}]{} (c2)--++(-50:-1)node[label={[label distance=1pt]130:${\scriptstyle y_2''}$}]{}--++(0:-1)node[label={[label distance=1pt]180:${\scriptstyle x''}$}]{}--(b2) (c2)--++(50:1)node[label={[label distance=1pt]50:${\scriptstyle y_1'}$}]{} (c2)--++(0:1)node(d2)[label={[label distance=1pt]-90:${\scriptstyle y_1}$}]{}--++(50:1)node[label={[label distance=1pt]50:${\scriptstyle x_1'}$}]{};

\draw[loosely dotted, thick] (d2)--++(0:2)node(e2){} ;

\draw (e2)--++(0:1)node(f2)[label={[label distance=1pt]-90:${\scriptstyle u}$}]{}--++(0:1)node(g2)[label={[label distance=1pt]-90:${\scriptstyle v'}$}]{} (f2)--++(50:1)node[label={[label distance=1pt]50:${\scriptstyle v''}$}]{} (e2)--++(90:1)node{};
}
\coordinate (n4) at ([yshift=-1.5cm]d2);

\node[label=below:{$R'_1$}] at (n2){};
\node[label=below:{$R'_2$}] at (n3){};
\node[label=below:{$R'_i$}] at (n4){};
\draw[decoration={brace,mirror,raise=12pt,amplitude=7pt},decorate]
  (d2) -- node[below=15pt] {${\scriptstyle i}$ {\footnotesize vertices}} (f2);
\end{tikzpicture}
\caption{The class $\mathcal{R}'$ of bigraphs}
\end{figure}

\begin{lem}
Let $i\in \mathbb{N}$
\begin{enumerate}[(a)]

\item A $\mathcal{U}$-intersection representation $I:V(R'_i)\to \mathcal{U}$ of $R'_i$, where $I(V(R'_i))$ consists of the following intervals
\begin{itemize}
\item $I(x)=\left[0,1\right]$, $I(y')=I(x')=(0,1)$, $I(y_1)=\left[1,2\right]$, $I(x_1)=[2,3]$
\item $I(y_1'')=[-1,0]$, $I(x_1'')=[-2,-1]$ or $(-2,-1]$, $I(y_2'')=(-1,0]$, $I(x'')=(-1,0)$.
\item $I(y_n)=[2n-1,2n]$, $I(x_n)=[2n,2n+1]$, $(n\geq 1)$
\item $I(y_n')=\left[2n-1,2n\right)$, $I(x_n')=\left[2n,2n+1\right)$, $(n\geq 1)$
\item $I(u)=\left[i,i+1\right]$, $I(v')=\left[i+1,i+2\right]$, 
$I(v'')=[i+1,i+2)$

\end{itemize}
is unique up to trivial modifications.
\item $R_i$ is a  forbidden induced subgraph for the class of $\mathcal{U}$-bigraphs.
\end{enumerate}
\end{lem}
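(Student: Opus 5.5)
The plan follows the same pattern as the preceding lemmas for $\mathcal{S}'$ and $\mathcal{Q}'$. The core is a forced subconfiguration with a unique representation, which is then propagated along a path of closed unit intervals.

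The first step is to locate an induced $H_1$ inside $R_i'$. I would take the vertices $x_1'', y_1'', x, y', x', y_1, x_1$. Here $x$ is the centre, the three legs are $y_1''x_1''$, $y'x'$ and $y_1x_1$, and one checks that no further edges occur among these seven vertices. By Proposition~4 of \cite{dsml}, the $\mathcal{U}$-representation of $H_1$ is unique up to trivial modifications. So, after a trivial modification, we may assume
\begin{itemize}
\item $I(x)=[0,1]$, $I(y')=I(x')=(0,1)$, $I(y_1)=[1,2]$, $I(x_1)=[2,3]$;
\item $I(y_1'')=[-1,0]$, and $I(x_1'')=[-2,-1]$ or $(-2,-1]$.
\end{itemize}
Next come the extra vertices of the $4$-cycle $x, y_2'', x'', y_1''$. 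The vertex $y_2''$ is adjacent to $x$ and $x''$ but not to $x_1''$ or $x_1$. The vertex $x''$ is adjacent to $y_1''$ and $y_2''$ but not to $x$'s other neighbours $y'$, $y_1$, $y_1'$.

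\begin{itemize}
\item For $y_2''$: its interval must meet $[0,1]$ and avoid $I(x_1)$ and $I(x_1'')$. It must also not coincide with the left part of $[0,1]$ in a way that would force an adjacency with $x'$. The candidate is $I(y_2'')=(-1,0]$, i.e.\ the open-closed copy of $I(y_1'')$.
\item For $x''$: its interval must meet $[-1,0]$ and $(-1,0]$ but miss $I(y')=(0,1)$, $I(y_1)$ and $I(y_1')$. This forces $I(x'')=(-1,0)$ or a point-equivalent shift. Indeed, $x''$ must not contain $0$, because $y_1'$ (adjacent to $x$) is placed at $[1,2)$ or near $0$ as in the $\mathcal{S}'$ lemma.
\end{itemize}

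I expect this step to be the main obstacle. I must argue that, up to trivial modifications, there is no other placement of $y_2''$ and $x''$. The alternatives all sit on the left side, at the point $0$ or at $-1$. Each of these would create a forbidden adjacency with $y'$, $x'$ or $x_1''$, and I would check these case by case, using the endpoint types.

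Then I would handle the right-hand path exactly as in the lemma for $\mathcal{S}'_i$. The path $y_1,x_1,y_2,x_2,\ldots,u$ is forced, since consecutive closed unit intervals must abut, to $I(y_n)=[2n-1,2n]$ and $I(x_n)=[2n,2n+1]$ by induction. The pendant vertices $y_n'$ and $x_n'$ are each adjacent only to one path vertex, so they become closed-open copies: $I(y_n')=[2n-1,2n)$ and $I(x_n')=[2n,2n+1)$. Checking the parity of $i$ gives $I(u)=[i,i+1]$ in both cases, and then $I(v')=[i+1,i+2]$ and $I(v'')=[i+1,i+2)$, since $v'$ and $v''$ are adjacent only to $u$ and are not copies of each other's neighbourhood constraints. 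This gives (a).

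For (b), $R_i$ is obtained from $R_i'$ by attaching one more pendant vertex to the last path vertex; in the figure the end of $R_i$ has two legs. In the forced representation, the right end of $I(u)$ already carries $v'$ as a closed interval and $v''$ as a closed-open interval, both starting at $i+1$. Any further pendant neighbour of $u$ must meet $[i,i+1]$ only on the right while avoiding $I$ of the previous path vertex. The only unit intervals that do this are $[i+1,i+2]$ and $[i+1,i+2)$. The extra pendant vertex must be non-adjacent to $v'$'s and $v''$'s neighbours, so it must receive a new interval distinct from both, which yields a contradiction. Hence $R_i$ has no $\mathcal{U}$-representation.
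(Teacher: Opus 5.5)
Your plan for (a) follows the paper's own: an induced $H_1$ with its unique $\mathcal{U}$-representation, then the $4$-cycle, then propagation along the path. Your part (b), however, does not work.

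First, $R_i$ is not $R_i'$ plus a third pendant at $u$. In $R_i$, each of the special vertices $v'$ and $v''$ gets its own private pendant neighbour; this is why $u$ ends in two legs of length two ($R_1$ has $13$ vertices, $R_1'$ has $11$). Second, your contradiction fails even for the graph you describe. A third pendant at $u$ is a false twin of $v'$ and can simply be given $[i+1,i+2]$, since copies are allowed, so that graph is representable. Also, the condition ``non-adjacent to $v'$'s and $v''$'s neighbours'' contradicts the vertex being a pendant at $u$, because $u$ is their only neighbour.

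The missing idea is nesting. The vertices $v',v''$ must meet $I(u)=[i,i+1]$ but miss $[i,i+1)$. That interval is the closed-open copy of $I(u)$ carried by the pendant of the vertex preceding $u$ (for $i=1$ it is $I(y_1')=[1,2)$). Hence $I(v'),I(v'')\in\{[i+1,i+2],[i+1,i+2)\}$, and one contains the other. The vertex with the smaller interval then has no neighbour the other lacks, which contradicts the private pendants of $R_i$. Contrary to what you write, $v',v''$ \emph{are} copies in $R_i'$; that is exactly what permits the nested placement.

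The forcing in (a) is argued the wrong way round. With $I(y')=(0,1)$, the vertex $x''$ may contain $0$ (e.g.\ $[-1,0]$), and $y_1'$ puts no constraint on $x''$. Instead, $x''$ must contain a point of $I(y_2'')=(-1,0]$, so it blocks both left positions $[-1,0]$ and $(-1,0]$ for $y_1'$, which must meet $[0,1]$ but not $I(x')=(0,1)$. This forces $I(y_1')=[1,2)$ and hence $I(y_1)=[1,2]$. Inductively, each closed-open pendant then forces the next path vertex and the next pendant to start exactly at the previous right endpoint. Consecutive closed unit intervals need not abut on their own.

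More seriously, neither you nor the paper decides which leg of the $H_1$ is the nested one. Nesting $y_1''x_1''$ leaves no room for $x''$. For $i\ge 2$, nesting $y_1x_1$ fails, because $I(x_1)=(0,1)$ would force the other neighbour of $x_1$ to meet $I(x)$. For $i=1$, however, the following is a valid $\mathcal{U}$-representation of $R_1'$: $I(x)=I(y_1)=[0,1]$, $I(x_1)=I(x_1')=(0,1)$, $I(y_1'')=[-1,0]$, $I(y_2'')=(-1,0]$, $I(x'')=(-1,0)$, $I(x_1'')=[-2,-1]$, $I(y')=[1,2]$, $I(y_1')=[1,2)$, $I(x')=[2,3]$. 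It is genuinely different from the stated one, since here $I(u)=I(x)$. So the uniqueness claim in (a) is false as stated for $i=1$. Part (b) survives there, because $v',v''$ are again forced onto the same interval $(0,1)$, but a proof of (b) must handle this configuration too.
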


\begin{proof}
(a) We can observe that in the bigraph $R_i'$ the vertices $x, y_1, x_1, y', x', y_1'', x_1''$ induce $H_1$ as a subgraph. Without loss of generality, (as before using Proposition~4 of \cite{dsml}) we take the following intervals corresponding to these vertices of $H_1$: $I(x)=[0,1]$, $I(y')=I(x')=(0,1)$, $I(y_1)=[1,2]$, 
$I(x_1)=[2,3]$, $I(y_1'')=[-1,0]$ and $I(x_1'')=[-2,-1]$ or $(-2,-1]$. Since $y_2''$ is also adjacent to $x$ we take $I(y_2'')=(-1,0]$ and $x''$ is adjacent $y_1''$ and $y_2''$ only, we take $I(x'')=(-1,0)$. 
Next, in $R_i'$ consider the path $y_1, x_1, y_2, x_2, \ldots\ y_n, x_n,
\ldots u$. The intervals corresponding to these vertices are $I(y_1)=[1,2]$, $I(x_1)=[2,3]$, $I(y_2)=[3,4]$, $I(x_2)=[4,5]$; also by induction $I(y_n)=[2n-1,2n]$ and $I(x_n)=[2n,2n+1],\ (n\geq 1)$. Now, if $i=2r$ (say), 
then $u=x_r$ and $I(x_r)=[2r,2r+1]=[i,i+1]$, and if $i=2r+1$, then $u=y_{r+1}$ and $I(y_{r+1})=[2(r+1)-1,2(r+1)]=[2r+1,2r+2]=[i,i+1]$. Thus, $I(u)=[i,i+1]$ 
and we take $I(v')=[i+1,i+2]$ and $I(v'')=[i+1,i+2)$. Hence the proof of 
(a) is complete.\par
(b) From the above representation of $R_i'$, it implies that $R_i$ is a  forbidden induced subgraph for the class of $\mathcal{U}$-bigraphs.
\end{proof}

\begin{figure}[H]
\centering
\begin{tikzpicture}[scale=.50]
{\tikzstyle{every node}=[circle, draw, fill=black, inner sep=0pt, minimum width= 2.5pt]

\draw (0:0)node(a)[label={[label distance= 1pt]120:${\scriptstyle x_2}$}]{}--++(45:1)node(b)[label={[label distance= 1pt]10:${\scriptstyle y_1}$}]
{}--++(-45:1)node(c)[label={[label distance= 1pt]60:${\scriptstyle x_3}$}]{}--++(45:-1)node(d)[label={[label distance= 1pt]-90:${\scriptstyle y_4}$}]{}--(a)--++(0:-1)node(a1)[label={[label distance= 1pt]120:${\scriptstyle y_2}$}]{}--++(0:-1)node(a2)[label={[label distance= 1pt]180:${\scriptstyle x_4}$}]{} (a)--++(30:-1)node[label={[label distance= 1pt]-45:${\scriptstyle y_2'}$}]{}--++(30:-1)node[label={[label distance= 1pt]180:${\scriptstyle x_4'}$}]{} (c)--++(0:1)node[label={[label distance= 1pt]60:${\scriptstyle y_3}$}]{}--++(0:1)node[label={[label distance= 1pt]0:${\scriptstyle x_5}$}]{} (c)--++(-30:1)node[label={[label distance= 1pt]0:${\scriptstyle y_3'}$}]{} (b)--++(90:1)node[label={[label distance= 1pt]0:${\scriptstyle x_1}$}]{};
}
\end{tikzpicture}
\caption{The bigraph $K$.}
\end{figure}
\begin{lem}
The bigraph $K$ is a  forbidden induced subgraph for $\mathcal{U}$-bigraphs.
\end{lem}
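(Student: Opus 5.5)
The plan is to reduce to the rigidity of the bipartite net, as in the proofs of the lemmas for $\mathcal{K}'$ and $\mathcal{P}'$. In $K$ the vertices $x_1,y_1,x_2,x_3,y_4,y_2,y_3$ induce $H_2$: $y_1$ carries the pendant $x_1$, $x_2y_4x_3$ closes the triangle-like part, and $y_2$, $y_3$ are the two outer arms. By the uniqueness of the $\mathcal{U}$-representation of $H_2$ up to trivial modifications (Proposition~5 of \cite{ds}, already used in the preceding lemmas), we may fix a normalized representation. Up to reflection and translation, $I(x_2)$ and $I(x_3)$ are unit intervals overlapping in a segment that contains $I(y_1)\cap I(y_4)$. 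Moreover, $I(y_2)$ is forced to meet $I(x_2)$ only at (or near) its left end, strictly avoiding $I(x_3)$. Symmetrically, $I(y_3)$ sits at the right end of $I(x_3)$. So the left end of $I(x_2)$ is pinned to the right end of $I(y_2)$ up to the open/closed type of that common endpoint.

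Next I would add $y_2'$. It is adjacent to $x_2$ only among the fixed vertices, so $I(y_2')$ must meet $I(x_2)$ and miss $I(x_3)$. Since $I(x_3)$ covers all of $I(x_2)$ except a short left portion, $I(y_2')$ is confined to the same narrow window as $I(y_2)$. Concretely, $l(I(y_2'))$ and $l(I(y_2))$ both lie in a range where the two unit intervals differ by at most an infinitesimal shift or an endpoint type. In the normalized picture, both right endpoints are essentially $l(I(x_2))$, possibly with one end open.

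The final step, which I expect to be the main obstacle, uses the private neighbours $x_4$ of $y_2$ and $x_4'$ of $y_2'$. We need a point of $I(y_2)\setminus I(y_2')$ reachable by $I(x_4)$ without entering $I(y_2')$, and symmetrically for $x_4'$. Because all intervals have unit length, the only place where $I(y_2)$ and $I(y_2')$ can differ is at their two ends. The right ends are both pinned to $l(I(x_2))$, which forces $r(I(y_2))=r(I(y_2'))$ with at most different endpoint types. So the difference must be at the left ends, and then only one of $I(y_2)$, $I(y_2')$ can protrude further left. Consequently the interval $I(x_4)$ or $I(x_4')$ that is needed to touch the non-protruding one also touches the other.

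To make this airtight, I would do a short case analysis on the endpoint types at $l(I(x_2))$ and at $l(I(y_2))$, $l(I(y_2'))$, exactly as in the proofs above. Among the possible choices $[a,a+1]$, $[a,a+1)$, $(a,a+1]$, $(a,a+1)$ with $a$ pinned, I would check that no pair admits two disjointly separating unit intervals on the left. The symmetric branch at $x_3$ (with $y_3$, $y_3'$, $x_5$) is not even needed, but it confirms that no reflection escapes the contradiction. This shows that $K$ has no $\mathcal{U}$-representation, while every proper induced subgraph is representable by the pattern of Lemma for $K'_{i,j}$, so $K$ is a forbidden induced subgraph.
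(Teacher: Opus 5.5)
\textbf{There is a genuine gap: the pinning step is false.} You claim that $I(x_3)$ covers all of $I(x_2)$ except a short left portion, and hence that $r(I(y_2))=r(I(y_2'))$ up to endpoint types. Take the normalization of Fig.~3(i) used in the preceding lemmas and in the paper's proof: $I(x_1)=(0,1)$, $I(x_3)=I(y_1)=[0,1]$, $I(x_2)=I(y_4)=[-1,0]$, $I(y_2)=[-2,-1]$, $I(y_3)=[1,2]$. Here $I(x_2)$ and $I(x_3)$ meet only in the point $0$. So $x_1,x_2,x_3$ only force $-1\le r(I(y_2'))\le 0$, and for example $I(y_2')=[-1.5,-0.5]$ is allowed.

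Shifting $I(x_2)$ to the right does not rescue the claim. The part of $I(x_2)$ lying left of $0$ must have positive length, otherwise $y_2$ could not meet $I(x_2)$ without meeting $I(x_3)$. So $I(y_2)$ and $I(y_2')$ can end at different points of that window. For instance, with $I(x_3)=[0,1]$, take $I(x_2)=[-0.1,0.9]$, $I(y_2)=[-1.1,-0.1]$, $I(y_2')=[-1.05,-0.05]$, $I(x_4)=[-2.1,-1.1]$ and $I(x_4')=[-0.08,0.92]$. This respects every adjacency and non-adjacency among $x_2,x_3,y_2,y_2',x_4,x_4'$, which are the only relations your final step uses. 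Hence no case analysis on endpoint types of $y_2,y_2',x_2$ can produce a contradiction. Even with equal right ends, a pair such as $[a-1,a)$ versus $(a-1,a]$ would let $x_4'$ touch $a$ from the right, and your case analysis looks only at left ends.

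\textbf{The missing ingredient is the vertex $y_4$.} In the normalization above, $I(y_4)$ must meet $I(x_2)$ and $I(x_3)$ but miss $I(x_1)=(0,1)$, so $I(y_4)$ is $[-1,0]$ or $(-1,0]$. Both $I(y_2)$ and $I(y_2')$ lie in $(-\infty,0)$ and meet $[-1,0]$. Since $x_4$ and $x_4'$ are not adjacent to $y_4$, they can meet $I(y_2)$ and $I(y_2')$ only in $(-\infty,-1]$. There the traces $I(y_2)\cap(-\infty,-1]$ and $I(y_2')\cap(-\infty,-1]$ are intervals with right end $-1$ included, hence nested. The private neighbour of the one with the smaller trace therefore meets the other interval, a contradiction.

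This is the paper's argument. With $I(y_2)=[-2,-1]$ one gets $I(y_2')\subseteq[-2,0)\subseteq I(y_2)\cup I(y_4)$, so $I(x_4')$ cannot exist. Your step ``only one of them can protrude further left'' is exactly this nested-trace step. It is justified by the blocking interval $I(y_4)$, not by equality of right ends, and your proposal never uses $y_4$.

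Separately, your closing claim that every proper induced subgraph of $K$ is representable is unsupported. The lemma, however, only needs that $K$ itself is not representable.
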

\begin{proof}
In the bigraph $K$ the vertices $x_1, x_2, x_3, y_1, y_2, y_3, y_4$ 
induce $H_2$ as a subgraph. Since $H_2$ has a unique 
$\mathcal{U}^\pm$-representation up to trivial modifications \cite{ds}, we 
may consider the following representation of $H_2$ : $I(x_1)=(0,1)$, $I(x_3)=I(y_1)=[0,1]$, $I(x_2)=I(y_4)=[-1,0]$, $I(y_3)=[1,2]$ and $I(y_2)=[-2,-1]$. Now, 
$y_3'$ is adjacent to $x_3$ only, we take $I(y_3')=[1,2)$. Again, 
$x_5$ is adjacent to $y_3$ only, we take $I(x_5)=[2,3]$ or $[2,3)$. 
Similarly,  $I(x_4)=[-3,-2]$ or $(-3,-2]$. Since 
$y_2'$ is adjacent to $x_2$ but not to $x_4$, we take
$I(y_2')=(-2,1]$. Now, the interval
representation for the vertex $x_4'$ does not exist. Hence the proof of 
the lemma is complete.
\end{proof}

\begin{figure}[H]
\centering

\begin{tikzpicture}[scale=.50]
{\tikzstyle{every node}=[circle, draw, fill=black, inner sep=0pt, minimum width= 2.5pt]

\draw (0:0)node[label={[label distance=1pt]180:${\scriptstyle y_3}$}]{}--++(30:1)node(a)[label={[label distance=1pt]-60:${\scriptstyle x_3}$}]{}--++(30:1)node(b)[label={[label distance=1pt]-90:${\scriptstyle y_1}$}]{}--++(-30:-1)node(c)[label={[label distance=1pt]60:${\scriptstyle x_2}$}]{}--++(-30:-1)node[label={[label distance=1pt]180:${\scriptstyle y_2}$}]{} (a)--++(150:1)node[label={[label distance=1pt]180:${\scriptstyle y_4}$}]{}--(c) (b)--++(90:1)node[label={[label distance=1pt]90:${\scriptstyle x_5}$}]{} (b)--++(0:1)node(d)[label={[label distance=1pt]-90:${\scriptstyle x_1}$}]{}--++(30:1)node[label={[label distance=1pt]30:${\scriptstyle y_5}$}]{}--++(0:-1)node[label={[label distance=1pt]60:${\scriptstyle x_4}$}]{}--(b) (d)--++(0:1)node[label={[label distance=1pt]0:${\scriptstyle y_6}$}]{};
}
\end{tikzpicture}

\caption{The bigraph $M$.}

\end{figure}

\begin{lem}
The bigraph $M$ is a  forbidden induced subgraph for $\mathcal{U}$-bigraphs.
\end{lem}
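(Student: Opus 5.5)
We will show that any $\mathcal{U}$-representation of $M$ forces two non-copy vertices of the same partite set onto the same interval. The argument mirrors the proof of the lemma for the bigraph $K$: locate an induced $H_2$, fix its representation, and show that the remaining neighbours of the centre of $H_2$ have no freedom.

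The first step is to find the $H_2$. In $M$, the vertex $y_1$ is adjacent to $x_1,x_2,x_3,x_4,x_5$. Further, $x_3y_3$ and $x_2y_2$ are pendant-type edges, and $y_4$ is adjacent to exactly $x_2$ and $x_3$. Hence the vertices $x_5,y_1,x_2,x_3,y_2,y_3,y_4$ induce the bipartite net $H_2$, with $y_1$ as centre, $x_5$ as the pendant at the centre, and $y_4$ as the common neighbour of $x_2,x_3$. As in the proof of the lemma for $K$, we use the uniqueness of the representation of $H_2$ up to trivial modifications \cite{ds}. After a reflection and translation if needed, we fix
\[
I(y_1)=[0,1],\quad I(x_5)=(0,1),\quad I(x_3)=[0,1],\quad I(y_3)=[1,2],
\]
\[
I(x_2)=I(y_4)=[-1,0],\quad I(y_2)=[-2,-1] \text{ or } (-2,-1].
\]
The other admissible representation of $H_2$ (Fig.~3(ii)) should be handled the same way, or reduced to this one by the cited uniqueness statement. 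Checking that this reduction is legitimate is the step I expect to need the most care: the cited uniqueness is for $\mathcal{U}^\pm$, and half-open endpoints could in principle give extra flexibility.

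The second step pins down $x_1$ and $x_4$. Each is adjacent to $y_1$ and not to $y_2,y_3,y_4$. The unit interval $I(x_1)$ must meet $[0,1]$. It must also avoid $I(y_4)=[-1,0]$, which forces $l(I(x_1))\ge 0$, with $0$ excluded from $I(x_1)$ if $l(I(x_1))=0$. Likewise it must avoid $I(y_3)=[1,2]$, which forces $r(I(x_1))\le 1$, with $1$ excluded if $r(I(x_1))=1$. Since $I(x_1)$ has unit length, the only possibility is $I(x_1)=(0,1)$. The same argument gives $I(x_4)=(0,1)$. Hence $I(x_1)=I(x_4)=I(x_5)=(0,1)$.

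Finally, $y_6$ is adjacent to $x_1$ but not to $x_5$, which is impossible since $I(x_1)=I(x_5)$. The same contradiction arises from $y_5$, which is adjacent to $x_1$ and $x_4$ but not to $x_5$. Therefore $M$ has no $\mathcal{U}$-representation. Every proper induced subgraph of $M$ is a $\mathcal{U}$-bigraph, for example by direct modification of the intervals above, so $M$ is a forbidden induced subgraph for $\mathcal{U}$-bigraphs.
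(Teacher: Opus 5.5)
Your strategy is the paper's: find an induced net with centre $y_1$, fix its representation, and squeeze the remaining neighbours of $y_1$. Your computation in the configuration of Fig.~3(i) is correct. But the proof is incomplete at exactly the step you flagged, and the missing case does not go ``the same way''.

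The two representations in Fig.~3 differ essentially. In (i) the centre's pendant is the open interval filling the gap between $I(y_4)$ and $I(y_3)$. In (ii) a pendant of $x_3$ is open, $I(y_1)$ is its closed copy, and the centre's pendant lies outside it. Moreover, normalising the representation of an induced subgraph is only harmless if the normalisation extends to all of $M$.

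The paper's own proof of this lemma is carried out in representation (ii), and there your second step fails. Normalise $I(y_1)=[0,1]$, $I(y_3)=(0,1)$, $I(x_2)=[-1,0]$, $I(x_3)=[-\frac12,\frac12]$, $I(y_4)=[-\frac32,-\frac12]$, $I(y_2)=[-2,-1]$. A vertex adjacent to $y_1$ and to none of $y_2,y_3,y_4$ may then be $[1,2]$ or $[1,2)$. So $I(x_1)=[1,2]$, $I(x_5)=[1,2)$, $I(y_6)=[2,3]$ is consistent, and neither $y_6$ alone nor $y_5$ alone yields a contradiction.

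What closes the gap is a pigeonhole argument using $y_5$ and $y_6$ together. The vertices $x_1,x_4,x_5$ have pairwise distinct neighbourhoods: $y_6$ separates $x_1$ from $x_4$ and $x_5$, and $y_5$ separates $x_4$ from $x_5$. So they need three distinct intervals. It therefore suffices to show that in every $\mathcal{U}$-representation at most two unit intervals meet $I(y_1)$ and avoid $I(y_2)$, $I(y_3)$, $I(y_4)$. Prove this directly rather than from a uniqueness statement that is only asserted for $\mathcal{U}^\pm$.

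One checks that, after a reflection, $I(y_2)$ lies to the left of $I(x_3)$ and $I(y_3)$ lies to the right of $I(x_2)$. An admissible interval $J$ then falls into one of two cases:
\begin{enumerate}[(i)]
\item $J$ lies between $I(y_2)$ and $I(y_3)$. Then $I(y_4)$ is on one side of $J$, and whichever of $x_2,x_3$ must meet intervals on both sides of $J$ forces $J$ to be the open interval filling a gap of length exactly $1$. This is your argument.
\item $J$ lies, say, to the right of $I(y_3)$. Then $I(y_1)$, which meets both $I(x_2)$ and $J$, must be the closed copy of $I(y_3)$. Hence $I(y_3)$ is open and $J\in\{[r,r+1],[r,r+1)\}$ with $r=r(y_3)$.
\end{enumerate}
Admissible intervals in different positions are incompatible. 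An interval between $I(y_2)$ and $I(y_3)$ cannot meet the closed copy of $I(y_3)$. Intervals beyond $I(y_2)$ and beyond $I(y_3)$ would together force $I(y_2)=I(y_3)$. So at most two admissible intervals exist, and the pigeonhole finishes the proof.

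Finally, your closing claim that every proper induced subgraph of $M$ is a $\mathcal{U}$-bigraph is not needed for the lemma, and ``modifying the intervals above'' does not justify it.
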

\begin{proof}
In the bigraph $M$ the vertices $x_1, x_2, x_3, y_1, y_2, y_3$ and  
$y_4$ induce $H_2$ as subgraph. Now, $H_2$ has a unique 
$\mathcal{U}^\pm$-representation up to trivial modifications . From the 
second representation of Fig.~3 we consider the following 
representation of $H_2$ : $I(x_2) = [1,2]$, $I(x_3)=[1.5,2.5]$, 
$I(y_1)=[2,3]$, $I(y_3)=(2,3)$, $I(y_2)=[0,1]$, $I(y_4)=[3,4]$. Since 
$y_6$ is adjacent to $x_1$ only, we take $I(y_6)=[4,5]$. Next, $x_4$ is 
adjacent to $y_1$ only, we take $I(x_4)=[3,4)$. Again, since $y_5$ is 
adjacent to $x_1$ and $x_4$, we take $I(y_5)=(3,4)$. Now, the interval representation for the vertex $x_5$ is not possible. Thus the proof of
 the lemma is complete.
\end{proof}

\begin{figure}[H]
\centering

\begin{tikzpicture}[scale=.50]
{\tikzstyle{every node}=[circle, draw, fill=black, inner sep=0pt, minimum width= 2.5pt]

\draw (0:0)node(a)[label={[label distance=1pt]-135:${\scriptstyle y_2}$}]{}--++(0:1)node(b)[label={[label distance=1pt]-90:${\scriptstyle x_4}$}]{}--++(0:1)node[label={[label distance=1pt]-45:${\scriptstyle y_3}$}]{}--++(90:1)node[label={[label distance=1pt]0:${\scriptstyle x_3}$}]{}--++(0:-1)node(c)[label={[label distance=1pt]-45:${\scriptstyle y_1}$}]{}--++(0:-1)node[label={[label distance=1pt]180:${\scriptstyle x_2}$}]{}--(a) (b)--(c)--++(90:1)node(d)[label={[label distance=1pt]180:${\scriptstyle x_1}$}]{}--++(90:1)node[label={[label distance=1pt]180:${\scriptstyle y_5}$}]{} (c)--++(30:1)node[label={[label distance=1pt]0:${\scriptstyle x_5}$}]{}--++(90:1)node[label={[label distance=1pt]0:${\scriptstyle y_4}$}]{}--(d) (c)--++(-45:-1)node[label={[label distance=1pt]180:${\scriptstyle x_6}$}]{};
}
\end{tikzpicture}

\caption{The bigraph $H_0$.}

\end{figure}

\begin{lem}
The bigraph $H_0$ is a  forbidden induced subgraph for 
$\mathcal{U}$-bigraphs.
\end{lem}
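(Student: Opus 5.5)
Looking at the figure, I would read the edges of $H_0$ as follows: $y_2$–$x_2$, $y_2$–$x_4$, $x_4$–$y_3$, $y_3$–$x_3$, $x_3$–$y_1$, $y_1$–$x_2$, $x_4$–$y_1$, $y_1$–$x_1$, $x_1$–$y_5$, $y_1$–$x_5$, $x_5$–$y_4$, $y_4$–$x_1$, $y_1$–$x_6$. The vertices $x_2,x_3,x_4,y_1,y_2,y_3$ then induce the domino: two $4$-cycles $y_1x_2y_2x_4$ and $y_1x_4y_3x_3$ sharing the edge $y_1x_4$. Adding $x_1$ (adjacent to $y_1$ only among these) and $y_5$ (adjacent to $x_1$ only) gives a bigraph with a pendant path $y_1x_1y_5$ attached to the domino. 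This is the bipartite tent $H_3$ extended, or more precisely it contains $H_3$. I would first identify an induced copy of $H_3$ among these vertices: $y_2,x_2,y_1,x_3,y_3,x_4$ together with $x_1$. This matches the labelled tent of Fig.~4, where $x_4$ is adjacent to $y_1,y_2,y_3$ and $x_1$ is pendant at $y_1$.

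By the uniqueness of $\mathcal{U}$-representations of $H_3$ up to trivial modifications (quoted earlier from \cite{ds}), I may fix one of the representations of Fig.~4. I will choose the one where $I(x_1)$ is the open interval, say $I(y_1)=I(x_4)=[0,1]$, $I(x_1)=(0,1)$, $I(x_2)=I(y_2)=[-1,0]$, $I(x_3)=I(y_3)=[1,2]$, with endpoints adjusted so that $x_2$ misses $y_3$ and $x_3$ misses $y_2$. The justification is that $x_1$'s neighbour $y_1$ must be the "middle" interval and $x_1$ avoids $y_2,y_3$, which touch $y_1$ at its endpoints. Next I place $y_5$: it must meet $I(x_1)=(0,1)$ but miss $I(x_4)=[0,1]$, $I(x_2)$ and $I(x_3)$. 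A unit interval meeting $(0,1)$ meets $[0,1]$, so this is already a contradiction. If that holds, then $y_5$ alone kills the representation, and the extra vertices $x_5,y_4,x_6$ serve to rule out the alternative choice in which the open interval is another vertex.

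So the real plan is as follows. In the alternative representation, $I(y_1)$ is forced to be a closed interval $[0,1]$, and $I(x_1)$ has to sit at an endpoint of $I(y_1)$ (using the general trivial-modification freedom). Then the vertices $x_5,y_4$ form a second $4$-cycle $y_1x_5y_4x_1$, and $x_6$ is a pendant at $y_1$. I would argue that $y_1$ already has neighbours $x_2,x_4,x_3$ spread over $[-1,2]$ with fixed types. Consequently, any unit interval meeting $I(y_1)$ but none of $I(y_2),I(y_3),I(y_5),I(y_4)$ must be an open copy of $[0,1]$, and at most one "free" such slot exists. The vertices $x_1,x_5$ (which share the private neighbours $y_4$, and $y_5$ for $x_1$) and the pendant $x_6$ then compete for it.

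The main obstacle is the case analysis over the two representations of $H_3$ in Fig.~4, and over the open/closed/half-open endpoint types that trivial modifications permit. I would handle it by first showing that $I(y_1)$ must be closed, $I(y_1)=[0,1]$, and that $I(x_2),I(x_3)$ end exactly at $0$ and $1$. I would then show that $y_4$ and $y_5$ must each lie within $(0,1)$-type positions avoiding $x_2,x_3,x_4$. That forces $I(x_1),I(x_5)$ to be $(0,1)$ and $I(y_4),I(y_5)$ to be open copies too. Finally I would derive that $y_5$ then meets $I(x_5)$ or $I(x_6)$ is impossible, mirroring the final step of the proofs of Lemmas for $K$ and $M$.
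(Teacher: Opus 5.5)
\textbf{Case (i) is fine; case (ii) is not handled.} Your treatment of representation (i) of Fig.~4 is correct, and it covers a case the paper's written proof skips. There $I(x_1)=(0,1)\subseteq[0,1]=I(x_4)$, so every neighbour of $x_1$ meets $I(x_4)$, and $y_5$ cannot be placed.

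The gap is representation (ii). This is the representation the paper uses, and the only one in which $x_5,y_4,x_6$ matter. There
$I(y_1)=I(x_4)=[0,1]$, $I(x_2)=I(y_2)=[-1,0]$, $I(x_3)=[0.5,1.5]$, $I(y_3)=(0,1)$ and $I(x_1)=[1,2]$. The concrete steps you announce fail in this setting:
\begin{itemize}
\item No interval missing $I(y_3)=(0,1)$ can be an open copy of $[0,1]$, so the ``slot'' you identify for a pendant at $y_1$ does not exist.
\item $y_4,y_5$ cannot occupy ``$(0,1)$-type positions'', because any unit interval meeting $(0,1)$ meets $I(x_4)=[0,1]$.
\item $I(x_1)=(0,1)$ is impossible, because it would meet $I(y_3)$.
\end{itemize}
In effect, your sketch of the second case re-describes the geometry of the first, so this case is left unproved.

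\textbf{The missing step is a pigeonhole at the right endpoint of $I(y_1)$.} Each of $x_1,x_5,x_6$ is adjacent to $y_1$ and non-adjacent to $y_2,y_3$. Let $a$ be the left end of a unit interval.
\begin{itemize}
\item It meets $[0,1]$ only if $|a|\le 1$.
\item It misses $(0,1)$ only if $|a|\ge 1$, so $|a|=1$.
\item If $a=-1$, the interval must contain $0$ to meet $[0,1]$, and hence it meets $I(y_2)=[-1,0]$.
\end{itemize}
So each of the three intervals has a closed left end at $1$, i.e.\ it is $[1,2]$ or $[1,2)$. Hence two of $x_1,x_5,x_6$ receive the same interval. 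But $N(x_1)=\{y_1,y_4,y_5\}$, $N(x_5)=\{y_1,y_4\}$ and $N(x_6)=\{y_1\}$ are pairwise distinct, a contradiction.

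The paper reaches the same conclusion less explicitly. It sets $I(x_5)=[1,2]$ and $I(x_6)=[1,2)$, and notes that $y_5$ then cannot separate $x_1$ from $x_5$. Your ``competing for a free slot'' intuition points this way, but the slots are wrong. They are the two intervals $[1,2]$ and $[1,2)$, not an open copy of $[0,1]$. The contradiction is that three vertices with pairwise different neighbourhoods must share two slots.
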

\begin{proof}
In $H_0$, the vertices $x_1, x_2, x_3, x_4, y_1, y_2, y_3$ induce $H_3$ 
as a subraph. Now, $H_3$ has a unique 
$\mathcal{U}^\pm$-representation up to trivial modifications \cite{ds}. From the 
second representation of Fig.~4, we consider the following representation of 
$H_3$: $I(x_2)=I(y_2)=[-1,0]$, $I(x_4)=I(y_1)=[0,1]$, $I(x_3)=[0.5,1.5]$, 
$I(y_3)=(0,1)$, $I(x_1)=[1,2]$. Since $x_6$ and $x_5$ are adjacent to 
$y_1$ but $x_5$ is also adjacent to $y_4$, we take $I(x_5)=[1,2]$ and 
$I(x_6)=[1,2)$. Now, $y_4$ is adjacent $x_1$ and $x_5$, we take $I(y_4)=
[2,3]$ or $[2,3)$. Then, the interval representation of $y_5$ is not possible. 
Thus the proof of the lemma is complete.
\end{proof}
\begin{figure}[H]
\centering
\begin{tikzpicture}[scale=.40]
{\tikzstyle{every node}=[circle, draw, fill=black, inner sep=0pt, minimum width= 2.5pt]
\draw (0:0)node(a)[label={[label distance= 1pt]-90:${\scriptstyle u}$}]{}--++(0:1)node(b)[label={[label distance= 1pt]-90:${\scriptstyle v_0''}$}]{}--++(0:1)node[label={[label distance= 1pt]-90:${\scriptstyle u_0'}$}]{} (a)--++(90:1)node[label={[label distance= 1pt]180:${\scriptstyle v_0}$}]{} (a)--++(60:1)node[label={[label distance= 1pt]60:${\scriptstyle v_0'}$}]{}--++(0:1)node[label={[label distance= 1pt]0:${\scriptstyle u_0}$}]{}--(b);
}
\end{tikzpicture}
\caption{The bigraph $B_0$}
\end{figure}
\begin{lem}
In any of the bigraphs $F_{i,j},\ M_i',\ N_i',\ H_i'',\ K_{i,j}',\ P_i',\ R_i',\ S_i',\ Q_i'$; if we delete the vertices $v'$ and $v''$ and then 
form the union of it with the bigraph $B_0$ then the resulting bigraph is also a forbidden induced subgraph for $\mathcal{U}$-bigraphs.
\end{lem}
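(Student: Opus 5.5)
The plan is the same for every family listed in the statement. Each of $F_{i,j}, M_i', N_i', H_i'', K_{i,j}', P_i', R_i', S_i', Q_i'$ contains a vertex $u$ whose neighbours $v'$ and $v''$ are the special vertices. In every case (for $K'_{i,j}, P'_i, Q'_i, S'_i, R'_i$ by the preceding lemmas, and for the remaining families by the corresponding results of \cite{dsml}) the bigraph has a unique $\mathcal{U}$-representation up to trivial modifications. In that representation $I(u)=[a,a+1]$ is a closed unit interval placed at the right end of a rigid chain. The chain forces $u$ to have its left neighbour ending exactly at $a$, so all room for further neighbours of $u$ lies at the point $a+1$; this is why $I(v'),I(v'')\in\{[a+1,a+2],[a+1,a+2)\}$. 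I will therefore delete $v',v''$, identify the vertex $u$ of $B_0$ with $u$ of the bigraph, and argue by contradiction. Suppose the resulting bigraph $G$ had a $\mathcal{U}$-representation. Restricting it to $G-V(B_0)\cup\{u\}$ gives, by the uniqueness statements, a representation which after trivial modifications is the one listed, with $I(u)=[a,a+1]$ (or with the type of $I(u)$ forced closed by the chain, exactly as in those lemmas).

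Next I analyse $B_0$ attached at $u$. The vertices $v_0,v_0',v_0''$ are adjacent to $u$ but to no vertex of the chain, so the same reasoning that pinned $I(v')$ and $I(v'')$ pins each of them. Each of $I(v_0),I(v_0'),I(v_0'')$ must be a unit interval meeting $[a,a+1]$ only at the point $a+1$, so its left endpoint is $a+1$, closed, and it is $[a+1,a+2]$ or $[a+1,a+2)$. The first thing to verify is that no such vertex can sit further left. Any unit interval meeting $I(u)$ in more than the point $a+1$ would meet the interval of the chain vertex preceding $u$ on the other side, which is nonadjacent; this is exactly the step used in the lemmas to fix $I(v'),I(v'')$.

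Then comes the contradiction inside $B_0$. We have $u_0\sim v_0',v_0''$, $u_0'\sim v_0''$, $u_0\not\sim v_0$ and $u_0'\not\sim v_0,v_0'$, and there are only two available intervals, $[a+1,a+2]$ and $[a+1,a+2)$. Since $u_0$ misses $I(v_0)$ but meets $I(v_0')$, $I(v_0)$ must be the half-open one, $[a+1,a+2)$, and $I(v_0')=[a+1,a+2]$. A unit interval meeting $[a+1,a+2]$ but not $[a+1,a+2)$ must be $[a+2,a+3]$ or $[a+2,a+3)$, so $I(u_0)$ is one of these. But $u_0\sim v_0''$, so $I(v_0'')=[a+1,a+2]=I(v_0')$. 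Now $u_0'$ meets $I(v_0'')$ while missing $I(v_0')$, which is the identical interval. This is impossible, so no representation exists.

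The main obstacle is the first step, justifying uniformly across all nine families that $I(u)$ is rigidly fixed with its free side a single closed endpoint. For the families handled in \cite{dsml} I will rely on their uniqueness results in the same form as the lemmas above, and treat the left/right orientation by reflection, which is a trivial modification.
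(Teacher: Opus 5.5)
Your contradiction inside $B_0$ is correct. It is really a pigeonhole: $v_0,v_0',v_0''$ have pairwise different neighbourhoods $\{u\}$, $\{u,u_0\}$, $\{u,u_0,u_0'\}$, so no two of them can receive the same interval, yet only $[a+1,a+2]$ and $[a+1,a+2)$ are available.

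The gap is in the pinning step you flag yourself, and the reason you give for it is false. The chain predecessor $p$ of $u$ is adjacent to $u$, so it lies in the same partite set as $v_0,v_0',v_0''$. In a bigraph representation same-side intervals may overlap freely, so meeting $I(p)=[a-1,a]$ is no obstruction. The chain vertex on $u$'s side two steps back has interval $[a-2,a-1]$. The only unit interval meeting both it and $[a,a+1]$ is $[a-1,a]$. So the chain does not exclude, e.g., $I(v_0)=[a+\tfrac12,a+\tfrac32]$ or $I(v_0)=(a-1,a]$. Knowing that the left neighbour of $u$ ends at $a$ therefore does not confine the further neighbours of $u$ to the point $a+1$.

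The missing idea is the one the paper's proof rests on. There is a vertex $u'$ on $u$'s side, not adjacent to $v',v''$, whose interval is the closed-open copy $I(u')=[a,a+1)$ of $I(u)$. For the chain members $u'$ is the pendant at the predecessor of $u$; compare $I(y_n')=[2n-1,2n)$ and $I(x_n')=[2n,2n+1)$ in the $\mathcal{K}'$ representation. In general it suffices that the intervals of such vertices cover $[a,a+1)$.

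Since $v_0,v_0',v_0''$ are adjacent to $u$ but not to $u'$, each of their intervals must meet $[a,a+1]$ and miss $[a,a+1)$. So each contains $a+1$ and nothing to its left, which gives $[a+1,a+2]$ or $[a+1,a+2)$ for all three simultaneously.

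The pair $u,u'$, not the chain, is also what makes $I(u)$ the closed one. So your restriction to the bigraph minus $V(B_0)\setminus\{u\}$ is the wrong one: it deletes exactly the vertices that distinguish $u$ from $u'$, and typically they become copies there. Instead, apply the uniqueness lemmas to the induced copy of the original bigraph in which two of $v_0,v_0',v_0''$ play the roles of $v',v''$. Then pin all three neighbours of $u$ via $u'$ as above.
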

For the figures and their $\mathcal{U}$- intersection representations of the bigraphs $F_{i,j},\ M_i',\ N_i'$\ and\ $H_i''$ see \cite{dsml}. Now we prove the above lemma.
\begin{proof}
In the $\mathcal{U}$-intersection representation of any of the bigraphs
$F_{i,j}, M_i', N_i', H_i'', K_i', P_i', Q_i', R_i'$ or $S_i'$, let 
the interval corresponding to $u$ is $I(u)=[a,a+1]$. Since $v_0'$ and 
$v_0''$ are adjacent to $u$, $u_0$ is adjacent to $v_0'$ and 
$v_0''$, also $u_0'$ is adjacent to $v_0''$ only, we take intervals 
corresponding to these vertices as follows : $I(v_0'')=[a+1,a+2]$, 
$I(v_0')=[a+1,a+2)$, $I(u_0)=(a+1,a+2)$ and $I(u_0')=[a+2,a+3]$ or 
$[a+2,a+3)$. Now, the interval representations of $v_0$ is not possible 
as there exists an interval $I(u')=[a,a+1)$ in the interval 
representation of each of the bigraphs $F_{i,j}, M_i', N_i', H_i'', K_i', P_i', Q_i', R_i'$ or $S_i'$. Which completes the proof of the lemma.
\end{proof}

Now, we denote the bigraph 
$H\cup B_0$ by $\widetilde{L}_{i,j}$, where $H = F_{i,j}\setminus \{v', v''\}$. Also, we denote the class of 
bigraphs so obtained by $\widetilde{\mathcal{L}}$. In the similar way we 
can obtain the classes $\mathcal{\widetilde{M},\, \widetilde{N},\, \widetilde{H'},\, \widetilde{K},\, \widetilde{P},\, \widetilde{Q},\, \widetilde{R},\, \widetilde{S}}$ of bigraphs.\par

From Lemma~14 we have the following corollary.

\begin{cor}
Any bigraph of the classes $\mathcal{\widetilde{L},\, \widetilde{M},\, \widetilde{N},\, \widetilde{H'},\, \widetilde{K},\, \widetilde{P},\, \widetilde{Q},\, \widetilde{R},\, \widetilde{S}}$ is a forbidden induced subgraph of $\mathcal{U}$-bigraph.
\end{cor}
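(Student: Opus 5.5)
The corollary follows directly from Lemma~14, so the plan is a short reduction. Fix a class, say $\widetilde{\mathcal{L}}$, and a member $\widetilde{L}_{i,j}=H\cup B_0$ with $H=F_{i,j}\setminus\{v',v''\}$. Here $H$ and $B_0$ are glued along $u$, which is common to both: $u$ is the vertex of $B_0$ adjacent to $v_0,v_0',v_0''$. The same construction gives the other eight classes, using $M_i'$, $N_i'$, $H_i''$, $K_{i,j}'$, $P_i'$, $Q_i'$, $R_i'$, $S_i'$ in place of $F_{i,j}$. Lemma~14 says exactly that each such glued bigraph has no $\mathcal{U}$-intersection representation. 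So the first step is to check that every bigraph in the nine tilde-classes is literally of the form treated in Lemma~14, and then apply it.

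To claim that these bigraphs are forbidden \emph{induced subgraphs}, and not just non-$\mathcal{U}$-bigraphs, I would note that the class of $\mathcal{U}$-bigraphs is hereditary. Restricting a $\mathcal{U}$-representation to an induced subgraph is again a $\mathcal{U}$-representation. Hence any bigraph containing some $\widetilde{L}_{i,j}$ (or a member of another tilde-class) as an induced subgraph is not a $\mathcal{U}$-bigraph.

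The only substantive point, inherited from the proof of Lemma~14, is the rigidity of the representation of the base bigraph near $u$. By Lemmas~4--9 (and the corresponding results of \cite{dsml} for $F_{i,j}$, $M_i'$, $N_i'$, $H_i''$), these representations are unique up to trivial modifications. In each of them $I(u)=[a,a+1]$ up to reflection, and there is a neighbour of $u$'s predecessor represented by $[a,a+1)$. This is what blocks any placement of $v_0$ once $v_0',v_0'',u_0,u_0'$ are forced to $[a+1,a+2]$, $[a+1,a+2)$, $(a+1,a+2)$ and $[a+2,a+3]$.

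I expect the main obstacle to be verifying that the attached $B_0$ cannot be placed on the \emph{other} side of $I(u)$. Ruling this out needs the fact that $u$'s other neighbour in the path already occupies the left side. If the lemmas' uniqueness statements are taken at face value, this is immediate, and the corollary follows.
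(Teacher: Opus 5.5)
Your proposal is correct and matches the paper, which obtains the corollary directly from Lemma~14: the tilde classes are by definition the bigraphs $H\cup B_0$ treated there, and heredity of $\mathcal{U}$-bigraphs is implicit. One small slip: $u_0'$ is adjacent only to $v_0''$, so the forced intervals are $I(v_0'')=[a+1,a+2]$ and $I(v_0')=[a+1,a+2)$, not the other way round.
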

Let $\mathcal{B'}$ be the class of bigraphs, where $\mathcal{B'}=\mathcal{F'}\cup \mathcal{M'}\cup \mathcal{N'}\cup \mathcal{H''} \cup \mathcal{K'} \cup \mathcal{P'} \cup \mathcal{Q'} \cup \mathcal{R'} \cup \mathcal{S'}$ (See \cite{dsml} for the classes $\mathcal{F'}$,  $\mathcal{M'}$, $\mathcal{N'}$ and $\mathcal{H''}$ of bigraphs).

\begin{lem}
If $B$ is a mixed unit interval bigraph, then $B$ does not contain any 
bigraph of the set $\{ F_2, F_4, F_5, F_8, F_9,  F_{11}, F_{12}, B_1, B_2, K, M, H_0  \} \cup 
 \mathcal{L} \cup \mathcal{M} \cup \mathcal{N} \cup \mathcal{H'} \cup \mathcal{K} \cup \mathcal{P} \cup \mathcal{Q} \cup \mathcal{R} \cup \mathcal{S} 
\cup \mathcal{T} \cup \mathcal{\widetilde{L}} \cup \mathcal{\widetilde{M}} \cup \mathcal{\widetilde{N}} \cup \mathcal{\widetilde{H'}} \cup \mathcal{\widetilde{K}} \cup \mathcal{\widetilde{P}} \cup \mathcal{\widetilde{Q}} \cup \mathcal{\widetilde{R}} \cup \mathcal{\widetilde{S}} $
\end{lem}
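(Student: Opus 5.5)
The lemma is a collection statement. My plan is to reduce it to the facts already established, one for each listed bigraph or family, using heredity: the class of $\mathcal{U}$-bigraphs is closed under taking induced subgraphs. If $I$ is a $\mathcal{U}$-intersection representation of $B$ and $B'$ is an induced subgraph of $B$, then the restriction of $I$ to $V(B')$ is a $\mathcal{U}$-intersection representation of $B'$. This holds because adjacency in an induced subgraph is exactly adjacency in $B$ between vertices of different partite sets. So it suffices to show that each bigraph $G$ in the listed set has no $\mathcal{U}$-intersection representation. Then a mixed unit interval bigraph $B$ containing such a $G$ as an induced subgraph would give, by restriction, a representation of $G$, which is a contradiction.

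Next I would go through the list and cite the source of non-representability for each member:
\begin{enumerate}[(i)]
\item For $F_2,F_4,F_5,F_8,F_9,F_{11},F_{12}$, I use the remark at the start of Section~\ref{s3}. If a justification is needed, each contains $H_1$, $H_2$ or $H_3$ as an induced subgraph. Their $\mathcal{U}$-representations are unique up to trivial modifications (Proposition~4/5 of \cite{ds,dsml}), and one checks that the extra vertices cannot be placed.
\item For $B_1$, I use the Observation following the lemma on $F_1$.
\item For $B_2$ and the families $\mathcal{L},\mathcal{M},\mathcal{N},\mathcal{H'}$, I cite \cite{dsml}.
\item For $\mathcal{K},\mathcal{P},\mathcal{T},\mathcal{Q},\mathcal{S},\mathcal{R}$, I cite part (b) of the corresponding lemmas above.
\item For $K$, $M$ and $H_0$, I cite their individual lemmas.
\item For $\widetilde{\mathcal{L}},\ldots,\widetilde{\mathcal{S}}$, I cite the preceding Corollary, which follows from Lemma~14.
\end{enumerate}

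The common mechanism behind items (iv) and (vi) is worth recording once. In each primed bigraph the representation is forced up to trivial modifications, so $I(u)=[a,a+1]$ with a closed-open copy $[a,a+1)$ already present among the vertices of the opposite side to $u$'s neighbours. Adjoining one more pendant-type neighbour of $u$, as in the unprimed family, or the gadget $B_0$, then leaves no unit interval of any of the four types that meets exactly the required set of intervals.

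The main obstacle is not the assembly but making sure that each cited ``forbidden'' lemma actually rules out \emph{every} $\mathcal{U}$-representation, not merely the one exhibited. The proofs above fix a canonical representation of an induced $H_1$, $H_2$, $H_3$ or $F_1$ using uniqueness up to trivial modifications, and then extend it along paths. I would make sure the extension step is also forced: each successive unit interval on a path is determined up to endpoint type, because of unit length and the adjacency and non-adjacency constraints with the previous two intervals. I would handle this by a short induction on the path length, noting that trivial modifications preserve non-representability. With that in place, the lemma follows as the union of the individual statements.
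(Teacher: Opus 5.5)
Your proposal is correct and is essentially the paper's proof: restrict a $\mathcal{U}$-representation to the induced copy and cite, item by item, the Section~3 remark for the $F_i$, the Observation for $B_1$, \cite{dsml} for $B_2,\mathcal{L},\mathcal{M},\mathcal{N},\mathcal{H'}$, the individual lemmas for $\mathcal{K},\mathcal{P},\mathcal{T},\mathcal{Q},\mathcal{R},\mathcal{S},K,M,H_0$, and the $B_0$ argument of Lemma~14 for the tilde classes (the paper repeats that argument inline rather than citing the Corollary). One side remark, which your proof does not rely on, is inaccurate: for $\mathcal{K},\mathcal{P},\mathcal{Q},\mathcal{R},\mathcal{S}$ the unprimed graph is obtained by giving each special vertex $v',v''$ its own new pendant neighbour, so that both members of the forced nested pair $[a+1,a+2)\subset[a+1,a+2]$ would need private neighbours, and not by adjoining a further pendant neighbour to $u$ (that would merely create a copy of $v'$, which is harmless).
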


\begin{proof}
Let $B$ be a $\mathcal{U}$-bigraph, and let $I$ be a $\mathcal{U}$-intersection representation of $B$. Then, we left as an easy exercise that  $B$ is $F_2, F_4$, $F_5,F_8,F_9,F_{11},F_{12}$-free interval bigraphs, i.e these graphs have no $\mathcal{U}$- intersection representation. Also, Observation of Lemma~6, and Lemma~10 of \cite{dsml} imply that $B$ must not contain 
$B_1$ and $B_2$ as induced subgraph. Again Lemmata~10, 11, and 12 respectively imply that $B$ is $K$, $M$, and $H_0$-free. Also from Lemmata~6, 7, 8, 9 of \cite{dsml} and Lemmata~4, 5, 7, 8, 9 and 6, $B$ is $\mathcal{L}\cup \mathcal{M}\cup \mathcal{N}\cup \mathcal{H'}\cup \mathcal{K}\cup \mathcal{P}\cup \mathcal{Q}\cup \mathcal{R}\cup \mathcal{S}\cup \mathcal{T}$-free interval bigraph.\par
Now, let $H$ be an induced subgraph of $B$ that is isomorphic to any bigraph of the class $\mathcal{B'}$. Also, let the vertices of $H$ be denoted as in the definition of the bigraphs in the class $\mathcal{B'}$. Then, the two pendant vertices $v'$ and $v''$ are special vertices which are adjacent to $u$. Next, we delete the vertices $v'$ and $v''$ from 
$H$ to get the bigraph $H'$. Now, consider the bigraph $H'\cup B_0$ and 
try to give its interval representation from the representation of $H'$.
By Lemmata~6, 7, 8, 9 of \cite{dsml} and Lemmata~4, 5, 6, 7, 8 and 9 we may assume that $I(u)=[a,a+1]$,
where $a\in \mathbb{R}$. Now intervals for $v_0'$ and $v_0''$ are respectively $[a+1,a+2)$ and $[a+1,a+2]$ in the representation of $H'$. 
Since $u_0$ is adjacent to $v_0'$ and $v_0''$, also $u_0'$ is adjacent to 
$v_0''$ only, we take $I(u_0)=(a+1,a+2)$ and $I(u_0')=[a+2,a+3]$ or 
$[a+2,a+3)$. From the Lemma~14, we see that the interval representation 
of $v_0$ is not possible and hence $H'\cup B_0$ is forbidden induced 
subgraph of $\mathcal{U}$-bigraphs. Thus the proof of the Lemma is 
complete.
\end{proof}

\section{Main Result}
Now, we present the characterization theorem of mixed unit interval bigraphs. But, before that as in~\cite{dsml} we introduce a new definition. For notational convenience we write $l(I(v)) = l(v)$ and $r(I(v)) = r(v)$.\par
A bigraph $B=(X,Y,E)$ is a \emph{mixed proper interval bigraph} if it has an $\mathcal{I}$-intersection representation $I:V(B)\to\mathcal{I}$ such that
\begin{enumerate}[(i)]
\item for two distinct vertices $u$ and $v$ of $B$ with $I(u),I(v)\in \mathcal{I}^{++},\ I(u)\not\subset I(v)$ and $I(v)\not\subset I(u)$, and
\item for every vertex $u$ of $B$ with $I(u)\not\in \mathcal{I}^{++}$, there is a vertex $v$ of $B$ with $I(v)\in \mathcal{I}^{++},\ l(u)=l(v)$ and $r(u)=r(v)$, that is no closed interval is properly contained in another closed interval and for any non closed interval, there is a closed interval with same end points.
\end{enumerate}

\begin{theo}
For a bigraph $B$, the following statements are equivalent:
\begin{enumerate}[(a)]
\item $B$ is $\{ F_2, F_4, F_5, F_8, F_9, F_{11}, F_{12}, B_1, B_2, K, M, H_0 \} \cup \mathcal{L} \cup \mathcal{M} \cup \mathcal{N} \cup \mathcal{H'} \cup \mathcal{K} \cup \mathcal{P} \cup \mathcal{Q} \cup \mathcal{R} \cup \mathcal{S} \cup \mathcal{T} \cup \mathcal{\widetilde{L}} \cup  \mathcal{\widetilde{M}} \cup \mathcal{\widetilde{N}} \cup \mathcal{\widetilde{H'}} \cup \mathcal{\widetilde{K}} \cup \mathcal{\widetilde{P}} \cup \mathcal{\widetilde{Q}} \cup \mathcal{\widetilde{R}} \cup \mathcal{\widetilde{S}}$-free interval bigraph.
\item $B$ is a mixed proper interval bigraph.
\item $B$ is a mixed unit interval bigraph.
\end{enumerate}
\end{theo}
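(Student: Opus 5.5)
I will prove the cycle (c)$\Rightarrow$(a)$\Rightarrow$(b)$\Rightarrow$(c), following the pattern of Theorem~3 of \cite{ds} and the partial results of \cite{dsml}.

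\emph{(c)$\Rightarrow$(a).} This is exactly Lemma~16. A $\mathcal{U}$-bigraph is an $\mathcal{I}$-bigraph, hence an interval bigraph by Proposition~1. Lemmata~4--15, the Observation after Lemma~6 and the lemmas of \cite{dsml} show that no member of the listed families has a $\mathcal{U}$-representation. Since the class is hereditary, $B$ is free of all of them.

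\emph{(b)$\Rightarrow$(c).} Start with a mixed proper representation $I$. The closed intervals form a proper family, so their left and right endpoints are sorted in the same order. By the standard Roberts-type argument, as used in \cite{ds} for almost proper bigraphs, we can perturb the closed intervals to unit length while keeping the relative order of all endpoints, including coincidences between a left endpoint and a right endpoint. Each non-closed interval shares its endpoints with some closed interval, by condition (ii). It is therefore carried along with that closed interval and keeps its type. Intersections among unit intervals depend only on the order of endpoints and on the open/closed type at coinciding endpoints, so adjacency is preserved. I would set this up as an explicit order-preserving map $\phi$ on the finite set of endpoints, chosen with $\phi(r)=\phi(l)+1$ for every closed interval.

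\emph{(a)$\Rightarrow$(b).} This is the heart of the proof, and I expect it to be the main obstacle. Take a minimal counterexample $B$ that is copy-free and connected. By Hell--Huang \cite{hhi}, if $B$ is $\{H_1,H_2,H_3\}$-free it is a unit interval bigraph, and we are done. Otherwise $B$ contains an induced $H_1$, $H_2$ or $H_3$, and this copy has a unique $\mathcal{U}$-representation up to trivial modifications (\cite{ds} and Lemma~6). Using that as an anchor, I would extend the representation greedily along the rest of $B$, placing new vertices so that the closed intervals stay proper and every non-closed interval has a closed twin.

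The case analysis asks where a new vertex can fail to be placed. Such a failure produces one of two structures:
\begin{itemize}
\item a chain of closed intervals of the forms $[2n-1,2n]$ and $[2n,2n+1]$, as in Lemmata~4--9, running from the anchor to a vertex $u$ with two special pendants $v',v''$;
\item such a chain attached to the gadget $B_0$, whose vertex $v_0$ conflicts with the open copy $[a,a+1)$ forced near $u$.
\end{itemize}
Matching each obstruction with the corresponding forbidden family is the step to carry out: paths to both sides give $\mathcal{L},\mathcal{K},\mathcal{T}$; one-sided paths with an $H_1$, $H_2$ or $H_3$ core give $\mathcal{M},\mathcal{N},\mathcal{H'},\mathcal{P},\mathcal{Q},\mathcal{R},\mathcal{S}$; the variants with $B_0$ give the tilde classes. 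Small local conflicts around the core give $B_1,B_2,K,M,H_0$ and the $F_i$.

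I would organize the proof by first treating the case where $B$ contains $H_1$ (using Lemma~6 and $F_1$), then $H_2$, then $H_3$. In each case I would induct on the distance from the anchor, showing at each step that the forced interval exists unless an induced member of the list appears. The delicate part is proving that the extension is forced, i.e.\ uniqueness up to trivial modifications propagates along the chain, and that independent anchors far apart can be handled separately. For that I would split $B$ at cut positions where only closed intervals meet.
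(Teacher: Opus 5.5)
The direction (c)$\Rightarrow$(a) is Lemma~16, as you say. For (b)$\Rightarrow$(c) the paper simply cites Theorem~14 of \cite{dsml}; your order-preserving endpoint map is a reasonable way to fill that in. The genuine gap is in (a)$\Rightarrow$(b).

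\textbf{The gap.} Everything rests on the claim that if a greedy extension from an anchor copy of $H_1$, $H_2$ or $H_3$ fails, the failure exhibits a listed graph. You assert this but never argue it, and the mechanism you propose cannot deliver it.

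\emph{Placements are not forced.} In a unit interval bigraph representation, intervals of the same partite set may overlap or not, and a new neighbour can often go to either side. The half-open variants add further combinatorial freedom. So an early choice can block a later vertex even though $B$ is representable, and such a dead end certifies nothing.

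\emph{The anchor is not rigid either.} The paper itself displays two $\mathcal{U}^\pm$-representations each of $H_2$ and $H_3$ (Figs.~3 and 4), in which different vertices receive the open interval, so your starting point already branches.

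\emph{The splitting step is unjustified.} Nothing guarantees ``cut positions where only closed intervals meet'' exist, or that representations of the pieces glue. Moreover, $\mathcal{L},\mathcal{K},\mathcal{T}$ are precisely obstructions in which two cores interact through an arbitrarily long path, so cores cannot be treated independently. Proving that uniqueness ``propagates along the chain'' is not a delicate detail; it is the whole content of this direction.

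\textbf{The missing idea.} Use the hypothesis that $B$ is an interval bigraph as a global scaffold, so that nothing ever has to be forced. Your plan uses that hypothesis only to invoke Hell--Huang. The paper's argument runs as follows.
\begin{enumerate}[(1)]
\item Fix an $\mathcal{I}^{++}$-representation with the fewest bad pairs, i.e.\ pairs with one closed interval inside another.
\item Claims~1--3 show that bad pairs are isolated: no interval contains two intervals, and none is contained in two. Each violation produces a forbidden graph, for instance $F_2,F_4,F_5,F_8,F_9,F_{11},F_{12},B_1$ or $K$.
\item For a bad pair $(x',x)$, the vertices beyond $r(x')$ and beyond $l(x')$ form ladders $Z_1,Z_2,\dots$ and $Z'_1,Z'_2,\dots$. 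Each level has at most two staggered members. A third member at some level, or particular ways the ladders end, yields $B_2,M,H_0$ and the infinite families with their tilde versions (the analysis leading to Claim~4).
\item Ladder vertices lie in no other bad pair (Claim~5).
\item The ladder intervals are shrunk into closed/half-open twins (Claims~6 and 7). This makes the bad pair clean, and $I(x')$ is then replaced by the open copy of $I(x)$.
\end{enumerate}
Every step is a local repair of a representation that already exists. That is what replaces the forcing argument your approach needs but does not supply.
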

\begin{proof}\renewcommand{\qedsymbol}{}
In the Lemma~16 we have proved the implication $(c)\to(a)$. Also, in the Theorem~14 of~\cite{dsml}, we have proved that the statements $(b)$ and $(c)$ are equivalent.  Therefore to prove the Theorem, it remains to prove that if a bigraph $B$ is $\{ F_2, F_4, F_5, F_8, F_9, F_{11}, F_{12}, B_1, B_2, K, M, H_0 \} \cup \mathcal{L} \cup \mathcal{M} \cup \mathcal{N} \cup \mathcal{H'} \cup \mathcal{K} \cup \mathcal{P} \cup \mathcal{Q} \cup \mathcal{R} \cup \mathcal{S} \cup \mathcal{T}  \cup \mathcal{\widetilde{L} \cup  \widetilde{M} \cup \widetilde{N} \cup \widetilde{H'} \cup \widetilde{K} \cup \widetilde{P}} \cup \mathcal{\widetilde{Q}} \cup \mathcal{\widetilde{R}} \cup \mathcal{\widetilde{S}}$-free interval bigraph then $B$ is a mixed proper interval bigraph. The proof is basically similar to the proof of Lemma~8 of~\cite{ds}. The sketch of the proof is as follows. We assume that $I$ is an $\mathcal{I}^{++}$ representation of $B$ with the number of bad pair is as small as possible. If there is no bad pair, then $B$ is proper interval bigraph and hence also a mixed proper interval bigraph. Therefore, we may assume that there is at least one bad pair say, $(u,v)$ where $I(u)$ contained in  $I(v)$ and $u, v \in X\cup Y$. In the Claim~1 to 5 we shall study the structure of the representation $I$ and  the bigraph $B$ corresponding to the occurrence of this bad pair. Then in the claims~6 and 7 we shall describe how the $\mathcal{I}^{++}$-intersection representation of $B$ can be modified to a mixed proper interval representation.
\end{proof}
Note that, in \cite{ds} we have shown that all the intervals of $\mathcal{I}^{++}$ representation of $B$ can be chosen such that the intervals have distinct end points. 
\begin{claim}
If $(u,v)$ is a bad pair, then there are vertices $z_1$ and $z_2$ such that $l(v) < r(z_1) < l(u)$
and $r(u) < l(z_2) < r(v)$, where $u$ is in a partite set different from $z_1$ or $z_2$.
\end{claim}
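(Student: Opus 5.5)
The plan is to argue by contradiction, using the minimality of the number of bad pairs in the chosen $\mathcal{I}^{++}$-representation $I$. As noted above, all endpoints may be taken distinct. Suppose $(u,v)$ is a bad pair with $I(u)\subset I(v)$. By symmetry (reflecting the whole model about a point, a trivial modification that preserves the number of bad pairs) it suffices to produce $z_1$; the existence of $z_2$ follows by applying the same argument to the reflected representation.

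So assume, for contradiction, that no vertex $z$ in the partite set different from that of $u$ satisfies $l(v)<r(z)<l(u)$. Every interval of the opposite partite set that meets $[l(v),l(u)]$ must then either end before $l(v)$ or contain $l(u)$. We modify $I$ by sliding the left endpoint of $I(u)$ leftwards to $l(v)-\varepsilon$ for small $\varepsilon>0$; alternatively, we set $I'(u)=[l(v)-\varepsilon, r(u)]$ but stop just before the next endpoint of an opposite-side interval. The first key step is to check that this preserves all adjacencies of $u$. The new points covered by $I(u)$ lie in $[l(v)-\varepsilon,l(u))$. Any opposite-side interval meeting this range but not $I(u)$ would have its right end strictly between $l(v)-\varepsilon$ and $l(u)$. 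With distinct endpoints and $\varepsilon$ small, such an interval would have right end in $(l(v),l(u))$, contradicting our assumption. Intervals on the same side as $u$ do not affect adjacency, so $I'$ is still an $\mathcal{I}^{++}$-intersection representation of $B$.

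The second step is to compare bad pairs. The pair $(u,v)$ is destroyed, since $I'(u)$ now sticks out to the left of $I(v)$. We must then check that no new bad pair is created. A new containment could arise only if some interval $I(w)$ is now contained in $I'(u)$, or if $I'(u)$ is now contained in some $I(w)$. The second case is impossible, because any $I(w)\supseteq I'(u)$ already contained $I(u)$. For the first case, a $w$ with $I(w)\subseteq[l(v)-\varepsilon,r(u)]$ but $I(w)\not\subseteq I(u)$ would need $l(v)-\varepsilon\le l(w)<l(u)$. Choosing $\varepsilon$ below the gap to the nearest endpoint makes $l(w)\in(l(v),l(u))$. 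But then $I(w)\subset I(v)$ was already a bad pair, so the count does not increase. This is the main obstacle: we must either count precisely, showing that the net number of bad pairs strictly drops, or choose the bad pair $(u,v)$ cleverly, for instance with $I(u)$ minimal among intervals inside $I(v)$ or with $l(u)$ leftmost, so that such $w$ cannot occur.

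If the naive counting fails, I would first try choosing $u$ with $l(u)$ as small as possible among intervals properly contained in $I(v)$. In that case any such $w$ lies in $I(v)$ with $l(w)<l(u)$, which the extremal choice rules out for intervals with $r(w)\le r(u)$. The remaining $w$ extend past $r(u)$ and so are not contained in $I'(u)$. The contradiction with minimality then yields $z_1$, and symmetrically $z_2$.
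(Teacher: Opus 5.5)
Your strategy of pushing $l(u)$ just left of $l(v)$ and contradicting minimality of the number of bad pairs is the natural one. Your check that the adjacencies of $u$ survive is fine. The paper gives no argument of its own here; it only refers to the earlier Das--Sahu paper. The gap is in the bad-pair count, and your proposed repair does not close it. As you suspected, the first count is wrong. A vertex $w$ with $l(v)<l(w)<l(u)$ and $r(w)<r(u)$ gives a new bad pair $(w,u)$. That is a different pair from the old $(w,v)$, so one pair is destroyed while possibly several are created.

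Choosing $u$ with $l(u)$ minimal among intervals inside $I(v)$ proves the statement only for that extremal pair. The claim, however, is about an arbitrary bad pair $(u,v)$. You would have to transfer the hypothesis ``no $z_1$'' from $(u,v)$ to the extremal pair $(u^*,v)$. In the bipartite setting this fails, because $z_1$ must lie in the part opposite to $u$.

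First suppose $u^*$ is leftmost among all intervals in $I(v)$ and lies in the part opposite to $u$. Then the hypothesis says nothing about vertices of $u$'s part ending in $(l(v),l(u^*))$, so extending $I(u^*)$ may create new edges.

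Now suppose instead $u^*$ is leftmost among the intervals of $u$'s own part inside $I(v)$. The hypothesis does transfer, since $(l(v),l(u^*))\subseteq(l(v),l(u))$. But the extremal choice no longer excludes an interval $w$ of the opposite part with $l(v)<l(w)<l(u^*)$ and $l(u)<r(w)<r(u^*)$. Such a $w$ is not a forbidden $z_1$, because it ends after $l(u)$. After the extension, $(w,u^*)$ is a new bad pair.

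So your argument leaves open exactly this configuration: an opposite-part interval inside $I(v)$ straddling $l(u)$. You need an extra argument to exclude it, either another modification of $I$ or a finer count. For ordinary graphs there is no partite restriction, and there your extremal trick plus the one-line transfer of the hypothesis would finish the proof.
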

\begin{proof}
The proof of the claim has already been described in \cite{ds}. So we omit the details.\qedhere
\end{proof}
Now, as in~\cite{ds}, let $I(u)$ be an interval of $I$. Also, let $I(v)$ be an interval contained in $I(u)$ and $l(v)$ be the smallest subject to $I(v)$ contained in $I(u)$. Then, we say $I(v)$ is the leftmost subject to $I(v)$ contained in  $I(u)$. Similarly, if $I(v)$ is contained in  $I(u)$ and such that $r(v)$ is the largest, then we say $I(v)$ is the rightmost subject to the condition that $I(v)$ is contained in $I(u)$.

\begin{claim}
\textit{No interval contains two distinct intervals.}
\end{claim}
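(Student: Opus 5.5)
We argue by contradiction, in the framework of the main proof. Fix an $\mathcal{I}^{++}$-representation $I$ of $B$ with distinct endpoints and with the minimum number of bad pairs. Suppose some interval $I(v)$ properly contains two distinct intervals $I(u_1)$ and $I(u_2)$. Here ``contains'' is read in the sense of bad pairs: $u_1,u_2$ lie in the partite set opposite to $v$, as in Claim~1, since only such containments matter for the intersection structure.

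\textbf{Step 1: fix the extreme intervals.} Choose $I(u_1)$ to be the leftmost interval contained in $I(v)$ and $I(u_2)$ the rightmost one. Since the endpoints are distinct, $u_1\neq u_2$.

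\textbf{Step 2: apply Claim~1 to each bad pair.}
\begin{itemize}
\item For $(u_1,v)$ it gives a vertex $z_1$ with $l(v)<r(z_1)<l(u_1)$.
\item For $(u_2,v)$ it gives a vertex $z_2$ with $r(u_2)<l(z_2)<r(v)$.
\end{itemize}
Both $z_1,z_2$ lie in the partite set of $v$. Hence $z_1$ is adjacent to neither $u_1$ nor $u_2$, and the same holds for $z_2$. Applying Claim~1 again to $(u_1,v)$ on its right side gives a witness $z_1'$ with $r(u_1)<l(z_1')<r(v)$. Symmetrically, $(u_2,v)$ yields a left witness $z_2'$.

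\textbf{Step 3: extend to a small induced subgraph.} Since $B$ is connected, each $u_i$ has neighbours. These lie in $v$'s partite set and intersect $I(u_i)\subset I(v)$. Together with neighbours of $z_1,z_2$ lying outside $I(v)$, we extract an induced subgraph. The aim is to obtain one of the small forbidden graphs: $B_1$, $B_2$, $F_2$, $F_4$, $F_5$, or a member of $\mathcal{L},\mathcal{M},\mathcal{N}$ of smallest index.

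\textbf{Step 4: the alternative, rerouting by minimality.} If no forbidden graph arises, we use the minimality of the number of bad pairs. Every neighbour of $u_1$ intersects $I(u_1)$, and by Step 2 no vertex of $v$'s side separates $u_1$ from $u_2$ in the right way. So $I(u_1)$, or $I(u_2)$, can be stretched toward the closer end of $I(v)$, up to $r(z_1)$ or $l(z_2)$. This stretch should preserve all adjacencies and destroy at least one bad pair, contradicting minimality.

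\textbf{Case split.} The work splits according to whether $I(u_1)$ and $I(u_2)$ overlap.
\begin{itemize}
\item \emph{Overlapping.} A vertex of $v$'s side meeting $I(u_1)\cap I(u_2)$, together with $v$, $z_1$, $z_2$, produces a tent- or net-like configuration. This is forbidden via $H_0$, $M$, or $K$.
\item \emph{Disjoint.} A vertex $w$ meeting the gap between $I(u_1)$ and $I(u_2)$ produces a long claw-like structure. This falls into $B_1$ or $\mathcal{L}$.
\end{itemize}

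\textbf{Main obstacle.} The hardest part will be the bookkeeping in the disjoint case. One must show that each placement of the witnesses $z_1,z_1',z_2,z_2'$ yields either an induced forbidden graph or a legitimate stretch. In particular, the stretch must not create a new containment with some third interval. I would first treat the subcase in which $z_1'$ and $z_2'$ themselves witness the gap, since then the induced subgraph on $v,u_1,u_2,z_1,z_2,z_1',z_2'$ plus pendant neighbours should directly give $F_2$ or $B_1$.
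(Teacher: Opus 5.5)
There is a genuine gap, beginning with how you read the statement. In this proof a bad pair is any proper containment $I(u)\subset I(v)$ of closed intervals, whatever the partite sets. Claim~1 only places its witnesses opposite to the \emph{contained} vertex $u$; it does not say that $u$ and $v$ lie on opposite sides.

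Same-side containments are in fact the central ones. The theorem ends by replacing $I(x')$ with an open copy of $I(x)$ for a bad pair $(x',x)$ with $x,x'\in X$. Claim~2 is also invoked later for exactly such pairs, e.g.\ to exclude a second $X$-interval $I(x'')$ inside $I(x)$. So three configurations must be handled, and the paper treats them separately:
both contained intervals on $v$'s side (Cases~1--2, ending in $B_1$ or $F_2$);
both on the opposite side (Cases~3--4, ending in $F_8$, $F_2$ or $K$);
one on each side (Cases~5--6, ending in $B_1$, $F_8$, $F_5$ or $F_2$).
Your Steps~2 and~3 do not transfer to the missing cases: when $u_i$ is on $v$'s side, both its witnesses and its neighbours lie opposite to $v$.

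Separately, distinct endpoints do not give $u_1\neq u_2$. If $I(v)\supset I(a)\supset I(b)$, the leftmost and rightmost intervals inside $I(v)$ are both $I(a)$. This nested situation needs its own treatment and cannot be deferred to Claim~3, whose proof uses Claim~2.

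Step~4 cannot yield a contradiction either. Claim~1 says precisely that $I(u_1)$ cannot be pushed out of $I(v)$: the witnesses satisfy $l(v)<r(z_1)$ and $l(z_1')<r(v)$. Stretching $I(u_1)$ up to $r(z_1)$ or $l(z_1')$ therefore keeps it inside $I(v)$, no bad pair disappears, and minimality gives nothing new.

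The whole argument thus has to consist of exhibiting forbidden induced subgraphs, and that is the part your proposal leaves as intention. Step~3 and the overlapping/disjoint split name target graphs but never produce the extra vertices that make a subgraph induced and forbidden. In the paper these are neighbours of the contained intervals reaching outside $I(v)$, and vertices forced by copy-freeness and by minimality of $I$, such as one meeting the witness $z_1$ but not the neighbour of $u_1$.

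The targets you name also only partly match. $H_0$, $M$ and $\mathcal{L}$ occur in the paper only later, in analysing the sets $Z_i$ around a bad pair. The obstructions that actually arise in Claim~2 are $B_1$, $F_2$, $F_5$, $F_8$ and $K$.
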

\begin{proof}\renewcommand{\qedsymbol}{}
To prove this Claim we assume to the contrary that an interval contains
two distinct intervals. We consider the following possible cases.
\begin{cas}
Let $(x_1,x)$ and $(x_2,x)$ be bad pairs, where $x_1, x_2$ are distinct 
vertices. Again let $x_1$ be the vertex for which the interval $I(x_1)$ 
is the leftmost subject to the condition $I(x_1)\subset I(x)$ and $x_2$ 
be the vertex for the interval $I(x_2)$ is the rightmost subject to the 
condition $I(x_2)\subset I(x)$, also $x_1, x_2 \in X$. Claim~1 applied 
to the bad pairs $(x_1,x)$ and $(x_2,x)$ implies the existance of the 
vertices $y_1$ and $y_2$ of $B$ such that $l(x)<r(y_1)<l(x_1)$ and 
$r(x_2)<l(y_2)<r(x)$. Since $x_1$ and $x_2$ are not isolated vertices, 
there exists vertices $y'$ and $y''$ so that $I(y')$ intersects $I(x_1)$ 
but not $I(x_2)$ and $I(y'')$ intersects $I(x_2)$ but not $I(x_1)$, and also 
it is possible that the interval $I(y')$ intersects both $I(x_1)$ and 
$I(x_2)$. \par
Consider the first possibility. Here we assume $l(y')<l(x)$ and 
$r(y'')>r(x)$, as ${ I}$ contains minimum number of bad pairs. Since the vertices $y_1$ and $y_2$ are not copies in $B$, 
there exists a vertex $x_3$ such that $I(x_3)$ intersects $I(y_1)$. 
As before, assume $(y_1,y')$ is not a bad pair. Thus, $I(x_3)$ is such that 
$l
(y_1)<r(x_3)<l(y')$. Similarly, $(y_2,y'')$ is not a bad pair, there exists a 
vertex $x_4$ such that $r(y'')<l(x_4)<r(y_2)$.Obviously  $I(x_4)$ intersects $I(y_2)$.
Now, the vertices $x, x_1, x_2, x_3, x_4, y_1, y_2, y', y''$ induce $B_1$,
which is a contradiction.\par
Next, suppose $I(y')$ intersects $I(x_1)$ and $I(x_2)$. Here $I(y')\not\subset I(x)$ and $I(x_2)\not\subset I((y')$. So, we have 
$r(y')<r(x_2)$ and $l(y')<l(x)$. Since the vertices $x_1$ and $x_2$ are not 
copies, there exist a vertex $y''$ such that $I(y'')$ intersects 
$I(x_2)$. Also $l(x_2)<l(y'')$ and $r(y'')>r(x)$. As before, $(y_1,y')$ 
and $(y_2,y'')$ are not bad pairs. Thus, there exist vertices $x'$ and 
$x''$ such that $I(x')$ intersects $I(y_1)$ and $I(x'')$ intersects 
$I(y_2)$ satisfying the conditions $r(x')<l(y')$ and $r(y'')<l(x'')$ 
respectively. Since $l(y')< l(x)$, $l(y')$ of the interval $I(y')$ cannot be shorten 
to the  right, there exists a vertex $x_1'$ such that $l(y')<r(x_1')$ 
but $l(x')<l(x_1')<l(y_1)$. Thus no new bad pair is formed. Now, the vertices $x, x_1', x_1, x_2, y', 
y'', y_1, y_2$ induce $F_2$, which is a contradiction.
\end{cas}
\begin{cas}
Let $(y_1,y)$ and $(y_2,y)$ be bad pairs, where $y_1$ and $y_2$ are distinct 
vertices. In this case we can arrive at a contradiction similar to the previous 
case (here only $x,x_1$ and $x_2$ are interchanged respectively 
by $y,y_1$ and $y_2$). So detail is omitted.
\end{cas}
\begin{cas}
Let $(y_1,x)$ and $(y_2,x)$ be bad pairs, where $x, y_1, y_2$ are 
distinct vertices of $B$. Suppose $y_1\in Y$ be the vertex for which 
the interval $I(y_1)$ is the leftmost subject to the condition 
$I(y_1)\subset I(x)$ and $y_2$ be the vertex for which the interval 
$I(y_2)$ is the rightmost subject to the condition $I(y_2)\subset I(x)$. 
Now, by Claim~1, there exists vertices $x_1$ and $x_2$ such that 
$l(x)<r(x_1)<l(y_1)$ and $r(y_2)<l(x_2)<r(x)$.\par
Since the bigraph $B$ is connected, there must exists vertices $y_3$ and $y_4$ 
such that $I(y_3)$ intersects $I(x)$ and $I(x_1)$, and $I(y_4)$ 
intersects $I(x)$ and $I(x_2)$. Since the vertices $y_1$ and $y_2$ 
are copies, there must exists a vertex $x_3$ such that $I(x_3)$ intersects 
$I(y_1)$ but not $I(y_2)$. Again $(x_3,x)$ is not a bad pair. We assume $l(x_1)<l(y_3)<l(x_3)<l(x)$ and $r(x_1)<r(y_3)<r(x_3)$, so that no two of the intervals $I(x_1)$, $I(y_3)$ and $I(x_3)$ form a bad pair. Now $I(x_3)$ does not intersect $I(y_2)$, thus we have an interval $I(y_5)$ such that $r(x_3)<l(y_5)$ and $r(x)<r(y_5)$. Also, no two of the intervals   $I(y_5),I(y_4),I(x_2)$ forms a bad pair. But now $(y_2,y_5)$  is a bad pair. Thus by claim 1, we have an interval $I(x_5)$ such that $l(y_5)<r(x_5)<l(y_2)$. Now if $I(x_5)\cap I(y_3) =\emptyset$, then the vertices $x, x_2, x_3, x_5, y_1, y_2, y_3, y_4$ and $y_5$ induce $F_8$. Next, assume $I(x_5)\cap I(y_3)\neq \emptyset$. If $(x_2,y_4)$ is a bad pair, then by the claim 1 there exists an interval $I(y')$ such that $r(x_2)<l(y')<r(y_4)$. Then $y'$ is an isolated vertex. Which is a contradiction as the bigraph is connected. Similarly, $(x_2,y_5)$,$(y_4,x_2)$ and $(x_2,y_4)$ cannot be a bad pair. Thus assume $(y_4,y_5)$ is a bad pair (otherwise we can derive a contradiction). So,by claim 1,we have an interval $I(x')$ such that $r(y_4)<l(x')<r(y_5)$. Now the vertices $x, x', x_2, x_5, y_2, y_3, y_4$ and $y_5$ induce $F_2$. Thus we have a contradiction in both possibilities.
\par Next, we consider the case where bad pairs are formed by intervals $I(x_3),I(x_1)$ and $I(y_3)$. Assume $(x_1,x_3)$ is a bad pair but $(y_3,x_3)$ is not a bad pair if $(y_3,x_3)$ is a bad pair then as before we have an isolated vertex in $B$ i.e. $r(x_3)<r(y_3)<r(y_1)$ and $l(x_3)<l(y_3)$. Now for the bad pair $(x_1,x_3)$, by claim 1, we have an interval $I(y_1')$ such that $l(x_3)<r(y_1')<l(x_1)$. Also assume $(x_1,y_3)$ is a bad pair, otherwise we can have an isolated vertex in $B$, by claim 1 we have an interval $I(y')$ such that $l(y_3)<r(y')<l(x_1)$. Also $r(y_1')<r(y')$ and $l(y_1')<l(y')$ as $(y_1',y')$ is not a bad pair. Now the vertices $y'$ and $y_1'$ are copies, but the bigraph is copy-free. So we have a vertex $x_1'$ such that $I(x_1')$ intersects $I(y_1')$ only. Again as $(y',y_1')$ is not a bad pair, we have a vertex $x'$ such that $r(y_1')<l(x')$ and if $I(x')$ does not intersect $I(y_3)$ then the vertices $x, x_1, x_2, x_3, x', x_1', y_1, y_2, y_3, y_4, y'$ and $y_1'$ induce the graph $K$. In the other case if $I(x')$ intersects $I(y_3)$ then the vertices $x, x_1, x_3, x', y_1, y_3, y'$ and $y_1'$ induce $F_2$. Which is also a contradiction.\par 
\end{cas}
\begin{cas}
Let $(x_1,y)$ and $(x_2,y)$ be bad pairs where $x_1$ and $x_2$ are distinct 
vertices. This case is similar to the previous one so omitted.
\end{cas}
\begin{cas}
Let $(x_1',x)$ and $(y_1',x)$ be bad pairs, where $x, x_1'$ and $y_1'$ 
are distinct vertices.\par
Let $x_1$ be the vertex of $B$ such that $I(x_1)$ is the leftmost subject 
to the condition $I(x_1)\subset I(x)$, where $x_1\in X$ and $y_1$ be the 
vertex such that $I(y_1)$ is the rightmost subject to the condition 
$I(y_1)\subset I(x)$, where $y_1\in Y$. First, we assume that 
$I(x_1)\cap I(y_1)=\emptyset$. Now by Claim~1, there exists a vertex $y'$ 
such that $l(x)<r(y')<l(x_1)$, also there exists another vertex $x'$ 
such that $r(y_1)<l(x')<r(x)$. Since $x'$ is not an isolated vertex and 
the bigraph is connected, there exists $y_2$, such that $I(y_2)$ 
intersects $I(x)$ and $I(x')$. Again, since $x_1$ is not an isolated vertex, 
there exists a vertex $y''$ such that $I(y'')$ intersects $I(x_1)$ but not contain it. Now we must have $I(y'')\not\subset I(x)$, i.e. $l(y'')<l(x)$ and $r(y'')<r(x_1)$. Since $r(y'')$ cannot be extended to the right, there exist a vertex $x''$ such that $r(y'')<l(x'')$. As $x''$ is not an isolated vertex, $I(x'')\cap I(y_1)\neq\emptyset$. Also $(y_1,x'')$ is not a bad pair, thus $r(x'')<r(y_1)$. Again $(y',y'')$ is not a bad pair, there exist a vertex $x_2$ such that $r(x_2)<l(y'')$ and $I(x_2)$ intersects $I(y')$ with $l(x_2)<l(y')$. Now, 
the vertices $x, x_1, x', x_2, x'', y_1, y_2, y'$ and $y''$ induce $B_1$, 
which is a contradiction.\par
Next, assume $I(x_1)\cap I(y_1)\neq \emptyset$ and $I(x_1)\neq I(y_1)$. 
Since $I(y_1)$ is the rightmost and $I(x_1)\neq I(y_1)$, there exists a 
vertex $y_2$ such that $l(y_2)>r(x_1)$ and $r(y_2)<r(y_1)$. By a similar 
argument there exists $x_2$ such that $r(x_2)<l(y_1)$ and 
$l(x_2)>l(x_1)$. Now, by Claim~1, there exists a vertex $y'$ such that 
$l(x)<r(y')<l(x_1)$, and also there exist a vertex $x'$ such that 
$r(y_1)<l(x')<r(x)$. Again, Claim~1 applied to the bad pair $(x_2,x_1)$ 
we have a vertex $y_1'$ such that $l(x_1)<r(y_1')<l(x_2)$. Also, suppose 
$l(y_1')<l(x)$ as $(y_1',x)$ is not a bad pair. Now, since $x_2$ is not an isolated vertex, there exist a 
vertex $y_1''$ such that $I(y_1'')\cap I(x_2)\neq \emptyset$ and 
$l(y_1'')<l(x)$ and $r(y_1'')<r(x_2)$ as $(y_1'',x)$ and $(x_2,y_1'')$ are not a bad pair. Also, we may assume that no two of the intervals $I(y'), I(y_1')$ and $I(y_1'')$ forms a bad pair. Since $x'$ is not an isolated vertex, there exists 
$y_3$ such that $I(y_3)$ intersects $I(x)$ and $I(x')$. Again, the 
vertices $y'$ and $y_2$ are copies but since $B$ is copy-free, there exists 
a vertex $x''$ such that $I(x'')$ intersects $I(y')$ only. \\
Now for the bad pair $(y_2,y_1)$, by claim 1, we have a vertex $x_1'$ such that $r(y_2)<l(x_1')<r(y_1)$. Also $r(x_1')>r(x)$ as no new bad pair is created. Then $I(x_1')$ must intersects $I(y_3)$. Again $(y_1',y_1'')$ is not a bad pair, thus $r(x'')<l(y_1'')$ and $I(y_1')$ intersects $I(x'')$. Then the vertices $x,x_1,x_1',x'',y_1,y_2,y_3,y_1'$ and $y'$ induce $F_8$. Next, assume $I(y_1')$ does not intersect $I(x'')$. As $(y_1',y_1'')$ is not a bad pair, there exists an interval $I(x_3)$ such that $r(x_3)<l(y_1'')$. Then $I(x_3)$ intersects $I(y')$ and $I(y_1')$, and the vertices $x, x_1, x_2, x_3, x_1', y_1, y', y_1'$ and $y_1''$ induce $F_5$. Thus we have a contradiction in any possibility. \par
Finally, we assume $I(x_1)=I(y_1)$. Now, the existence of $x_1'$ and 
$y_1'$ and the choice of $x_1$ and $y_1$ imply that $(x_1',x_1)$ and 
$(y_1',y_1)$ are bad pairs. So, by Claim~1, there exists two vertices 
$y_2$ and $y_3$ such that $l(x_1)<r(x_2)<l(x_1')$ and 
$r(x_1')<l(y_3)<r(x_1)$. For the similar reason, there exists $x_2$ and 
$x_3$ such that $l(y_1)<r(x_2)<l(y_1')$ and $r(y_1')<l(x_3)<r(y_1)$. Now, 
the vertices $x_1, x_2, x_3, x_1', y_1, y_2, y_3$ and $y_1'$ induce 
$F_2$, which is a contradiction.
\end{cas}
\begin{cas}
Let $(x_1,y)$ and $(y_1,y)$ be bad pairs where $x_1,y_1,y$ are distinct vertices. This case is similar to the previous one so omitted. Hence proof of 
the Claim~2 is complete.
\end{cas}
\end{proof}\vspace{-1cm}
\begin{claim}
\textit{No interval is contained in two intervals.}
\end{claim}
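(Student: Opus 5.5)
We argue by contradiction, in the same style as the proof of Claim~2. Fix an $\mathcal{I}^{++}$-representation $I$ of $B$ with the minimum number of bad pairs and distinct endpoints. Suppose some interval $I(u)$ lies in two distinct intervals $I(v_1)$ and $I(v_2)$, so that $(u,v_1)$ and $(u,v_2)$ are both bad pairs. Since endpoints are distinct, $I(v_1)$ and $I(v_2)$ overlap and each contains $I(u)$. Either one of them contains the other, or they cross.

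\textbf{Nested case.} Say $I(v_1)\subset I(v_2)$. Then $I(v_2)$ contains the two distinct intervals $I(u)$ and $I(v_1)$, which Claim~2 forbids. So we may assume $l(v_1)<l(v_2)<l(u)<r(u)<r(v_1)<r(v_2)$.

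\textbf{Crossing case: setup.} We split according to the partite classes of $u,v_1,v_2$. The cases are $(x,y_1,y_2)$, $(x,x_1,x_2)$ and $(x,x_1,y_1)$, together with their $X\leftrightarrow Y$ symmetric versions, mirroring Cases~1--6 of Claim~2. In each case we apply Claim~1 to $(u,v_1)$ and to $(u,v_2)$.
\begin{itemize}
\item The pair $(u,v_2)$ gives a vertex $z_1$ of the partite set opposite to $u$ with $l(v_2)<r(z_1)<l(u)$.
\item The pair $(u,v_1)$ gives $z_2$ with $r(u)<l(z_2)<r(v_1)$.
\end{itemize}
Neither of $I(z_1),I(z_2)$ meets $I(u)$, while both meet $I(v_1)\cap I(v_2)$ wherever the partite classes allow adjacency.

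\textbf{Crossing case: building the configuration.} Next we use, as in Claim~2:
\begin{itemize}
\item non-isolation of $u$, $v_1$, $v_2$, $z_1$, $z_2$;
\item copy-freeness, for instance to separate $u$ from $z_1$ or $z_2$ when they lie in the same class, and to separate $v_1$ from $v_2$ when they lie in the same class;
\item minimality of bad pairs, which lets every new interval be placed so that it neither creates a bad pair nor can be shortened.
\end{itemize}
This produces private neighbours on the far left of $I(v_1)$ and the far right of $I(v_2)$, and pendant-type neighbours of $z_1,z_2$ outside $I(v_1)\cup I(v_2)$. Using Claim~2, neither $I(v_1)$ nor $I(v_2)$ contains a second interval, so all these new intervals must stick out of $I(v_1)$ and $I(v_2)$. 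This pins down the relative order of their endpoints.

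\textbf{Crossing case: reading off a forbidden subgraph.} We then read off an induced subgraph. When $u,v_1,v_2$ lie in one class we expect $B_1$ or $F_2$, as in Case~1. When $v_1,v_2$ lie in the class opposite $u$ we expect $F_8$, $F_5$, $F_2$ or $K$, as in Cases~3 and~5. In the subcase where two new intervals intersect unexpectedly, we expect $M$ or $H_0$. This contradicts the hypothesis of statement (a).

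The main obstacle is the crossing case with $v_1$ and $v_2$ in the class opposite to $u$. There the vertices $z_1$ and $z_2$ are adjacent to both $v_1$ and $v_2$, so the configuration has many potential intersections, and each must be resolved either by a further application of Claim~1 or by minimality of bad pairs. We would first try to show that the extra intersections force a bad pair of the form $(v_1,\cdot)$ or $(v_2,\cdot)$ whose Claim~1 witness yields $B_1$ or $F_2$. Only if that fails would we fall back on identifying $K$, $M$ or $H_0$ directly.
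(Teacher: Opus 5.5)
Your setup matches the paper. Reducing the nested case to Claim~2, splitting into the six class patterns, and applying Claim~1 to both bad pairs is exactly how the paper starts. But the proposal stops where the proof begins. In no case do you pin down a configuration or exhibit an induced forbidden subgraph; ``we expect $B_1$ or $F_2$'' is not an argument.

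The heuristics you would use to fill this in are also partly wrong. Claim~1 puts $z_1,z_2$ in the class \emph{opposite} to $u$. Hence:
\begin{itemize}
\item When $v_1,v_2$ are opposite to $u$, the four vertices $v_1,v_2,z_1,z_2$ all lie in one class and are pairwise non-adjacent. Your ``main obstacle'' paragraph has this backwards.
\item When $u,v_1,v_2$ lie in one class, $z_1,z_2$ are common neighbours of $v_1,v_2$, so $v_1z_1v_2z_2$ is a $4$-cycle. Moreover every neighbour of $u$ is adjacent to both $v_1$ and $v_2$, because $I(u)\subset I(v_1)\cap I(v_2)$.
\item ``Separating $u$ from $z_1$ or $z_2$ by copy-freeness'' never applies, since $u$ and $z_1,z_2$ are never in the same class.
\end{itemize}

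The same-class case carries the weight in the paper (Case~1 of Claim~3). There you need copy-freeness for \emph{both} pairs $v_1,v_2$ and $z_1,z_2$, since each pair is locally indistinguishable. You must also place a neighbour $w$ of $u$; it is adjacent to $v_1$ and $v_2$ and, by Claim~2, protrudes from both intervals. The argument then splits according to the position of $w$, and according to which of the vertices forced by the ``cannot be shortened'' arguments meet the private neighbour of $v_1$. The resulting obstructions are $F_4$, $F_9$, $F_{12}$, $F_2$, $F_5$ and $F_{11}$.

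Because this configuration contains a $4$-cycle, the tree $B_1$ is not the natural target. Four of the six graphs needed ($F_4,F_9,F_{11},F_{12}$) are missing from your candidate list, so the Claim~2 patterns you are transplanting do not fit. The opposite-class and mixed cases (Cases~3 and~5 of the paper) are each closed with $F_5$, after an explicit chain of witnesses. Until you carry out these constructions and verify the induced subgraphs, the claim is not proved.
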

\begin{proof}\renewcommand{\qedsymbol}{}
To prove the Claim we assume to the contrary that an interval is contained in two 
distinct intervals. Now we consider the following cases.\par
\begin{cas}
Suppose $(x,x_1)$ and $(x,x_2)$ are bad pairs, where $x_1$ and $x_2$ 
are distinct vertices. Since neither $(x_1,x_2)$ nor $(x_2,x_1)$ is a bad 
pair we may assume that $l(x_1)<l(x_2)<l(x)<r(x)<r(x_1)<r(x_2)$. By 
Claim~1, there exists vertices $y_1$ and $y_2$ such that 
$l(x_2)<r(y_1)<l(x)$ and $r(x)<l(y_2)<r(x_1)$. Also, by Claim~2, 
$l(y_1)<l(x_1)$ and $r(y_2)>r(x_2)$. Since the bigraph $B$ is connected, 
$x$ cannot be an isolated vertex. Thus, there exists a vertex $y$ such 
that $I(y)$ satisfies the conditions: $l(x_1)<l(y)<l(x_2)$ and 
$r(x_1)<r(y)<r(x_2)$. Now, since $x_1$ and $x_2$ are copies but $B$ is 
copy-free, there exists a vertex $y_3$ such that $I(y_3)$ intersects 
$I(x_1)$ only. Again, $y_1$ and $y_2$ are copies, there exists 
a vertex $x_3$ such that $I(x_3)$ intersects $I(y_2)$ only. Now, since 
$r(y)$ cannot be shorten to the left, there exists a vertex $x_4$ 
such that $l(x_4)<r(y)$. Also, by Claim~2, $r(x_4)>r(x_2)$. Then, obviously
$I(x_4)$ intersects $I(y_2)$. Now, the vertices $x, x_1, x_2, x_3, x_4, 
y, y_1, y_2$ and $y_3$ induce $F_4$, which is a contradiction.\par
Next, we consider another situation. Since $l(y)$ cannot be shorten to the 
right, there exists a vertex $x_5$ such that $l(y)<r(x_5)$. Also, $l(y_1)$
cannot be shorten to the right there exists a vertex $x_6$ such that 
$r(x_6)>l(y_1)$. Now if, $I(x_5)\cap I(y_3)=\emptyset$ and 
$I(x_6)\cap I(y_3)=\emptyset$, then the vertices $x_1, x_2, x_3, x_4, 
x_5, x_6, y, y_1, y_2, y_3$ induce $F_9$, which is a contradiction. 
Again if, $I(x_5)\cap I(y_3)\neq \emptyset$ and $I(x_6)\cap I(y_3)
\neq \emptyset$, then the vertices $x, x_1, x_2, x_3, x_4, x_5, x_6, y, 
y_1, y_2, y_3$ induce $F_{12}$. In the other case if, $I(x_5)\cap I(y_3)
=\emptyset$, and $I(x_6)\cap I(y_3)\neq \emptyset$, then the vertices 
$x, x_1, x_4, x_5, y, y_1, y_2, y_3$ induce $F_2$. Lastly, suppose 
$I(x_5)\cap I(y_3)\neq \emptyset$, and $I(x_6)\cap I(y_3)=\emptyset$. 
Now, as $l(x_2)$ cannot be prolonged to the left, there exists a vertex 
$y_4$ such that $r(y_4)<l(x_2)$. Let $I(y_4)\cap I(x_5)=\emptyset$, 
then the vertices $x, x_1, x_4, 
x_5, y, y_1, y_2, y_4$ induce $F_2$. In the other cases we can similarly 
derive $F_2$ as an induce subgraph. Thus, we have a contradiction in any 
case. \par
Next, we consider another possibility, where $l(x)<r(y)<r(x)$ and $l(y)<
l(x_1)$. Since $x_1, x_2$ are copies, there exist $y_3$ such 
that $I(y_3)\cap I(x_1)\neq \emptyset$; also since $y_1$ and $y_2$ are 
copies, there exists a vertex $x_3$ such that $I(x_3)$ intersects 
$I(y_2)$ only. Next, let $l(y_1)$ and $l(y_3)$ are less than $l(y)$ as $I$ contains minimum number of bad pairs. 
Since $l(y_3)$ and $l(y_1)$ cannot be shorten to the left, there exist 
a vertex $x_4$ such that $l(y_3)<l(y_1)<r(x_4)$ and $r(x_4)<l(y)$. Now, the vertices $x, x_1, x_2, x_3, x_4, y, y_1, y_2, y_3$ 
induce $F_5$, which is a contradiction.\par
Finally, we consider the following possibility. As before, $I(y)$ is 
such that $l(x_1)<l(y)<l(x_2)$ and $r(x_1)<r(y)<r(x_2)$. Since $r(y)$ 
cannot be shorten to the left, we have a vertex $x_3$ such that 
$l(x_3)<r(y)$ and $r(x_3)>r(x_2)$. Again, by Claim~2, $l(x_3)$ must not 
be prolonged to the left, we have an interval $I(y_4)$ such that 
$r(y_4)<l(x_3)$ and $l(y_4)<l(x_1)$ also $ r(y_4)<r(x_1)$. Since $y_1$ and $y_4$ are not copies, 
$I(y_4)$ intersects $I(x)$. Now, $x_1$ and $x_2$ are copies, we have 
a vertex $y_3$ such that $I(y_3)$ intersects $I(x_1)$ only. Then, the 
vertices $x, x_1, x_2, x_3, y, y_1, y_2, y_3$ and $y_4$ induce $F_{11}$, 
which is a contradiction.
\end{cas}
\begin{cas}
Suppose $(y,y_1)$ and $(y,y_2)$ are bad pairs, where $y_1$ and $y_2$ are distinct vertices. \par
This case is similar to the previous one (here $x,x_1$ and $x_2$
are respectively replaced by $y,y_1$ and $y_2$) so omitted.
\end{cas}
\begin{cas}
Suppose $(y,x_1)$ and $(y,x_2)$ are bad pairs, where $x_1$ and $x_2$ are 
distinct vertices.\par
Since neither $(x_1,x_2)$ nor $(x_2,x_1)$ is a bad pair we may assume that 
$l(x_1)<l(x_2)<l(y)<r(y)<r(x_1)<r(x_2)$. By Claim~1, there exist vertices 
$x_3$ and $x_4$ such that $l(x_2)<r(x_3)<l(y)$ and $r(y)<l(x_4)<r(x_1)$. 
Again, by Claim~2, no interval contains two distinct intervals, we have 
$l(x_3)<l(x_1)$ and $r(x_4)>r(x_2)$. Since $x_4$ is not an isolated vertex, 
we have a vertex $y_2$ such that $I(y_2)$ intersects $I(x_4)$. Again, 
$r(x_1)$ cannot be shorten to the left and the bigraph is connected, we have $l(y_2)<r(x_1)$. 
Therefore $I(y_2)$ intersects $I(x_1)$, $I(x_2)$ and 
$I(x_4)$. By a symmetric argument, we have an interval $I(y_1)$ such that 
$l(x_2)<r(y_1)$ and $I(y_1)$ intersects $I(x_1)$, $I(x_2)$ and $I(x_3)$. 
Also, by Claim~2, $l(y_1)<l(x_1)$ and $r(x_2)<r(y_2)$. Now, the vertices 
$x_1$ and $x_2$ are copies but the bigraph is copy-free, we have a vertex 
$y_3$ such that $I(y_3)$ intersects $I(x_2)$ and consequently it  
intersects $I(x_4)$ and $r(y_2)<r(x_4)<r(y_3)$ and $r(x_1)<l(y_3)<r(x_2)$. Since $l(x_1)$ should not be 
shrinked towards right, there exists a vertex $y_4$ such that 
$l(x_1)<r(y_4)$. Then, $I(y_4)$ intersects $I(x_1)$ and $I(x_3)$. Again, 
$(y_4,y_1)$ is not a bad pair, as $l(y_4)<l(y_1)$. So there exists vertex $x_1'$ such that $r(x_1')<l(y_1)$. Again $(y,y_2)$ is not a bad pair, so there exists a vertex $x$ such that $l(y)<r(x)<r(y)$ and $l(x)<l(x_1)$. Next $(y_1,x)$ is not a bad pair, there exist a vertex $y_1'$ such that $l(x_3)<r(y_1')<l(x)$. Since $l(x_1)$ cannot be extended to the left, there exists a vertex $y_5$ such that $r(y_5)<l(x_1)$. Again no two of the intervals $I(y_1)$,$I(y_4)$,$I(y_5)$ and $I(x_3)$ forms a bad pair, so assume $l(y_5)<l(y_4)<l(x_3)<l(y_1)$. Now $l(y_1)$ cannot be prolonged to the left, so there exists a vertex $x_1'$ such that $r(x_1')<l(y_1)$. Similarly there exists a vertex $x_1''$ such that $r(x_1'')<l(y_4)$. Again $l(x_1')<l(y_5)$ and $l(x_1')$ cannot be shorten to the left, so there exists a vertex $y_1''$ such that $l(x_1')<r(y_1'')$. If $I(y_1')$ does not intersects $I(x_1')$ then the vertices $x$, $x_1$, $x_3$, $x_1'$, $y$, $y_4$, $y_5$, $y_1'$, $y_1''$ induce $F_5$, which is a contradiction. 
\par In the other case if $I(y_1')$ intersects $I(x_1')$ then ($x_1',y_1')$ is not a bad pair. Thus we have an interval $I(x_0)$ such that $l(y_4)<r(x_0)<l(y_1')$. Now the vertices $x,x_1,x_3,x_0,y,y_4,y_5,y_1',y_1''$ induce $F_5$.
 
\end{cas}

\begin{cas}
Suppose $(x,y_1)$ and $(x,y_2)$ are bad pairs, where $y_1$ and $y_2$ are distinct vertices. This case is similar to the previous one so details are omitted.
\end{cas}

\begin{cas}
Suppose $(y,x_1)$ and $(y,y_1)$ are bad pairs, where $x_1$ and $y_1$ are 
distinct vertices. Since $(x_1,y_1)$ or $(y_1,x_1)$ is not a bad pair we 
may assume that $l(x_1)<l(y_1)<l(y)<r(y)<r(x_1)<r(y_1)$. Since $(y,y_1)$ is 
a bad pair, by Claim~1, there exists a vertex $x_2$ such that 
$l(y_1)<r(x_2)<l(y)$. Also, by Claim~2, $l(x_2)<l(x_1)$. By a similar 
reason, there exists a vertex $x_3$ such that $r(y)<l(x_3)<r(x_1)$. Also, by 
Claim~2, $r(x_3)>r(y_1)$. As $r(y_1)$ cannot be shrinked towards left, there exists a vertex $x_4$ such that $l(x_4)<r(y_1)$ and $r(x_3)<r(x_4)$. Again, $(x_3,x_4)$ is not a bad pair, there exists a vertex $y'$ such that $r(x_1)<r(y')<l(x_4)$ and also by claim 2, $l(x_1)<l(y')<l(y_1)$. Then $I(y')$ intersects both $I(x_2)$ and $I(x_3)$. Next, $r(x_1)$ cannot be extended to the right, thus we have a vertex $y_3$ such that $r(x_1)<l(y_3)<r(y')$. Now, $I(y_3)$ intersects $I(x_3)$ and $I(x_4)$. Again $l(x_1)$ cannot be extended to the left, thus we have a vertex $y_2$ such that $r(y_2)<l(x_1)$ and $I(y_2)$ intersects $I(x_2)$. Now the vertices $x_1, x_2, x_3, x_4, y, y', y_1, y_2$ and $y_3$ induce $F_5$. Which is a contradiction. 

\end{cas}
\begin{cas}
Suppose $(x,x_1)$ and $(x,y_1)$ are bad pairs, where $x_1$ and $y_1$ are 
distinct vertices. This case is similar to the previous one so detail proof 
is omitted. Thus proof of Claim~3 is complete.
\end{cas}
\end{proof}\vspace{-.5cm}
As in \cite{ds} a bad pair $(u,v)$ is \emph{clean}, if there exist vertices $z_1$ and $z_2$ such that $r(z_1)=l(v)$ and $r(v)= l(z_2)$, where $z_1$, $z_2$ and $u$ are vertices of different partite sets of $V(B)$. \par
Now, we shall study the structure of $I$ and the corresponding properties 
of the bigraph $B$ forced by a bad pair $(u',u)$, where $u$ and $u'$ are 
vertices of the same partite set of $V(B)$. A vertex $u$ is said to be left of a 
vertex $v$ (in representation $I$) if $r(u)<l(v)$. Similarly, $v$ is said 
to be right of $u$, where $u, v\in X\cup Y$. Next, $I(u)$ and $I(v)$ are 
two intersecting intervals such that $l(u)<l(v)$ and $r(u)<r(v)$, then 
$v$ is \textit{distinguishable} from $u$ to the left if, there is a vertex $u_1$ 
such that $r(u_1)<l(v)$ and $I(u)$ intersects $I(u_1)$, here $u_1$ and 
$v$ are vertices of different partite sets. Similarly, $u$ is 
\textit{distinguishable} from $v$ to the right if, there is a vertex $v_1$ such 
that $I(v_1)$ intersects $I(v)$ and $r(u)<l(v_1)$, where $u$ and $v_1$ 
are vertices of different partite sets. In short, we say that the 
vertices $u_1$ and $v_1$ distinguishe $u$ and $v$ from the left and 
right respectively. Let $(u',u)$ be a bad pair, where $u$ and $u'$ 
belong to the same partite set. Then Fig.~\ref{bad} shows that 
there is a structure in $I$ corresponding to this bad pair. Let us 
introduce two positive integers $k_r(u',u)$ and $k_l(u',u)$, which 
indicates how large this structure in the right and left respectively.\par
\begin{figure}[H]
\centering
\begin{tikzpicture}
\pgfmathsetmacro{\d}{0.15}
\draw \foreach \p/\q/\r/\s in {0/0/2/Z'_2,.2/.5/2/,1.7/1/2/Z'_1,2.4/1.5/2/,3.5/2/4/,4.7/1.5/1.5/u',6.5/1.5/2/,7.2/1/2/Z_1,8.7/.5/2/,8.9/0/2/Z_2}
{
(\p,\q)--(\p+\r,\q)
(\p+\d,\q-\d)--(\p,\q-\d)--(\p,\q+\d)--(\p+\d,\q+\d)
(\p+\r-\d,\q-\d)--(\p+\r,\q-\d)--(\p+\r,\q+\d)--(\p+\r-\d,\q+\d)
(\p+0.5*\r,\q-0.4) node{$\s$}
}
(5.5,2.3)node{$u$};

{\tikzstyle{every node}=[circle, draw, fill=black,
                        inner sep=0pt, minimum width=2pt]
\draw (11.5,0)node{} (12,0)node{} (12.5,0)node{} (-.5,0)node{} (-1,0)node{} (-1.5,0)node{};}
\end{tikzpicture}
\caption{The structure of the representation $I$ of $B$ forced by the bad pair $(u',u)$.}
\label{bad}
\end{figure}

For a bad pair $(u',u)$, let $Z_1$ be the set of vertices to the right 
of $r(u')$ and $Z_1'$ be the set of vertices to the left of $l(u')$. By Claim~1, we conclude that 
$|Z_1|\geq 1$ and $|Z_1'|\geq 1$. The vertices of $Z_1$ (or $Z_1'$) 
and $u'$  belong to opposite partite sets. Without loss of generality, 
we consider the bad pair $(x',x)$ (since for the bad pair $(y,x)$, it 
can be shown that if $|Z_1|> 1$ and/or $|Z_1'|> 1$ then the corresponding bigraph $B$ has a $\mathcal{U}$-representation). Then the vertices of $Z_1$ (or $Z_1'$) 
belong to $Y$. If, $|Z_1|=1$, then $k_r(u',u)=1$ and we stop here (similar 
situation arises if, $|Z_1'|=1)$. So, we assume $|Z_1|\geq 2$ and 
$|Z_1'|\geq 2$.\par
Next, let $|Z_1|=2$ and $Z_1=\{y_1, y_2\}$ such that $l(y_1)<l(y_2)$. Suppose, 
there exists a vertex $x''$ which distinguishes $y_1$ and $y_2$ from the 
left i.e. $l(y_1)<r(x'')<l(y_2)$. Now, $l(x'')>l(x)$ is not possible by 
Claim~2. Again, $l(x'')\leq l(x)$ is not possible by Claim~3. Thus existence 
of the vertex $x''$ is not possible. Again, $(y_2, y_1)$ is not a bad 
pair, otherwise we have a vertex $x_0$ by Claim~1, such that $l(y_1)<
r(x_0)<l(y_2)$, which is a contradiction. Thus, we have $l(y_1)<l(y_2)
<r(y_1)<r(y_2)$. If $Z_1'=\{y_1', y_2'\}$ then similarly we have 
$l(y_2')<l(y_1')<r(y_2')<r(y_1')$.\par
Now, we shall study different possibilities forced by the bad pair 
$(x', x)$. First, we shall show that $|Z_1|\leq 2$ or $|Z_1'|\leq 2$. By 
contradiction, we assume that $Z_1=\{y_1, y_2, y_3\}$ and 
$Z_1'=\{y_1', y_2', y_3'\}$. Then, as in the case of $|Z_1|=2$ and 
$|Z_1'|=2$, we have $l(y_1)<l(y_2)<l(y_3)<r(y_1)
<r(y_2)<r(y_3)$ and $l(y_3')<l(y_2')<l(y_1')<r(y_3')<r(y_2')<r(y_1')$. 
Now, no two vertices  of $Z_1$ (or $Z_1'$) form a bad pair, there exist 
vertices $x_1$ and $x_2$ such that $r(y_1)<l(x_1)<r(y_2)$ and 
$r(y_2)<l(x_2)<r(y_3)$. Also, $r(x_1)<r(x_2)$ since $(x_2, x_1)$ is not a 
bad pair. Again as before,  $(x_1,y_3)$ and $(x_2,y_3)$ are not bad pairs. So, we 
have $r(y_3)<r(x_1)<r(x_2)$.\par
By a similar argument, there exist $x_1'$ and $x_2'$ such that $l(y_2')<
r(x_1')<l(y_1')$ and $l(y_3')<r(x_2')<l(y_2')$ also $l(x_2')<l(x_1')<l(y_3')$. Again, $x'$ is not an 
isolated vertex, there exist a vertex $y'$ such that $I(y')$ intersects 
$I(x')$ and it does not form any bad pair with any intervals. Now, the vertices $x, x', x_1, 
x_2, x_1', x_2', y', y_1, y_2, y_3, y_2', y_3'$ induce $B_2$, which is 
a contradiction.\par
Next, assume $Z_1=\{y_1, y_2, y_3\}$ and $Z_1'=\{y_1', y_2'\}$. Then, as 
before there exists a vertex $x_1$ distinguishes $y_1$ and $y_2$ from 
the right and $x_2$ distinguishes $y_2$ and $y_3$ from the right. Also, 
there exists a vertex $x_1'$ distinguishes $y_1'$ and $y_2'$ from the 
left. Again, $y_1$ and $y_1'$ are copies, there exist a vertex $x''$ such 
that $l(y_1')<l(x'')<l(x)$ and $r(y_1')<r(x'')<l(x')$. Now, the vertices 
$x, x', x'', x_1, x_2, x_1', y', y_1, y_2, y_3, y_1', y_2'$ induce 
$B_2$, which is a contradiction.\par
Next, assume $Z_1=\{y_1, y_2, y_3\}$ and $Z_1'=\{y_1'\}$. Also, as before 
let the vertex $x_1$ distinguishes $y_1$ and $y_2$ from the right and 
$x_2$ distinguishes $y_2$ and $y_3$ from the right. Since $x'$ is not an 
isolated vertex, there exists a vertex $y$ such that $I(y)$ intersects 
$I(x')$ and $(x', y)$ is not a bad pair. By Claim~2, suppose $l(y)<l(x)$. 
Now, $y_1$ and $y_1'$ are copies, there exists a vertex $x_2'$ such that 
$I(x_2')$ intersects $I(y_1')$. Then, we can assume that $I(y)$  intersects 
$I(x_2')$. Again if, $(x_2', y)$ is a bad pair then by Claim~1, there exist a 
vertex $y'$ such that $r(x_2')<l(y')<r(y)$. Then, $I(y')$ intersects 
$I(x')$ and $I(x)$ and assume $r(x)<r(y')<r(y_1)$. Now, the vertices 
$x, x', x_1, x_2, x_2', y, y', y_1', y_1, y_2$ and $y_3$ induce $H_0$, 
which is a contradiction. In the other case if, $l(x_2')<l(y)$ such that 
$(x_2', y)$ is not a bad pair. Then, there exists a vertex $x''$, such 
that $r(x'')<l(y)$ and $I(x'')$ intersects $I(y_1')$. Now, the vertices 
$x, x', x_1, x_2, x'', x_2', y, y_1', y_1, y_2, y_3$ induce the bigraph 
$M$, which is again a contradiction.\\
Now, we consider the case where $Z_1=\{y_1, y_2\}$, $Z_1'=\{y_1'\}$ 
and $Z_2=\{x_1, x_2\}$. Then, we have $l(y_1)<l(y_2)<r(y_1)<r(y_2)$. The 
intervals $I(x_1)$ and $I(x_2)$ intersect $I(y_2)$ only since $y_1$ and 
$y_2$ are not copies. Assume $l(x_1)<l(x_2)$ and $r(x_1)<r(x_2)$ as 
$(x_2, x_1)$ is not a bad pair. Again, $x'$ is not an isolated vertex, 
there exist a vertex $y$ such that $I(y)$ intersects $I(x')$ and 
$l(y_1')<l(y)<l(x)$, also $(x',y)$ is not a bad pair by claim 3. Again, $y_1$ and $y_1'$ are 
 copies but the bigraph is copy-free, there exist a vertex $x_1'$ such that $l(x_1')<l(y_1)$ and 
$r(x)<r(x_1')<r(y_1)$. So $I(x_1')$ intersects $I(y_1)$ and $I(y_2)$. 
Again, $(y_1',y)$ is not a bad pair, there exists a vertex $x_2'$ such 
that $I(x_2')$ intersects $I(y_1')$ and $r(x_2')<l(y)$. Next, let 
$Z_3=\{y_3\}$ where $r(x_1)<l(y_3)<r(x_2)$ and $r(y_3)>r(x_2)$, i.e. 
$k_r(x',x)=3$ and we stop here. Now, if there exists a vertex $y'$ 
distinguishes $x_1$ and $x_2$ from the left then $l(x_1)<r(y')<l(x_2)$ 
and $r(x_1')<l(y')$. Now, the vertices $x, x', x_1', x_2', x_1, x_2, y, 
y_1, y_2, y_1', y_3$ and $y'$ induce $S_1$, which is a contradiction. 
Next, let $Z_2=\{x_1, x_2, x_3\}$ i.e. $|Z_2|=3$. Then the intervals 
$I(x_1)$, $I(x_2)$ and $I(x_3)$ intersects $I(y_2)$ only since the bigraph 
is copy-free. Again, no two intervals of $Z_2$ forms a bad pair, we have 
$l(x_1)<l(x_2)<l(x_3)$ and $r(x_1)<r(x_2)<r(x_3)$. Consider a vertex 
$y_3$ distinguishes $x_1$ and $x_2$ from the right and a vertex $y_4$ 
distinguishes $x_2$ and $x_3$ from the right. Then, the vertices $x, x', 
x_1', x_2', x_1, x_2, x_3, y, y_1', y_1, y_2, y_3$ and $y_4$ induce 
$\widetilde{\mathcal{S}}_1$, which is a contradiction.\par
Next, we consider the case, where $Z_1=\{y_1, y_2\}$ and 
$Z_1'=\{y_1', y_2'\}$. Also, let $Z_2'=\{x_1'\}$ and $Z_2=\{x_1, x_2\}$. 
Then $|Z_2'|=1$ i.e. $k_l(x',x)=2$ and we stop here. Now, assume 
$l(x_1)<l(x_2)$ and $r(x_1)<r(x_2)$ since $(x_2,x_1)$ is not a bad 
pair. Again, $x'$ is not an isolated vertex, there exists a 
vertex $y'$ such that $I(y')$ intersects $I(x')$ and by claim 2 and claim 3, $l(x')<l(y')$ 
and $r(x)<r(y')<r(y_1)$. Now, $y_1$ and $y_1'$ are not copies, there 
exists a vertex $x''$ such that $I(x'')$ intersects $I(y_1')$, 
consequently $I(y_2')$ and $l(y_1')<l(x'')<l(x)$ and $r(y_1')<r(x'')<l(y')$. Let $y$ be a vertex 
distinguishes $x_1$ and $x_2$ from the left. If $l(y)<r(x)$, then 
$\{y, y_1, y_2\}$ forms $Z_1$, i.e. $|Z_1|=3$, which a contradiction. 
So, let $l(y)>r(x)$. Again, let the vertex $y_3$ distinguishes $x_1$ 
and $x_2$ from the right. Then the vertices $x, x', x'', x_1', x_1, x_2, 
y, y', y_1', y_2', y_1, y_2$ and $y_3$ induce $R_1$, which is a 
contradiction. Finally, if 
$Z_2=\{x_1, x_2, x_3\}$ then assume $l(x_1)<l(x_2)<l(x_3)$ and $I(x_1)$, 
$I(x_2)$, $I(x_3)$ intersect $I(y_2)$. Also, since no two vertices of $Z_2$ form 
a bad pair, we assume $r(x_1)<r(x_2)<r(x_3)$. Suppose $y_3$ distinguishes 
$x_1$ and $x_2$ from the right and $y_4$ distinguishes $x_2$ and $x_3$ 
from the right. Now, the vertices $x, x', x'', x_1', x_1, x_2, x_3, y', 
y_1', y_2', y_1, y_2, y_3$ and $y_4$ induce $\widetilde{R}_1$, which is 
also a contradiction.\par
Next, consider the case, where $Z_1=\{y_1, y_2\}$, $Z_1'=\{y_1'\}$, 
$Z_2=\{x_1, x_2\}$ and $Z_3=\{y_4\}$. Thus, $k_l(x',x)=1$ and we stop 
here. Also, $|Z_3|=1$ and $k_r(x',x)=3$ and we stop here. As before we 
conclude $I(x_1)$ and $I(x_2)$ intersect $I(y_2)$ and $l(x_1)<l(x_2)<
r(x_1)<r(x_2)$. Since $y_1$ and $y_1'$ are copies, there exists a vertex 
$x''$ such that $I(x'')\cap I(y_1')\neq \emptyset$ and $r(y_1')<r(x'')$ 
and $l(y_1')<l(x'')$. Again, $x'$ is not an isolated vertex, there a 
vertex $y$ such that $I(y)\cap I(x')\neq \emptyset$. Assume $r(y)<r(x')$ 
and $l(y_1')<l(y)$. Now, if $(x'',y)$ is a bad pair, by Claim~1, we have 
vertex $y'$ such that $r(x'')<l(y')<r(y)$ and $r(y')>r(x)$. Next, if 
there exists a vertex $y_3$ distinguishes $x_1$ and $x_2$ from the 
left and $I(y_3)\cap I(x)=\emptyset$, otherwise $|Z_1|=3$ and as before we can produce a 
contradiction. Now, the vertices $x, x', x'', x_1, x_2, y, y', y_1', 
y_1, y_2, y_3, y_4$ induce $H_1'$, which is a contradiction. Again, let 
$(x'', y)$ is not a bad pair. Then $l(x'')<l(y)$. Also, $(y_1', x'')$ is 
not a bad pair, then $l(y_1')<l(x'')$. Since $l(y)$ cannot be prolonged to 
the left, there exist a vertex $x_0$ such that $r(x_0)<l(y)$ and 
$I(x_0)\cap I(y_1')\neq \emptyset$. Now, the vertices $x, x', x'', x_0, 
x_1, x_2, y, y_1', y_1, y_2, y_3, y_4$ induce $M_1$, which is also a 
contradiction.\par 
Again, if $Z_2=\{x_1, x_2, x_3\}$ and no two vertices of $Z_2$ form a 
bad pair, we assume $l(x_1)<l(x_2)<l(x_3)$ and $r(x_1)<r(x_2)<r(x_3)$. 
Let, $y_3$ distinguishes $x_1$ and $x_2$ from the right and $y_4$ 
distinguishes $x_2$ and $x_3$ from the right. Then, the vertices $x, x', 
x'', x_1, x_2, x_3, y, y', y_1', y_1, y_2, y_3, y_4$ induce 
$\widetilde{H}_1'$ in the first case and the vertices $x, x', x'', x_0, 
x_1, x_2, x_3, y, y_1', y_1, y_2, y_3$ and $y_4$ induce 
$\widetilde{M}_1$ in the second case. Thus we have produced a contradiction in both the cases.\par
Next, suppose for the bad pair $(x',x)$, $Z_1=\{y_1, y_2\}$, $Z_2=\{x_1, x_2\}$, $Z_3=\{y_4\}$ and $Z_1'=\{y_1'\}$. Thus $k_r(x',x)=3$ and 
$k_l(x',x)=1$ and we stop here. Since $x'$ is not an isolated vertex, there 
exist a vertex $y$ such that $I(y)$ intersects $I(x')$, also $r(y)<r(x')$ 
and $l(y)<l(x)$. As $l(x)$ cannot be prolonged to the left, there 
exists a vertex $y'$ such that $r(y')<l(x)$ and $l(y')<l(y_1')<l(y)$. 
Also $l(y)$ (and $r(y)$) should not be shorten to the right (and left), 
there exists a vertex $x''$ (and $x_0$) such that $l(y)<r(x'')$ and 
$l(x_0)<r(y)$. Also, $l(x'')<l(y')$ and $r(x_0)>r(x)$. If, there exists a 
vertex $y_3$ distinguishes $x_1$ and $x_2$ from the left, then as before 
$l(y_3)>r(x)$. Now, the vertices $x, x', x'', x_0, x_1, x_2, y, y', y_1',  y_1, 
y_2, y_3$ and $y_4$ induce $N_1$. Again, if $Z_2=\{x_1, x_2, x_3\}$, 
as before we assume $l(x_1)<l(x_2)<l(x_3)$ and $r(x_1)<r(x_2)<r(x_3)$ and 
each vertex of $Z_2$ intersects $y_2$ only. Now, suppose $y_3$ distinguishes $x_1$ and $x_2$ from the right and $y_4$ distinguishes 
$x_2$ and $x_3$ from the right then the vertices $x, x', x'', x_0, x_1, 
x_2, x_3, y, y', y_1', y_1, y_2, y_3$ and $y_4$ induce $\widetilde{N}_1$. 
Which is also a contradiction.\par
Next, we consider the another possibility. As before, we assume $Z_1=\{y_1, y_2\}$, $Z_2=\{x_1, x_2\}$, $Z_3=\{y_4\}$ and $Z_1'=\{y_1'\}$. Now, 
since $x'$ is not an isolated vertex, we have a vertex $y$ where $I(y)\cap 
I(x')\neq \emptyset$ 
such that $l(x')<l(y)$ and $r(x)<r(y)<r(y_1)$. Since $y_1$ and $y_1'$ are 
copies, there exists a vertex $x''$ such that $l(y_1)<l(x'')<l(y_2)$ and 
$r(y_1)<r(x'')<r(y_2)$. Then $I(x'')$ intersects $I(y)$, $I(y_1)$ and 
$I(y_2)$. Now, if there exists a vertex $y_3$ distinguishes $x_1$ and 
$x_2$ from the left and a vertex $x_0$ distinguishes $y_1$ and $y_2$ from the left then the vertices $x, x', x'', x_0, x_1, x_2, y, y_1', 
y_1, y_2, y_3$ and $y_4$ induce $P_1$, which is a contradiction. Again, 
as before if, $Z_2=\{x_1, x_2, x_3\}$ and $y_3$ distinguishes $x_1$ and 
$x_2$ from the right and $y_4$ distinguishes $x_2$ and $x_3$ from the 
right. Then, the vertices $x, x', x'', x_0, x_1, x_2, x_3, y, y_1', y_1, y_2, 
y_3$ and $y_4$ induce $\widetilde{P}_1$, which is also a contradiction.\par
Next, suppose for the bad pair $(x',x)$, $Z_1=\{y_1,y_2\}$, 
$Z_2=\{x_1, x_2\}$, $Z_3=\{y_3, y_4\}$, $Z_4=\{x_3\}$ and 
$Z_1'=\{y_1'\}$. Now, $x'$ is not an isolated vertex, there exists a 
vertex $y_1''$ such that $I(y_1'')$ intersects $I(x')$ where 
$r(x)<r(y_1'')$ and $l(x')<l(y_1'')$. Again, $y_1'$ and $y_1$ are not 
copies, there exists a vertex $x_1'$ such that $l(y_1)<l(x_1')<l(y_2)$ 
and $r(y_1)<r(x_1')<r(y_2)$. Since $r(x_1')$ cannot be shorten to the left, 
there exist a vertex $y_2'$ such that $r(x_1')>l(y_2')$ and 
$r(y_2')>r(y_2)$. Next, since $l(y_2')$ cannot be extended to the left, 
there exists a vertex $x_2'$ such that $r(x_2')<l(y_2')<r(x_1')$ and 
$l(y_1)<l(x_2')<l(y_2)$. Finally, let there exists a vertex $x_3'$ 
distinguishes $y_3$ and $y_4$ from the left. Now the vertices $x, x', 
x_1', x_2', x_3', x_1, x_2, x_3, y_1', y_2', y_1'', y_1, y_2, y_3, y_4$ 
induce $Q_1$, which is a contradiction. As before if, $Z_3=\{y_3,y_4, y_5\}$ then we can similarly derive $\widetilde{Q}_1$, which is again a contradiction.\par

Next, suppose for the bad pair $(x',x)$, $Z_1=\{y_1, y_2\}$, $Z_2=\{x_1, 
x_2\}$, $Z_3=\{y_4\}$ and $Z_1'=\{y_1', y_2'\}$, $Z_2'=\{x_1', x_2'\}$, 
$Z_3'=\{y_4'\}$. Since $x'$ is not an isolated vertex, there exists a vertex 
$y'$ intersecting $x'$ and $l(y')<l(x)$ also $r(y')<r(x')$. If, there 
exists a vertex $y_3$ distinguishes $x_1$ and $x_2$ from the left and 
also a vertex $y_3'$ distinguishes $x_1'$ and $x_2'$ from the right then 
the vertices $x, x', x_1, x_2, x_1', x_2', y', y_1, y_2, y_2', y_3, y_4, 
y_3'$ and $y_4'$ induce $L_{1,1}$, which is a contradiction.\par
By induction, let $Z_i=\{u_k, u_{k+1}\}$ and $v_m$ and $v_{m+1}$ are 
vertices intersecting $I(u_{k+1})$ and to the right of $u_k$, where 
$l(v_m)<l(v_{m+1})<r(v_m)<r(v_{m+1})$. Also, suppose there exist vertices 
$u'$ and $u''$ distinguish $v_m$ and $v_{m+1}$ from the left and right 
respectively.\par
Also, by induction, let $Z_j'=\{u_p', u_{p+1}'\}$ and $v_n'$, $v_{n+1}'$ 
are vertices intersecting $I(u_{p+1}')$ and to the left of $u_p'$, where 
$l(v_{n+1}')<l(v_{n}')<r(v_{n+1}')<r(v_{n}')$. Suppose there exists vertices 
$u_0'$ and $u_0''$ distinguish $v_n'$ and $v_{n+1}'$ from the right and 
left respectively. Now, the vertices $x, x', x_1, x_2, x_1', x_2', 
\ldots , u_k, u_{k+1}, v_m, v_{m+1}, u', u'', y', y_1, y_2', y_3', y_4', 
\ldots , u_p', u_{p+1}', v_n', v_{n+1}', u_0', u_0''$ induce $L_{i, j}$. 
Which is a contradiction.\par
Again, suppose there are vertices $v_m, v_{m+1}, v_{m+2}$ intersecting 
$I(u_{k+1})$ and to the right of $u_k$ satisfying $l(v_m)<l(v_{m+1})<
l(v_{m+2})$ and $r(v_m)<r(v_{m+1})<r(v_{m+2})$. Also, the vertex $u'$ 
distinguishes $v_m$ and $v_{m+1}$ from the right and the vertex $u''$ 
distinguishes $v_{m+1}$ and $v_{m+2}$ from the right. Then, the vertices 
$x, x', x_1, x_2, x_1', x_2', \ldots , u_k, u_{k+1}, v_m, v_{m+1}, 
v_{m+2}, u', u'', y, y_1, y_2', y_3', y_4', \ldots , u_p', u_{p+1}', 
v_n', v_{n+1}', u_0'$ and $u_0''$ induce $\widetilde{L}_{i, j}$, which is 
also a contradiction.\par
Next, suppose in the left there are vertices $v_n', v_{n+1}', v_{n+2}'$ 
intersecting $I(u_{p+1}')$ and to the left of $u_p'$ satisfying 
$l(v_{n+2}')<l(v_{n+1}')<l(v_n')$ and $r(v_{n+2}')<r(v_{n+1}')<r(v_n')$. 
Also, suppose there exists vertex $u_0'$ distinguishes $v_n'$ and 
$v_{n+1}'$ from the left and the vertex $u_0''$ distinguishes $v_{n+1}'$ 
and $v_{n+2}'$ from the left. Then, the vertices $x, x', x_1, x_2, x_1', 
x_2', \ldots , u_k, u_{k+1}, v_m, v_{m+1}, v_{m+2}, u', u'', y, y_1, 
y_2', y_3', y_4', \ldots , u_p', u_{p+1}', v_n', v_{n+1}', v_{n+2}', 
u_0', u_0''$ induce $T_{i, j}$, which is again a contradiction.\par
Finally, consider the case where for the bad pair $(x',x)$, we suppose 
$Z_1=\{y_1, y_2\}$, $Z_2=\{x_1, x_2\}$, $Z_3=\{y''\}$ and $Z_1'=\{y_1', 
y_2'\}$, $Z_2'=\{x_1', x_2'\}$, $Z_3'=\{y_0''\}$. Since $x'$ is not an 
isolated vertex, there exists a vertex $y_0$ such that $l(x')<r(y_0)<
r(x')$ and $l(y_0)<l(x)$ (by Claim~2). Again, $y_1$ and $y_1'$ are copies, 
there exists a vertex $x_0$ such that $I(x_0)$ intersects $I(y_1')$ and 
obviously intersects $I(y_2')$ and let $r(y_2')<r(x_0)<r(y_1')$ and 
$l(x_0)<l(y)<l(y_1')$. Also, $l(x)$ should not be extended to 
the left, there exists a vertex $y$ such that $r(y)<l(x)$ and 
$l(x_0)<l(y)<l(y_1')$. Now, if $(y,x_0)$ is not a bad pair, then we can derive $F_4$ as an induced subgraph. Next, let $(y,x_0)$ is a bad pair. Then for the bad pair $(y,x_0)$, by Claim~1 we 
have a vertex $x''$, such that $r(y)<l(x'')<r(x_0)$. Since $I(x'')$ 
is not contained in $I(x)$, we have $r(x)<r(x'')$. Also, $I(x'')$ intersects 
$I(y_1')$, $I(y_1)$, and $I(y_2)$ but not $I(y_2')$. Obviously $I(y)$ intersects $I(x_0)$. Next, suppose there 
exists a vertex $y'$ distinguishes $x_1$ and $x_2$ from the left and 
there exists a vertex $y_0'$ distinguishes $x_1'$ and $x_2'$ from the 
right. Now, the vertices $x'', x', x_0, x_1, x_2, x_1', x_2', y, y_0, y_1, 
y_2, y_1', y_2', y', y'', y_0'$ and $y_0''$ induce $K_{1,1}$, which is a 
contradiction. Next, if $Z_2 =\{x_1, x_2, x_3\}$ and $y'$ distinguishes $x_1$ and $x_2$ from the right and $y''$ distinguishes $x_2$ and $x_3$ from the right. Now the vertices $x', x'', x_0, x_1, x_2, x_3, x_1', x_2', y, y_0,y_1, y_2, y_1', y_2', y', y'', y_0'$ and $y_0''$ induces $\widetilde{K}_{1,1}$. We have seen before that we can construct $L_{i,j}$ and $\widetilde{L}_{i,j}$ by induction. Similarly, by induction, we can derive $K_{i,j}$ and 
$\widetilde{K}_{i,j}$.\par
\par
Then from the method of construction of the bigraphs $M_1, H_1', N_1, P_1, Q_1, R_1$ and $S_1$, we conclude that as the case of  bigraphs $L_{i,j}$ and $ T_{i,j}$ we can also construct the bigraphs $M_i, H_i', N_i, P_i, Q_i,R_i$ and $S_i$. Thus we have the following properties of $I$ (representation of bigraph $B$) as state in the Claim~4\par
\begin{claim}
For the bad pair $(u',u)$, $i\in [k_r(u',u)-1]$, we have the 
following results:\par
\begin{enumerate}[(a)]
\item $|Z_i|=2$ and if, $Z_i=\{u_k, u_{k+1}\}$, then we have 
$l(u_k)<l(u_{k+1})<r(u_k)<r(u_{k+1})$.
\item The vertices of $Z_{i+1}$ belong to different partite set and are to the right of $u_k$.
\item The vertices $u_k$ and $u_{k+1}$ are not distinguishable from the 
left. 
\item If $i=k_r(u',u)$, then $|Z_i|=1$.
\end{enumerate}

\end{claim}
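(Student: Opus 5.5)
We prove Claim~4 by induction on $i$, copying at each layer the argument already given for $Z_1$ in the discussion preceding the claim. Fix the bad pair $(u',u)$ with $u,u'$ in the same partite set; by the remark above we may take $(u',u)=(x',x)$. Define the layers recursively: $Z_1$ is the set of vertices whose intervals meet $I(x)$ and lie to the right of $r(x')$. If $Z_i=\{u_k,u_{k+1}\}$, then $Z_{i+1}$ is the set of vertices whose intervals meet $I(u_{k+1})$ and lie to the right of $u_k$. The value $k_r(x',x)$ is the first index at which the layer has a single element. Adjacency in $B$ forces all vertices of a layer to lie in one partite set, and consecutive layers to alternate between the partite sets. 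Part (b) is immediate from this definition together with the fact that intervals in opposite partite sets which intersect are adjacent.

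For part (a), assume $|Z_i|\ge 2$ with $i<k_r(x',x)$. First, no two vertices of $Z_i$ form a bad pair. If one did, Claim~1 would give a separating vertex, and that vertex's interval would have to be contained in $I(x)$ (contradicting Claim~2) or would contain $I(x')$ together with another interval (contradicting Claim~3). This is the same argument used for $y_1,y_2$ in the case $|Z_1|=2$. Hence the members of $Z_i$ have the same order of left and right endpoints, which gives $l(u_k)<l(u_{k+1})<r(u_k)<r(u_{k+1})$. Next, suppose $|Z_i|\ge 3$. The explicit constructions for $|Z_1|=3$ (yielding $B_2$, $M$, $H_0$) and for $|Z_2|=3$ (yielding $\widetilde{S}_1,\widetilde{R}_1,\widetilde{H}'_1,\widetilde{M}_1,\widetilde{N}_1,\widetilde{P}_1,\widetilde{Q}_1,\widetilde{K}_{1,1},\widetilde{L}_{1,1}$) extend along the chain to produce an induced $\widetilde{L}_{i,j}$, $\widetilde{M}_i$, $T_{i,j}$, or the corresponding member of another family. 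To see this, take the chain of vertices distinguishing consecutive pairs from the right in each intermediate layer. Together with the left part $Z'_1,Z'_2,\dots$ and the vertices forced near $x'$ (non-isolation of $x'$, and copy-freeness between $y_1$ and $y_1'$), this chain reproduces exactly the index-shifted picture. That picture is a member of one of the families, and every member of these families is excluded by hypothesis (a) of the Theorem. So $|Z_i|=2$.

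For part (c), suppose some vertex $w$ distinguishes $u_k$ and $u_{k+1}$ from the left, i.e.\ $l(u_k)<r(w)<l(u_{k+1})$. Then $w$ lies in the partite set of the previous layer $Z_{i-1}$ and intersects the interval $I(u_{k-1})$ of that layer. If $w$ does not lie to the left of the earlier element of that layer, then $w\in Z_{i-1}$, giving $|Z_{i-1}|=3$, which contradicts (a) at level $i-1$. Otherwise $w$ must lie to the right of $x$, and together with the already constructed chain it creates an induced member of $\mathcal{K},\mathcal{P},\mathcal{L},\dots$. This is exactly how $y_3$ with $l(y_3)>r(x)$ was handled above, where the contradiction came from $R_1$, $H'_1$, $N_1$, $P_1$. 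Part (d) is the definition of $k_r$ together with (a): a layer of size $0$ is impossible because $u_{k+1}$ is not a copy of $u_k$, so the process stops exactly at a layer of size $1$.

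The main obstacle is the inductive step in (a) and (c). We must verify that the vertices produced at layer $i$ together with the left-hand structure induce precisely a member of the relevant family, with no extra edges between far-apart layers. I would handle this by proving a separate auxiliary statement first: any two vertices in layers $Z_a$ and $Z_b$ with $|a-b|\ge 2$ have disjoint intervals. This follows from (b) and the interval ordering. With this in hand, the induced-subgraph check reduces to the base cases already computed, shifted along the path.
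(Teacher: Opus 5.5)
Your plan is the paper's: induct along the layers, take the configurations worked out before the claim as the base, and turn every violation into a member of one of the forbidden families. The paper itself only treats the first layers explicitly and then says ``by induction''. The gap is your assertion that the level-one arguments carry over.

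For $Z_1=\{y_1,y_2\}$, a separator from Claim~1, or a left distinguisher, has its right end in $(r(x'),r(x))$. That is the only reason it must lie inside $I(x)$ (Claim~2) or contain $I(x')$ (Claim~3). For $i\ge 2$, write $Z_{i-1}=\{a,a'\}$ and $Z_i=\{b,b'\}$. Such a vertex $w$ has $r(w)\in(l(b),l(b'))\subset(r(a),r(a'))$, far to the right of $x$, so Claims~2 and~3 give nothing. Your split in (c) is also a transplant of the $i=2$ picture. Membership in $Z_{i-1}$ is decided by $Z_{i-2}$, not by where $w$ sits relative to $Z_{i-1}$, and ``right of $x$'' holds automatically.

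The correct trichotomy is:
\begin{itemize}
\item if $l(a)<l(w)<l(a')$, then $w\in Z_{i-1}$ and $|Z_{i-1}|\ge 3$;
\item if $l(w)<l(a)$, then $(a,w)$ is a bad pair;
\item if $l(w)>l(a')$, then $(w,a')$ is a bad pair.
\end{itemize}
Nothing you have proved excludes the last two cases. They are exactly Claim~5 and its Corollary. The paper proves these only after Claim~4, and uses $|Z_{i-1}|=|Z_i|=2$ as input.

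The same hole affects the overlap $l(u_{k+1})<r(u_k)$, which you never justify. At $i=1$ it comes from Claim~2 ($I(y_1)\not\subset I(x)$). At $i\ge 2$ it needs $I(b)\not\subset I(a')$, which is again the Corollary to Claim~5. Your auxiliary disjointness statement for layers at distance $\ge 2$ depends on this overlap, so it inherits the gap.

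To repair it, run one induction on $i$ whose hypothesis covers Claim~4(a),(c) together with Claim~5 and its Corollary for all layers up to $Z_{i-1}$. At level $i$:
\begin{itemize}
\item (c) follows from the trichotomy;
\item $r(b)<r(b')$ follows from (c), because Claim~1 applied to a bad pair $(b',b)$ gives a vertex $w$ with $l(b)<r(w)<l(b')$;
\item the overlap follows from the Corollary applied to $(b,a')$;
\item only the bound $|Z_i|\le 2$ needs the forbidden families.
\end{itemize}
Then establish Claim~5 at level $i$ before moving on.
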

\par
Also, from symmetry we have the similar results for the vertices of 
$Z_j'$ where $j\in [k_l(u',u)-1]$ and if $j=k_l(u',u)$, then 
$|Z_{j}'|=1$. Also if $Z_j'=\{u_k', u_{k+1}'\}$, then $u_k'$ and $u_{k+1}'$ are not distinguishable from the right.\par
\begin{claim}
Let for a bad pair $(u',u)$, $i\in [k_r(u',u)-1]$. Then, there exists no 
vertex $z\in V(B)$ such that $(u_i,z)$ or $(u_{i+1},z)$ is a bad pair where $Z_i=\{u_i, 
u_{i+1}\}$. 
\end{claim}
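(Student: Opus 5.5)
We argue by contradiction, in the same style as the proofs of Claims~2 and~3. Suppose that for the bad pair $(u',u)$ and some $i\in[k_r(u',u)-1]$ with $Z_i=\{u_i,u_{i+1}\}$, there is a vertex $z$ such that $(u_i,z)$ is a bad pair, i.e.\ $I(u_i)\subset I(z)$. The case $(u_{i+1},z)$ is symmetric, with left and right interchanged in the argument. The aim is to show that $I(z)$ must sit in a position already excluded by Claims~1--4, or else that an induced copy of one of the forbidden bigraphs appears.

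The first step is to locate $I(z)$. By Claim~4(a), $l(u_i)<l(u_{i+1})<r(u_i)<r(u_{i+1})$. Claim~1, applied to the bad pair $(u_i,z)$, gives vertices $w_1,w_2$ with
\[
l(z)<r(w_1)<l(u_i), \qquad r(u_i)<l(w_2)<r(z),
\]
and $w_1,w_2$ lie in the partite set opposite to $u_i$.

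We then split according to the partite set of $z$ and the relative position of $I(z)$ and $I(u_{i+1})$.

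\emph{Case 1: $r(z)\ge r(u_{i+1})$.} Then $I(z)$ contains both $I(u_i)$ and $I(u_{i+1})$, which contradicts Claim~2 (no interval contains two distinct intervals).

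\emph{Case 2: $r(z)<r(u_{i+1})$.}
\begin{enumerate}[(i)]
\item If $w_1$ is in the partite set opposite to $u_{i+1}$, then $w_1$ meets $I(u_i)$ (since $r(w_1)\ge l(z)$ and $I(z)\supset I(u_i)$ force it to reach into the structure) but satisfies $r(w_1)<l(u_{i+1})$. So $w_1$ distinguishes $u_i$ and $u_{i+1}$ from the left, contradicting Claim~4(c).
\item Otherwise $w_1$ lies in the same partite set as $u_i$. In that case I would use the chain $Z_{i-1}$ (or $u'$, $u$ when $i=1$) lying to the left. Since $l(z)<r(w_1)<l(u_i)$, the interval $I(z)$ reaches into the region of $Z_{i-1}$, and hence meets a vertex of $Z_{i-1}$. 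This forces $z$ to belong to a set $Z_j$ of the structure. By the definition of the $Z_j$ as the vertices to the right of the previous level, $z$ would then be a third member of $Z_i$, or of a neighbouring level, contradicting $|Z_i|=2$ from Claim~4(a).
\end{enumerate}

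The main obstacle is Case~2(ii), together with the case where $z$ is in the same partite set as $u_i$ and $I(z)$ is not pinned down by $Z_{i-1}$. There, rather than a position contradiction, I expect to need an explicit forbidden induced subgraph. The construction I would try first mirrors Claim~3 Case~1: take $z,u_i,u_{i+1},w_1,w_2$, a non-copy witness for $u_i$ versus $z$, and the vertices $x,x'$ of the original bad pair. I anticipate that this collection induces $F_2$ or $F_4$.

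If that fails, I would instead use the minimality of the number of bad pairs in $I$. Shrinking $I(z)$ so that it no longer contains $I(u_i)$ preserves all adjacencies, because Claim~4(c) guarantees that nothing separates $u_i$ from $u_{i+1}$ on the left. This yields a representation with fewer bad pairs, contradicting the choice of $I$.
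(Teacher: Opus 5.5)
Your Case~1 is correct and matches the paper's first step: Claim~2 forbids $I(z)$ from also containing $I(u_{i+1})$. The gap is in Case~2, which is where all the work lies.

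Claim~1 gives $r(w_1)<l(u_i)$, so $I(w_1)$ is disjoint from $I(u_i)$. The inequality $r(w_1)>l(z)$ only places $r(w_1)$ inside $I(z)$, not inside $I(u_i)$. Distinguishing $u_i$ and $u_{i+1}$ from the left requires a vertex meeting $I(u_i)$ whose right endpoint lies in $[l(u_i),l(u_{i+1}))$. So $w_1$ does not contradict Claim~4(c).

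Your split into (i) and (ii) is not a real dichotomy either. Both members of $Z_i$ lie in the same partite set (e.g.\ $Z_1=\{y_1,y_2\}\subseteq Y$), and Claim~1 puts $w_1$ opposite to $u_i$, so (ii) never occurs. Its inference that meeting a vertex of $Z_{i-1}$ makes $z$ a member of some $Z_j$ is also unjustified.

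The fallback fails as well. Claim~4(c) says nothing about the window $(l(z),l(u_i))$. If $z$ is in the partite set of $u_i$, then $zw_1$ is an edge, and shrinking $I(z)$ destroys it. Claim~1 is precisely the obstruction to such a local repair.

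Your Claim~4(c) mechanism does work for the other half, the bad pair $(u_{i+1},z)$. There Claim~2 forces $l(u_i)<l(z)<l(u_{i+1})$, so the left witness satisfies $l(u_i)<r(w_1)<l(u_{i+1})$ and really meets $I(u_i)$. The two halves are therefore not mirror images: the $Z_i$ grow only to the right, and Claim~4(c) is one-sided. The half you wrote out is the one that needs a new idea.

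That idea, which is the paper's argument, is to locate the witness $w_1$ (the paper's $v_0$) relative to the level to the left of $u_i$.
\begin{itemize}
\item For $i=1$, $w_1$ lies in the partite set of $u,u'$. Either $I(w_1)\subset I(u)$, so $I(u)$ contains two intervals, contradicting Claim~2; or $l(w_1)<l(u)$, which the paper excludes via Claim~3.
\item For $i\ge 2$, $w_1$ lies in the partite set of $Z_{i-1}$. Either it is a third member of $Z_{i-1}$, or it forms a bad pair with a member of $Z_{i-1}$, and the Claim~1 witness for that pair is then a third member of $Z_i$. Both contradict Claim~4(a).
\end{itemize}
This leftward, level-by-level use of $|Z_{i-1}|=|Z_i|=2$ is what your proposal is missing.
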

\begin{proof}\renewcommand{\qedsymbol}{}
Let $i=1$, then $Z_i=\{u_1, u_2\}$. Now,  without loss of generality $(u_1,z)$ is a bad pair 
then by Claim~2, $(u_2,z)$ is not a bad pair. Again by Claim~1, there 
exists a vertex $v_0$ such that $l(z)<r(v_0)<l(u_1)$. Now, if $I(v_0)$ 
contained in $I(u)$ we have contradiction by Claim~2, and if, $l(v_0)<l(u)$ 
then we have a contradiction by Claim~3.\par
Next if $i=k\geq 2$ and let $Z_i=\{u_k, u_{k+1}\}$, then applying 
Claim~1 to the bad pair $(u_k,z)$, we have a vertex $v_0$ such that 
$l(z)<r(v_0)<l(u_k)$. Now, if $l(v_0)<l(v_{k-1})$ then we have $|Z_{i-1}|=3$, 
which is a contradiction. In the other case if, $(v_0,v_{k-1})$ is a bad pair 
then by Claim~1, there exists a vertex $u_0$ such that $r(v_0)<l(u_0)<
r(v_{k-1})$. Then, we have $|Z_i|=3$, which is again a contradiction. 
Similarly we can prove that $(u_{i+1},z)$ is not a bad pair. This 
completes the proof of Claim~5.\par
Note that from the proof the above Claim we have the next 
Corollary.
\end{proof}
\noindent\textbf{Corollary .}
\textit{Let for a bad pair $(u',u)$, $i\in [k_r(u',u)-1]$. Then there exists no vertex $z\in V(B)$ such that $(z,u_i)$ or $(z,u_{i+1})$ is a bad pair.}\par
Now, we modify the interval representation $I$ of the bigraph $B$ step by 
step. First, we modify the intervals of $Z_k=\{u_k, u_{k+1}\},\ k\in 
[k_r(x',x)-1]$ for the bad pair $(x',x)$ as follows; $I':V\to I^{++}
\cup I^{+-}$, where $I(y_2)=[r(x),r(y_2)]$, $I(y_1)=[r(x),r(y_2))$ and 
in general, $I'(u_{k+1})=[r(u_{k-1}),r(u_{k+1})]$ and 
$I'(u_k)=[r(u_{k-1}),r(u_{k+1}))$. If $m=k_r(x',x)$, then $I'(u_m)=
[r(u_{m-1}),r(u_m)]$ and $I'(u)=I(u)$, for other $u\in V(B)$.\par
\begin{claim}
$I'$ is an interval representation of $B$.
\end{claim}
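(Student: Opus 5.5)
To prove Claim~6, I will show that for every pair of vertices $a\in X$, $b\in Y$ we have $I'(a)\cap I'(b)\neq\emptyset$ exactly when $I(a)\cap I(b)\neq\emptyset$. Since $I$ is an interval representation of $B$, this gives the claim. Only the intervals of the vertices in $Z_k$, $k\in[k_r(x',x)]$, are changed. So it suffices to look at pairs in which at least one vertex is some $u_k$ or $u_{k+1}$. For each such vertex the modification keeps the right end point $r(u_{k+1})$. It moves the left end point to the right, from $l(u_k)$ or $l(u_{k+1})$ up to $r(u_{k-1})$, with $r(u_0)=r(x)$ for $k=1$. For $u_k$ it also makes the right end open.

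The first step is to check that no edge is lost when a left end point moves to $r(u_{k-1})$. By Claim~4(a),(b) the intervals of $Z_{k-1}$ and $Z_k$ are interleaved. Every neighbour of $u_k$ or $u_{k+1}$ lying to the left of $r(u_{k-1})$ either meets $I(u_{k-1})$ or $I(u_k)$ anyway, or would distinguish $u_k,u_{k+1}$ from the left. The second possibility is excluded by Claim~4(c). Vertices whose intervals end strictly inside $(l(u_k),r(u_{k-1}))$ would form bad pairs with $u_k$ or $u_{k+1}$, or be contained in them. This is excluded by Claim~5 and its Corollary, together with Claims~2 and~3. Hence every such neighbour still reaches $r(u_{k-1})$, which is now a closed left end point of the new interval, and the adjacency survives. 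The vertex of $Z_{k-1}$ (resp.\ $x$ for $k=1$) whose right end is $r(u_{k-1})$ is of the opposite partite set. Since its interval is closed at that point, adjacency with it is preserved.

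The second step is to check that no new edge is created. Shrinking an interval can only destroy intersections, so the only risk is the half-open interval $I'(u_k)=[r(u_{k-1}),r(u_{k+1}))$. It loses the point $r(u_{k+1})$. I must show that $u_k$ keeps every neighbour whose interval meets $I(u_k)$ only at or near $r(u_k)$. By Claim~4(a), $r(u_k)<r(u_{k+1})$, and the new interval of $u_k$ extends all the way to $r(u_{k+1})$. So I must instead rule out a \emph{new} adjacency between $u_k$ and a vertex $w$ with $r(u_k)<l(w)<r(u_{k+1})$. Such a $w$ is adjacent to $u_{k+1}$ but not to $u_k$, so it distinguishes $u_k$ and $u_{k+1}$ from the right. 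These vertices are exactly the members of $Z_{k+1}$. Here I expect to use the distinct-end-points normalisation of $I$ and the fact that, after the modification, the left ends of the $Z_{k+1}$ intervals are placed at $r(u_k)$. This is the step that needs care: the intervals of $Z_{k+1}$ are modified simultaneously, so I will argue inductively on $k$. I will use the new left end of $Z_{k+1}$, namely $r(u_k)$ in the old representation, together with the fact that $I'(u_k)$ is open at the right. The aim is to show that the new interval of $u_k$ and the new intervals of $Z_{k+1}$ still meet exactly as before.

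Finally, the terminal set $Z_m$ with $m=k_r(x',x)$ has $|Z_m|=1$ by Claim~4(d), and its interval is only shortened to $[r(u_{m-1}),r(u_m)]$. The same argument as in the first step handles it. I expect the main obstacle to be the bookkeeping of the second step. The simultaneous shift of consecutive levels $Z_k$ and $Z_{k+1}$ means that the old and new positions of end points have to be compared carefully. The key points are Claim~4(c) and the Corollary to Claim~5, which ensure that no outside interval has an end point falling in the gaps that are being removed.
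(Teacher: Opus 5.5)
Your overall split (edges lost through the left shift to $r(u_{k-1})$, edges gained through the rightward extension of $u_k$) matches the paper's case analysis. But Step~2, where the claim is actually decided, has a genuine gap: it is only announced, and the one concrete fact you plan to use is false. In $I'$ the level following $\{u_k,u_{k+1}\}$ starts at $r(u_{k+1})$, not at $r(u_k)$. Already $I'(y_2)=[r(x),r(y_2)]$ and $I'(y_1)=[r(x),r(y_2))$, and the next pair gets left end $r(y_2)$. In general $I'(u_{k+2})$ and $I'(u_{k+3})$ have left end $r(u_{k+1})$, and the terminal $u_m$ gets left end $r(u_{m-1})$.

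Were it $r(u_k)$, your argument would collapse. Since $r(u_{k-1})<r(u_k)<r(u_{k+1})$ (no bad pair $(u_k,u_{k-1})$ by Claim~5, resp.\ Claim~2 for $k=1$), the point $r(u_k)$ lies in $I'(u_k)=[r(u_{k-1}),r(u_{k+1}))$. So $u_k$ would acquire exactly the new neighbours you must exclude, and the openness of $I'(u_k)$ at $r(u_{k+1})$ would play no role.

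The missing observation is that $r(u_{k+1})$ is at once the open right end of $I'(u_k)$, the closed right end of $I'(u_{k+1})$, and the closed left end of every next-level interval. Hence in $I'$ the vertex $u_k$ meets no next-level vertex and $u_{k+1}$ meets all of them, exactly as in $I$ (Claim~4(b) and the definition of the level). What remains is to show that no \emph{unmodified} $w$ has $r(u_k)<l(w)<r(u_{k+1})$. The paper gets this directly, with no induction: such a $w$ would be a third member of the next level or would form a bad pair with one of its members, contradicting Claim~4(a),(d) and Claim~5.

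Relatedly, your opening sentence that every modified vertex keeps its right end is wrong for $u_k$. Your remark that shrinking only destroys intersections covers only $u_{k+1}$ against unmodified vertices; modified pairs on consecutive levels are handled only by the exact endpoint matching above.

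Step~1 is repairable with the claims you cite, but the reasons you give are not the right ones. Take a neighbour $w$ of $u_{k+1}$ with $r(w)<r(u_{k-1})$. Meeting $I(u_{k-1})$ is irrelevant, since $w$ and $u_{k-1}$ lie in the same partite set. If moreover $l(u_{k-1})<l(w)<l(u_k)$, then $w$ meets both $u_k$ and $u_{k+1}$, so Claim~4(c) does not apply, and it is contained in neither, so there is no bad pair with $u_k$ or $u_{k+1}$.

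The correct split, which is the paper's third case, is by the position of $l(w)$:
\begin{itemize}
\item if $l(w)>l(u_{k-1})$, then $I(w)\subset I(u_{k-1})$, excluded by the Corollary to Claim~5;
\item if $l(u_{k-2})<l(w)<l(u_{k-1})$, then $w$ is a third member of the level $\{u_{k-2},u_{k-1}\}$, excluded by Claim~4(a);
\item if $l(w)<l(u_{k-2})$, then $I(u_{k-2})\subset I(w)$, excluded by Claim~5;
\item for $k=1$, Claims~2 and~3 are applied against the bad pair $(x',x)$.
\end{itemize}
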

\begin{proof}\renewcommand{\qedsymbol}{}
We shall show that all the intersections and non intersections of $I$ 
are also preserved in $I'$. Assume for some $x \in V(B)$, $I(x)\cap I(u_{k+1})=\emptyset$ then 
trivially $I'(x)\cap I'(u_{k+1})=\emptyset$ and $I'(x)\cap I'(u_k) 
=\emptyset$. Next, suppose $I(x)\cap I(u_{k+1})\neq \emptyset$ but 
$I(x)\cap I(u_k)=\emptyset$, then it appears that it may possible that $I'(x)\cap I'(u_k) 
\neq \emptyset$. Obviously $l(x)<l(u_{k+3})$. Now, if $r(x)<r(u_{k+3})$. 
Then, $Z_{k+1}=\{u_{k+2}, x, u_{k+3}\}$ i.e. $|Z_{k+1}|=3$, which is 
contradiction. In the other case if, $r(x)>r(u_{k+3})$ then $(u_{k+3},x)$ 
is a bad pair ,which is also a contradiction. Next, let 
$I(x)\cap I(u_{k+1})\neq \emptyset$ but $I'(x)\cap I'(u_{k+1})=
\emptyset$ (then also $I'(x)\cap I'(u_k)=\emptyset$). Now if, $l(u_{k-1})<
l(x)$ then $(x,u_{k-1})$ is a bad pair. Again, if $l(u_{k-2})<l(x)
<l(u_{k-1})$ then $|Z_{k-1}|=3$, $(Z_{k-1}=\{u_{k-2}, x, u_{k-1}\})$. Thus we have a contradiction in both cases. Finally, if $l(x)<l(u_{k-2})$ then $(u_{k-2},x)$ is a 
bad pair, which is also a contradiction by Claim~5. This completes the 
proof of the claim. 
\end{proof}
Next, we modify the intervals of $Z'_k, k\in [k_l(x',x)-1]$ for the 
bad pair $(x',x)$ as follows. $I'':V\to I^{++}\cup I^{-+}$, where 
$I''(y'_2)=[l(y'_2),l(x)]$, $I''(y'_1)=(l(y'_2),l(x)]$ and in general, 
for $Z'_k=\{u'_k, u'_{k+1}\}$, $I''(u'_{k+1})=[l(u'_{k+1}),l(u'_{k-1})]$, 
$I''(u'_k)=(l(u'_{k+1}),l(u'_{k-1})]$. Also, for $n=k_l(x',x)$, 
$I''(u_n)=[l(u_n),l(u_{n-1}]$ and $I''(u)=I(u)$ for other $u\in V(B)$.\par
Now, in the following Claim we show that $I''$ is an interval representation of $B$. \par 
\begin{claim}
$I''$ is an interval representation of $B$.
\end{claim}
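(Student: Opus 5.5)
The plan is to mirror the proof of Claim~6 by reflection. The modification $I''$ only changes the intervals of the vertices in the left blocks $Z'_k$, $k\in[k_l(x',x)]$. Each such change shrinks the interval to the right, by moving its right endpoint from $r(u'_{k})$ or $r(u'_{k+1})$ leftwards to $l(u'_{k-1})$ (with $u'_{-1},u'_0$ read as giving $l(x)$). It also possibly makes the left end of $u'_k$ open at $l(u'_{k+1})$. So the intersection pattern can change in only two ways: an intersection that existed in $I$ may be destroyed, or an intersection with $u'_k$ may be created at the point $l(u'_{k+1})$. Non-intersections with $u'_{k+1}$ are automatically preserved, since $I''(u'_{k+1})\subseteq I(u'_{k+1})$. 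Likewise $I''(u'_k)\subseteq I(u'_k)\cup I(u'_{k+1})$, and the only new points lie in $I(u'_{k+1})$. Every vertex that is not in some $Z'_k$ keeps its interval.

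\textbf{Step 1 (no new intersections).} Take a vertex $w$ of the opposite partite set with $I(w)\cap I(u'_{k+1})\neq\emptyset$ but $I(w)\cap I(u'_k)=\emptyset$. Then $w$ lies to the left of $l(u'_k)$, and we must show $I''(w)\cap I''(u'_k)=\emptyset$. Since $I''(u'_k)$ is open at $l(u'_{k+1})$, the only danger is that $r(w)>l(u'_{k+1})$ while $w$ still misses $u'_k$. This means $w$ distinguishes $u'_k$ and $u'_{k+1}$ from the right, or else it sits among the next block. I compare $w$ with $Z'_{k+1}=\{u'_{k+2},u'_{k+3}\}$. If $r(w)>r(u'_{k+3})$ with $l(w)>l(u'_{k+3})$, then $w$ joins $Z'_{k+1}$, giving $|Z'_{k+1}|=3$ and contradicting the symmetric form of Claim~4(a). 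If $l(w)<l(u'_{k+3})$, then $(u'_{k+3},w)$ or $(w,\cdot)$ is a bad pair involving a block vertex, which the left-hand analogue of Claim~5 and its Corollary exclude. The case of a distinguishing vertex is excluded by the symmetric form of Claim~4(c).

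\textbf{Step 2 (no lost intersections).} Suppose $I(w)$ meets $I(u'_{k+1})$ (or $I(u'_k)$) but misses its shortened version. Then $l(w)>l(u'_{k-1})$. We split according to the position of $l(w)$. If $l(w)>l(u'_{k-1})$ and $r(w)<r(u'_{k-1})$, then $(w,u'_{k-1})$ is a bad pair, contradicting the left Claim~5. If $l(u'_{k-1})<l(w)<l(u'_{k-2})$ with the appropriate right end, then $w\in Z'_{k-1}$, forcing $|Z'_{k-1}|=3$. If $r(w)>r(u'_{k-2})$, we get the bad pair $(u'_{k-2},w)$, which is again excluded by the Corollary. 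For $k=1$ the block $Z'_{k-1}$ is replaced by $x$ itself, and the contradiction comes from Claim~2 or Claim~3 as in the proof of Claim~5. For the terminal block $n=k_l(x',x)$ only the closed interval $[l(u_n),l(u_{n-1})]$ is used, and the same argument applies with Claim~4(d).

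The main obstacle is Step~1 at the open endpoint. Here we must be sure that no vertex of the other partite set has right endpoint exactly in the range $(l(u'_{k+1}),l(u'_k))$. We may assume throughout that $I$ has distinct endpoints, as noted before Claim~1. The argument then reduces to the non-distinguishability from the right stated after Claim~4. I would check this carefully and invoke $F_2$ or $B_1$-freeness if a boundary case escapes.
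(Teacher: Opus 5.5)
Your skeleton matches the paper's. You mirror Claim~6, compare an offending vertex with the next block to exclude new adjacencies and with the previous block to exclude lost ones, and conclude via a third block member or a forbidden bad pair. Step~2 is the paper's case split almost verbatim, except that Claim~5 and its Corollary are interchanged: $I(w)\subset I(u'_{k-1})$ is ruled out by the Corollary, and $I(u'_{k-2})\subset I(w)$ by Claim~5.

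\textbf{The gap is in Step~1, which rests on a false premise.} A vertex $w$ of the other side with $I(w)\cap I(u'_{k+1})\neq\emptyset=I(w)\cap I(u'_k)$ has $r(w)\in(l(u'_{k+1}),l(u'_k))$. Such vertices are, by definition, the members of the next block $Z'_{k+1}$. Only non-terminal blocks have two members, so $Z'_{k+1}\neq\emptyset$ whenever $I''(u'_k)$ is extended. These $w$ therefore cannot be driven to a contradiction. The mirror of Claim~4(c) does not exclude them either: it forbids distinguishing vertices on the side facing $x$, whereas the vertices separating $u'_{k+1}$ from $u'_k$ on the outer side are exactly $Z'_{k+1}$. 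For the same reason, the property you flag as the main obstacle is false, namely that no vertex of the other side has its right end in $(l(u'_{k+1}),l(u'_k))$. No appeal to $F_2$ or $B_1$ can establish it. That range lies in $I''(u'_k)$, so these $w$ would acquire the new neighbour $u'_k$ if their intervals were left unchanged.

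\textbf{The missing idea is that these intervals are changed.} $I''$ moves the right end of every member of $Z'_{k+1}$ to $l(u'_{k+1})$ and keeps it closed. This applies to both $u'_{k+2},u'_{k+3}$, and to the terminal singleton $u'_n$ with $I''(u'_n)=[l(u'_n),l(u'_{n-1})]$. Meanwhile $I''(u'_k)$ is open exactly at $l(u'_{k+1})$; that is why half-open intervals are used at all. Your comparison with $Z'_{k+1}$ should therefore end not in a contradiction but in $w\in Z'_{k+1}$, using $|Z'_{k+1}|\le 2$ from Claim~4, followed by this endpoint check.

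More generally, you treat the other vertex as unmodified throughout, e.g.\ ``non-intersections with $u'_{k+1}$ are automatically preserved''. So you never examine pairs from consecutive blocks in which both intervals move. Two examples:
\begin{itemize}
\item $I''(u'_{k-2})=(l(u'_{k-1}),l(u'_{k-3})]$ is extended leftwards, and must still miss $I''(u'_k)$ and $I''(u'_{k+1})$, which end closed at $l(u'_{k-1})$.
\item $I''(u'_{k+2})$ and $I''(u'_{k+3})$ must still meet $I''(u'_{k+1})$ at $l(u'_{k+1})$.
\end{itemize}
These adjacencies and non-adjacencies are decided solely by the closed/open choice at the shared endpoints, and checking them is the core of the claim.
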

\begin{proof}
The proof is similar to the proof of Claim~6. If, there exists $x\in V(B)$ 
such that $I(x)\cap I(u'_{k+1})=\emptyset$, then trivially $I''(x) \cap 
I''(u'_{k+1})=\emptyset$. Now, we shall show that if, $I(x)\cap I(u'_k)=\emptyset$ then also $I''(x)\cap I''(u'_k)=\emptyset$. Suppose $I(x)\cap 
I(u'_{k+1})\neq \emptyset$. So $l(u'_{k+1})<r(x)$. Next, assume 
$r(u'_{k+3})<r(x)<r(u'_{k+2})$. Now, if $l(x)<l(u'_{k+3})$ then $Z'_{k+1}=\{u_{k+2}, x, u_{k+3}\}$ i.e. $|Z'_{k+1}|=3$, which is a 
contradiction. 
 Again, if $l(x)>l(u'_{k+3})$ then $(x,u_{k+3})$ is a bad pair, which is a contradiction.
 In the remaining cases we can similarly derive a 
contradiction. Thus $I''(x)\cap I''(u'_k)=\emptyset$.\par
Next, let $I(x)\cap I(u'_{k+1})\neq \emptyset$ such that 
$l(u'_{k-1})<l(x)$. Assume $I''(x)\cap I''(u'_{k+1})=\emptyset$. 
Now, if $r(x)<r(u'_{k-1})$ then $(x,u'_{k-1})$, is a bad pair, which is 
a contradiction. Again, if $r(u'_{k-1})<r(x)<r(u'_{k-2})$, then 
$Z'_{k-1}=\{u'_{k-2}, x, u'_{k-1}\}$ i.e. $|Z'_{k-1}|=3$, which is also a 
contradiction. Finally, if $r(x)>r(u'_{k-2})$ then $(u_{k-2},x)$ is a 
bad pair, which is also a contradiction. Hence $I''(x)\cap I''(u'_{k+1})\neq \emptyset $. Thus $I''$ is an interval 
representation of $B$ and this completes the proof of Claim~7.\par
It can be easily observed from the construction of $I''$  that no new bad pair is formed. Since the bad pair $(x',x)$ is 
clean we replace $I(x')$ by the open copy of $I(x)$. Thus we get a 
mixed-proper interval representation of $B$. This completes the proof of the Theorem 17.
\end{proof}
\section{Conclusion}
The problem of finding a recognition algorithm for interval graphs (or bigraphs) of open and closed intervals is still open. However, in a recent paper \cite{tk} Talon and Kratochv\'il have given a quadratic-time algorithm to recognize the class of mixed unit interval graphs. We do hope that our work will be a motivating factor to find an efficient algorithm to recognize mixed unit interval bigraphs.

\end{document}